\newtheorem{theorem}{Theorem}
\theoremstyle{plain}
\newtheorem{corollary}{Corollary}
\newtheorem{lemma}{Lemma}
\newtheorem{proposition}{Proposition}
\newtheorem{remark}{Remark}
\numberwithin{equation}{section}
\begin{document}
\title[Inference without smoothing for panels]{Inference without smoothing
for large panels with cross-sectional and temporal dependence}
\author{Javier Hidalgo}
\address{Economics Department\\
London School of Economics\\
Houghton Street \\
London WC2A 2AE\\
UK}
\email{f.j.hidalgo@lse.ac.uk}
\author{Marcia Schafgans}
\address{Economics Department\\
London School of Economics\\
Houghton Street \\
London WC2A 2AE\\
UK}
\email{m.schafgans@lse.ac.uk}

\begin{abstract}
This paper addresses inference in large panel data models in the presence of
both cross-sectional and temporal dependence of unknown form. We are
interested in making inferences that do not rely on the choice of any
smoothing parameter as is the case with the often employed \textquotedblleft 
$HAC$\textquotedblright\ estimator for the covariance matrix. To that end,
we propose a cluster estimator for the asymptotic covariance of the
estimators and valid bootstrap schemes that do not require the selection of
a bandwidth or smoothing parameter and accommodate the nonparametric nature
of both temporal and cross-sectional dependence. Our approach is based on
the observation that the spectral representation of the fixed effect panel
data model is such that the errors become approximately temporally
uncorrelated. Our proposed bootstrap schemes can be viewed as wild
bootstraps in the frequency domain. We present some Monte-Carlo simulations
to shed some light on the small sample performance of our inferential
procedure.

\noindent \emph{JEL classification: }C12, C13, C23\newline

\noindent \emph{Keywords: }Large panel data models. Cross-sectional
strong-dependence. Central Limit Theorems. Clustering. Discrete Fourier
Transformation.\emph{\ }Nonparametric bootstrap algorithms.
\end{abstract}

\thanks{We would like to thank Professor Serena Ng, Silvia Gon\c{c}alves,
two anonymous referees, and participants at the Bristol ESG Meetings for
their helpful comments.}
\maketitle

\section{\textbf{INTRODUCTION}}

Nowadays we often encounter panel data sets where both the number of
individuals, $n$, and the time dimension, $T$, are large or increase without
limit. Phillips and Moon $\left( 1999\right) $\ and Pesaran and Yamagata $%
\left( 2008\right) $ provide some theoretical results for the parameter
estimators in large panel data models, that is where both $n$\ and $T$\ tend
to infinity. These works were done under the assumption of no dependence
among the cross-sectional units. Yet, it is well recognized that the latter
assumption is not very realistic, and there has been a surge of work on how
to provide valid inferences when this type of dependence is present. The
issues are closely related to Zellner's $\left( 1962\right) $ $SURE$
(Seemingly Unrelated Regression) model, be it that here both dimensions are
allowed to increase without limit.

Once one accepts the possibility that the errors of the model may exhibit
cross-sectional and/or temporal dependence, a key component to make valid
inferences is the consistent estimation of the asymptotic covariance matrix
of the estimators. For that purpose, we might proceed by explicitly assuming
some specific dependence structure on the error term. In our context, this
route appears to be quite cumbersome mainly for two reasons. First, it is
quite difficult to specify an appropriate model in the presence of
cross-sectional dependence as there are ample generic models capable to
justify such a dependence. Some examples are the Spatial Autoregressive ($%
SAR $) model of Cliff and Ord $\left( 1973\right) $, which has its origins
in Whittle $\left( 1954\right) $, Andrews' $\left( 2005\right) $\ proposal
to capture common shocks (e.g., macroeconomic, technological,
legal/institutional) across observations and Pesaran's $\left( 2006\right) $
factor model. Second, in many settings it may be quite unrealistic to assume
that the temporal dependence is the same for all individuals, so finding a
correct specification may be infeasible as $n$ increases with no limit.
Inferential properties based on parameter estimates that use a specific
(wrong)\ structure, moreover, may be worse than the least squares estimates\
($LSE$).\ The latter observation was first documented in Engle $\left(
1974\right) $\ and latter examined in Nicholls and Pagan $\left( 1977\right) 
$, who illustrated the adverse consequences of imposing incorrect temporal
dependence assumptions on inferences, say when the practitioner assumes an $%
AR\left( 1\right) $\ model instead of the true underlying $AR\left( 2\right) 
$\ specification.

As the task of finding an appropriate model for the dependence can be very
daunting, one of our main aims in this paper is then to provide inferences
in panel data not only when the error term (potentially) exhibits both
temporal and cross-sectional dependence, but more importantly doing so
without relying on any parametric functional form for such a dependence.
Under these circumstances, one standard methodology is based on the HAC
estimator, whose implementation requires the choice of one (or more)
bandwidth parameter(s).\footnote{%
In a time series regression model context several proposals, both in the
time and frequency domain, have been employed and bootstrap applications
commonly approximate the long term covariance by using a long AR polynomial
(sieve method). Other methods include the use of orthogonal polynomials,
see, e.g., Sun (2013) and Phillips (2005), instead of the use of Fourier
sequences. All of them have in common that they require the choice of a
bandwidth parameter and/or base function. Lazarus et al. (2018) provide an
interesting simulation study.} While this approach is often invoked and used
in the context of time series regression models, application of spatial HAC
estimators is less common. The use of HAC\ estimators in spatial
econometrics was advocated by Conley $\left( 1999\right) $\ and Kelejian and
Prucha $\left( 2007\right) $ studied its use in Cliff-Ord type spatial
models. Recently, a HAC\ estimator accounting for both temporal and spatial
correlation been considered by Kim and Sun $\left( 2013\right) $. The
implementation requires not only the selection of a bandwidth parameter but,
more importantly, an associated measure of distance between the
cross-sectional units. This explicitly assumes that there is some type of
ordering among the individuals or cross-sectional units which, in contrast
with\ the time dimension, is not unambiguous. Even if one accepts the
existence of such an ordering, it is likely that various economics and/or
geographical distance measures, each requiring their own bandwidth, may be
required to encapsulate the order. For instance, simply relying on the
geographic \textquotedblleft as the crow flies\textquotedblright\ distance
measure for ordering is questionable as one cannot expect that two
cross-sectional units located in the Rockies would behave the same as if
they were in the Midwest. Clearly, a distance measure which captures the
topography and other economic measures may be required.\textbf{\ }In
addition, if we recognize that the temporal dependence may not be the same
for all individuals, even the selection of a bandwidth parameter to account
for the temporal dependence may become infeasible. Any cross-validation
algorithm used to determine the bandwidth parameter for temporal dependence
may then need to be performed for each individual.

To deal with the potential caveats of HAC estimators, we shall propose a
cluster based estimator which is able to take into account both types of
dependence and permits the temporal dependence to vary across individuals,
see Condition $C1$ and its discussion, extending the work of Arellano $%
\left( 1987\right) $\ and Driscoll and Kraay $\left( 1998\right) $ in a
substantial way. While Driscoll and Kraay (1998) employed a cluster type of
estimator to account for the cross-sectional dependence, they relied on the
HAC methodology to accommodate the temporal dependence subjecting it to the
drawback mentioned before.\textbf{\ }We avoid the use of the HAC\
methodology altogether\textbf{. }In addition, we provide a new CLT that
accounts for an unknown and general temporal spatial dependence structure
that permits strong spatial dependence. Our approach allows for more general
dependence structure than permitted by Kim and Sun $\left( 2013\right) $ and
Driscoll and Kraay $\left( 1998\right) $. Our new results can therefore be
regarded as providing primitive conditions that guarantee Kim and Sun's and
Driscoll and Kraay's assumption of the existence of a suitable CLT.

Our approach is based on the observation that the spectral representation of
the fixed effect panel data model $\left( \text{\ref{model_1}}\right) $ is
such that the errors become approximately temporally uncorrelated whilst
heteroskedastic.\ It is this observation that enables us to conduct
inference without any smoothing. To provide finite sample improvements for
inference based on our cluster estimator, we present and examine bootstrap
schemes which also do not require the choice of any bandwidth parameter%
\textbf{,} contrary to the sieve or moving block bootstrap (henceforth
denoted MBB)\textbf{. }Two bootstrap algorithms are presented, one where we
assume homogeneous temporal dependence, which we shall denote as the na\"{\i}%
ve bootstrap, and a second one, denoted the wild bootstrap, where we allow
for heterogeneous temporal dependence. Our bootstrap schemes can be viewed
as wild bootstraps in the frequency domain which are shown to have good
finite sample properties.

We compare our proposal to other methods that also do not require any
ordering of the cross-sectional units. In particular, we consider Driscoll
and Kraay's HAC\ estimator and the fixed-b asymptotic framework advocated by
Vogelsang $\left( 2012\right) .$ We also consider the MBB bootstrap applied
to the vector containing all the individual observations at each point in
time as proposed by Gon\c{c}alves $\left( 2011\right) $.

While our estimator does permit more general spatio temporal dependence and
does not require any smoothing parameters, in line with Robinson $\left(
1989\right) $, the approach examined in Section 2 precludes the presence of
conditional heteroskedasticity.\ In Section 4, we examine how we can relax
this by introducing a multiplicative error structure, $v_{pt}=\sigma
_{1}(w_{p})\sigma _{2}\left( \varrho _{t}\right) u_{pt},$\ where $w_{p}$\
and $\varrho _{t}$\ can be functions of the fixed effects and/or variables
which are correlated with the included regressors. It is worth noting that
we do not need to observe these variables, as is the case when $w_{p},$ say,
is the fixed effect. That is, we can allow for \textquotedblleft
groupwise\textquotedblright\ heteroskedasticity and applications in
development economics are commonplace, see Deaton (1996) and Greene (2018).
Of particular interest, here, is the realisation that our cluster based
inference is robust to the presence of heteroskedasticity that is only
cross-sectional in nature (i.e., where $\sigma _{2}\left( \varrho
_{t}\right) $\ is constant). In the presence of a non-constant $\sigma
_{2}\left( \varrho _{t}\right) $, we propose a simple way to robustify our
cluster based inference. Whereas more general forms of heteroskedasticity,
where $v_{pt}=\sigma (x_{pt})u_{pt}$, can be permitted, its implementation
would require the use of nonparametric methods which would require the
selection of a bandwidth parameter to estimate the heteroskedasticity
function.\textbf{\ }We shall indicate how we should proceed if this were the
case.\ Finally, a benefit of our estimator is that it permits the temporal
dependence to vary across individuals, which is more realistic. It is
important to point out that MBB would not be valid in these settings as it
depends on some type of temporal homogeneity or even stationarity.

The remainder of the paper is organized as follows. In the next section we
discuss the regularity conditions for our model and describe the main
results. In Section 3 we introduce our bandwidth parameter free bootstrap
schemes and we demonstrate their validity. Section 4 discusses a
generalisation of our model that permits (conditional) heteroskedasticity.
Section 5 presents a Monte Carlo simulation experiment to shed some light on
the finite sample performance of our cluster estimator and its comparison to
others and we illustrate the finite sample benefits of our bootstrap
schemes. In Section 6 we summarize. The proofs of our main results are given
in Appendix A, which employs a series of lemmas given in Appendix B.

\section{\textbf{THE REGULARITY\ CONDITIONS AND MAIN RESULTS}}

We shall begin by considering the panel data model%
\begin{equation}
y_{pt}=\beta ^{\prime }x_{pt}+\eta _{p}+\alpha _{t}+u_{pt}\text{, \ \ }%
p=1,...,n,\text{ \ \ \ }t=1,...,T\text{,}  \label{model_1}
\end{equation}%
where $\beta $ is a $k\times 1$ vector of unknown parameters, $x_{pt}$ is a $%
k\times 1$ vector of covariates, $\alpha _{t}$ and $\eta _{p}$ represent
respectively the time and individual fixed effects and $\left\{
u_{pt}\right\} _{t\in \mathbb{Z}}$, $p\in \mathbb{N}^{+}$, are sequences of
zero mean errors with heterogeneous variance $E\left( u_{pt}^{2}\right)
=\sigma _{p}^{2}$, $p\in N^{+}$. We allow for general (unknown) temporal and
cross-sectional dependence structures of the sequence $\left\{
u_{pt}\right\} _{t\in \mathbb{Z}}$, $p\in N^{+}$, detailed in Condition $C1$
and $\left\{ x_{pt}\right\} _{t\in \mathbb{Z}}$, $p\in N^{+},$ detailed in
Condition $C2.$ Further details are provided in our discussion of these
conditions below. For simplicity, we shall assume that the sequences $%
\left\{ x_{pt}\right\} _{t\in \mathbb{Z}}$, $p\in N^{+}$, are mutually
independent of the error term $\left\{ u_{pt}\right\} _{t\in \mathbb{Z}}$, $%
p\in N^{+}$, whilst allowing for dependence of the covariates with the fixed
effects $\eta _{p}$\ and/or $\alpha _{t}$.\footnote{\label{FN1} In fact, all
that is needed is that the first conditonal moment of the error is zero and
the second conditonal moment is equal to the unconditonal one.} In Section
4, we shall relax this condition allowing for heteroskedasticity.\textbf{\ }%
A straightforward extension that allows for lagged endogenous variables $%
\left\{ y_{p,t-\ell }\right\} _{\ell =1}^{k_{1}},$\ as in Hidalgo and
Schafgans $\left( 2017\right) $, requires the use of the instrumental
variable estimator, where $\left\{ x_{p,t-\ell }\right\} _{\ell =1}^{k_{1}}$%
\ provide natural instruments for $\left\{ y_{p,t-\ell }\right\} _{\ell
=1}^{k_{1}}$. We have avoided this generalization as it would detract from
the main contribution of the paper and it will only add some extra
technicalities and/or considerations which are well known and understood
when $n=1$.

Our first aim in the paper is to perform inference on the slope parameters $%
\beta $ in the presence of a very general and unknown spatio-temporal
dependence structure. To that end, we first need to extend a Central Limit
Theorem\textbf{\ }provided in Phillips and Moon $\left( 1999\right) $, see
also Hahn and Kuersteiner $\left( 2002\right) $. The reason for this is that
in their work the sequences of random variables, say $\left\{ \psi
_{pt}\right\} _{t\in \mathbb{Z}}$, $p\in \mathbb{N}^{+}$, are assumed to be
independent, that is $\left\{ \psi _{pt}\right\} _{t\in \mathbb{Z}}$ and $%
\left\{ \psi _{qt}\right\} _{t\in \mathbb{Z}}$ are mutually independent for
any $p\neq q$, which is ruled out in our context as we permit
cross-sectional dependence. Moreover, as we shall allow for
\textquotedblleft \emph{strong-dependence}\textquotedblright\ in our error
and regressor sequences,\ we cannot use results and arguments based on any
type of \textquotedblleft \emph{strong}-\emph{mixing}\textquotedblright\
conditions, so that results in Jenish and Prucha $\left( 2009,2012\right) $
cannot be invoked in our framework either. Our theorem also extends the
results provided in Hidalgo and Schafgans $\left( 2017\right) $ by allowing
the errors $u_{pt}$ to exhibit temporal dependence. A second aim of the
paper is to extend the work of Driscoll and Kraay $\left( 1998\right) $ by
examining, in the presence of individual and temporal fixed effects, a
cluster estimator of the asymptotic covariance of the estimator of the slope
parameters that does not require the ordering of the observations (in the
cross-sectional dimension) or the selection of a bandwidth parameter.

The fixed effect model and the estimator for the slope parameters we
consider is well known. Denoting for any generic sequence $\left\{ \varsigma
_{pt}\right\} _{t=1}^{T}$,\textbf{\ }$p=1,...,n$\textbf{,} the transformation%
\begin{eqnarray}
\widetilde{\varsigma }_{pt} &=&\varsigma _{pt}-\overline{\varsigma }_{\cdot
t}-\overline{\varsigma }_{p\cdot }+\overline{\overline{\varsigma }}_{\cdot
\cdot }\text{;}  \label{not} \\
\overline{\varsigma }_{\cdot t} &=&\frac{1}{n}\sum_{p=1}^{n}\varsigma _{pt};%
\text{ }\overline{\varsigma }_{p\cdot }=\frac{1}{T}\sum_{t=1}^{T}\varsigma
_{pt};\text{ and }\overline{\overline{\varsigma }}_{\cdot \cdot }=\frac{1}{nT%
}\sum_{t=1}^{T}\sum_{p=1}^{n}\varsigma _{pt}\text{,}  \notag
\end{eqnarray}%
the estimator of $\beta $ is obtained by performing least squares on the
transformed model (where the individual and time effects are removed)%
\begin{equation}
\widetilde{y}_{pt}=\beta ^{\prime }\widetilde{x}_{pt}+\widetilde{u}_{pt}%
\text{, \ }p=1,...,n\text{ \ and }t=1,...,T,  \label{model_1R}
\end{equation}%
so that $\widehat{\beta }$ is defined as%
\begin{equation}
\widehat{\beta }=\left( \sum_{p=1}^{n}\sum_{t=1}^{T}\widetilde{x}_{pt}%
\widetilde{x}_{pt}^{\prime }\right) ^{-1}\left( \sum_{p=1}^{n}\sum_{t=1}^{T}%
\widetilde{x}_{pt}\widetilde{y}_{pt}\right) \text{.}  \label{beta_fe}
\end{equation}%
It is obvious that we can take $Ex_{pt}=0$\ as $\widetilde{x}_{pt}$\ is
invariant to additive constants, say $\mu _{t}$\ or $\nu _{p}$, to $x_{pt}$.

In this paper, we shall focus on an equivalent frequency domain formulation
of $\left( \text{\ref{model_1}}\right) $ and $\left( \text{\ref{model_1R}}%
\right) $. It is the application of the Discrete Fourier Transform (DFT) to
our model, as will become clear shortly, that plays an important role in
describing and motivating the cluster estimator of the asymptotic covariance
matrix of $\widehat{\beta },$ or equivalently $\tilde{\beta}$\ given in $%
\left( \text{\ref{beta_fef}}\right) $\ below, and the bootstrap schemes
described in Section 3.

For this purpose, we denote the DFT for generic sequences $\left\{ \varsigma
_{pt}\right\} _{t=1}^{T}$, $p\geq 1$, by%
\begin{equation}
\mathcal{J}_{\varsigma ,p}\left( \lambda _{j}\right) =\frac{1}{T^{1/2}}%
\sum_{t=1}^{T}\varsigma _{pt}e^{-it\lambda _{j}}\text{, \ \ }j=1,...,%
\widetilde{T}=\left[ T/2\right] \text{, \ }\lambda _{j}=\frac{2\pi j}{T}
\label{dft}
\end{equation}%
and $\mathcal{J}_{\varsigma ,p}\left( \lambda _{j}\right) =\mathcal{J}%
_{\varsigma ,p}\left( -\lambda _{T-j}\right) $, $j=\widetilde{T}+1,..,T.$ We
can then rewrite $\left( \text{\ref{model_1R}}\right) $ as%
\begin{equation}
\mathcal{J}_{\widetilde{y},p}\left( \lambda _{j}\right) =\beta ^{\prime }%
\mathcal{J}_{\widetilde{x},p}\left( \lambda _{j}\right) +\mathcal{J}_{%
\widetilde{u},p}\left( \lambda _{j}\right) \text{, \ }p=1,...,n\text{; \ }%
j=1,...,T-1.  \label{model_4freq}
\end{equation}%
Given that our sequences $\left\{ \widetilde{\varsigma }_{pt}\right\}
_{t=1}^{T},$\ $p\geq 1,$ are centered around their sample means, we can
leave out the frequency $\lambda _{j}$ for $j=T$ (and 0) as $J_{\widetilde{%
\varsigma },p}\left( 0\right) =\frac{1}{T^{1/2}}\sum_{t=1}^{T}\widetilde{%
\varsigma }_{pt}=0$. The interesting property of $\mathcal{J}_{\widetilde{u}%
,p}\left( \lambda _{j}\right) ,$ $j=1,...,T-1,$ that allows us to formulate
our new cluster estimator that accounts for both types of dependence, is
that it is serially uncorrelated over the Fourier frequencies for large $T,$%
\ whilst possibly heteroskedastic. Based on the frequency domain formulation
of our model $\left( \text{\ref{model_4freq}}\right) $, we can also compute
our estimator of $\beta $ as%
\begin{equation}
\widetilde{\beta }=\left( \sum_{p=1}^{n}\sum_{j=1}^{T-1}\mathcal{J}_{%
\widetilde{x},p}\left( \lambda _{j}\right) \mathcal{J}_{\widetilde{x}%
,p}^{\prime }\left( -\lambda _{j}\right) \right) ^{-1}\left(
\sum_{p=1}^{n}\sum_{j=1}^{T-1}\mathcal{J}_{\widetilde{x},p}\left( \lambda
_{j}\right) \mathcal{J}_{\widetilde{y},p}\left( -\lambda _{j}\right) \right)
.  \label{beta_fef}
\end{equation}

We introduce our regularity conditions next. To that end, and in what
follows, we denote for any generic sequence $\left\{ v_{pt}\right\} _{t\in 
\mathbb{Z}}$, $p\in N,$ 
\begin{equation*}
\varphi _{v}\left( p,q\right) =\text{\textsl{Cov}}\left(
v_{pt};v_{qt}\right) \text{, for any }p,q\geq 1\text{.}
\end{equation*}

\begin{description}
\item[Condition $\mathbf{C1}$] $\left\{ u_{pt}\right\} _{t\in \mathbb{Z}}$%
\emph{, }$p\in \mathbb{N}^{+}$\emph{, are zero mean sequences of random
variables such that}\newline
\noindent $\left( \mathbf{i}\right) \qquad \qquad \qquad 
\begin{array}[t]{c}
u_{pt}=\dsum\limits_{k=0}^{\infty }d_{k}\left( p\right) \xi _{p,t-k}\text{,
\ \ \ \ }\dsum\limits_{k=0}^{\infty }kd_{k}<\infty \text{,~\ \ \ }%
d_{k}=:\sup_{p}\left\vert d_{k}\left( p\right) \right\vert \text{,}%
\end{array}%
$\newline
\emph{where\ }$E\left( \xi _{pt}\mid \mathcal{V}_{p,t-1}\right) =0$\emph{; }$%
E\left( \xi _{pt}^{2}\mid \mathcal{V}_{p,t-1}\right) =\sigma _{\xi ,p}^{2}$%
\emph{\ and finite fourth moments, with }$\mathcal{V}_{p,t}$\emph{\ denoting
the }$\sigma -$\emph{algebra generated by }$\left\{ \xi _{ps}\text{, }s\leq
t\right\} $\emph{.\newline
}\noindent $\left( \mathbf{ii}\right) $\emph{\ For all }$t\in \mathbb{Z}$ 
\emph{and} $p\in \mathbb{N}^{+}$\emph{, }%
\begin{equation*}
\xi _{pt}=\sum_{\ell =1}^{\infty }a_{\ell }\left( p\right) \varepsilon
_{\ell t}\text{, \ \ \ \ }\sup_{p\in \mathbb{N}^{+}}\sum_{\ell =1}^{\infty
}\left\vert a_{\ell }\left( p\right) \right\vert ^{2}<\infty \text{, \ }%
\sup_{\ell \geq 1}\sum_{p=1}^{n}\left\vert a_{\ell }\left( p\right)
\right\vert ^{2}<\infty \text{,}
\end{equation*}%
\emph{where the sequences }$\left\{ \varepsilon _{\ell t}\right\} _{t\in 
\mathbb{Z}}$\emph{, }$\ell \in \mathbb{N}^{+}$\emph{, are zero mean
independent identically distributed }$\left( iid\right) $ \emph{random
variables. \newline
}\noindent $\left( \mathbf{iii}\right) $\ \emph{The fourth cumulant of }$%
\left\{ u_{pt}\right\} _{t\in \mathbb{Z}}$\emph{, }$p\in \mathbb{N}^{+}$, 
\emph{satisfies} 
\begin{equation*}
\lim_{T\nearrow \infty }\sup_{p\in \mathbb{N}^{+}}%
\sum_{t_{1},t_{2},t_{3}=1}^{T}\left\vert \text{\textsl{Cum}}\left(
u_{pt_{1}};u_{pt_{2}};u_{pt_{3}};u_{p0}\right) \right\vert <\infty \text{.}
\end{equation*}

\item[Condition $\mathbf{C2}$] $\left\{ x_{pt}\right\} _{t\in \mathbb{Z}}$%
\emph{, }$p\in \mathbb{N}^{+}$\emph{, are sequences of random variables such
that:}\newline
\noindent $\left( \mathbf{i}\right) \qquad \qquad \qquad 
\begin{array}[t]{c}
x_{pt}=\dsum\limits_{k=0}^{\infty }c_{k}\left( p\right) \chi _{p,t-k}\text{,
\ \ \ \ }\dsum\limits_{k=0}^{\infty }kc_{k}<\infty \text{,~\ \ }%
c_{k}=:\sup_{p}\left\Vert c_{k}\left( p\right) \right\Vert \text{,}%
\end{array}%
$\newline
\emph{where }$\left\Vert B\right\Vert $\emph{\ denotes the norm of the
matrix }$B$\emph{\ and }$E\left( \chi _{pt}\mid \Upsilon _{p,t-1}\right) =0$%
\emph{; }$Cov\left( \chi _{pt}\mid \Upsilon _{p,t-1}\right) =\Sigma _{\chi
,p}$\emph{\ and }$E\left\Vert \chi _{pt}\right\Vert ^{4}<\infty $\emph{,
with }$\Upsilon _{p,t}$\emph{\ denoting the }$\sigma -$\emph{algebra
generated by }$\left\{ \chi _{ps}\text{, }s\leq t\right\} $\emph{.}\newline
\noindent $\left( \mathbf{ii}\right) $\emph{\ The sequences of random
variables }$\left\{ \chi _{pt}\right\} _{t\in \mathbb{Z}}$\emph{, }$p\in 
\mathbb{N}^{+}$\emph{, are such that }%
\begin{equation*}
\chi _{pt}=\sum_{\ell =1}^{\infty }b_{\ell }\left( p\right) \eta _{\ell t}%
\text{, \ \ \ \ }\sup_{p\in \mathbb{N}^{+}}\sum_{\ell =1}^{\infty
}\left\vert b_{\ell }\left( p\right) \right\vert ^{2}<\infty \text{, \ \ }%
\sup_{\ell \geq 1}\sum_{p=1}^{n}\left\vert b_{\ell }\left( p\right)
\right\vert ^{2}<\infty \text{\emph{,}}
\end{equation*}%
\emph{where the sequences }$\left\{ \eta _{\ell t}\right\} _{t\in \mathbb{Z}%
} $\emph{, }$\ell \in \mathbb{N}^{+}$\emph{, are zero mean iid random
variables.}\newline
\noindent $\left( \mathbf{iii}\right) $\emph{\ Denoting} $\Sigma
_{x,p}=E\left( x_{pt}x_{pt}^{\prime }\right) $\emph{, we have }%
\begin{equation}
0<\Sigma _{x}=\lim_{n\rightarrow \infty }\frac{1}{n}\sum_{p=1}^{n}\Sigma
_{x,p}  \label{sigmax}
\end{equation}%
\emph{and the fourth cumulant of }$\left\{ x_{pt}\right\} _{t\in \mathbb{Z}}$%
\emph{, }$p\in \mathbb{N}^{+}$\emph{, satisfies} 
\begin{equation*}
\lim_{T\rightarrow \infty }\sup_{p\in \mathbb{N}^{+}}%
\sum_{t_{1},t_{2},t_{3}=1}^{T}\left\vert \text{\textsl{Cum}}\left(
x_{pt_{1},a};x_{pt_{2},b};x_{pt_{3},c};x_{p0,d}\right) \right\vert <\infty 
\text{, }a,b,c,d=1,...,k\text{,}
\end{equation*}%
\emph{where }$x_{pt,a}$\emph{\ denotes the }$a-th$\emph{\ element of }$%
x_{pt} $.

\item[Condition $\mathbf{C3}$] \emph{For all }$p\in \mathbb{N}^{+}$\emph{,
the sequences }$\left\{ u_{pt}\right\} _{t\in \mathbb{Z}}$\emph{\ and }$%
\left\{ x_{pt}\right\} _{t\in \mathbb{Z}}$\emph{\ are mutually independent
and}%
\begin{equation}
0<\max_{1\leq p\leq n}\sum_{q=1}^{n}\left\Vert \varphi \left( p,q\right)
\right\Vert <\infty \text{,}  \label{c3}
\end{equation}%
\emph{where }$\varphi \left( p,q\right) :=\varphi _{u}\left( p,q\right)
\varphi _{x}\left( p,q\right) $.
\end{description}

We now comment on our conditions. Conditions $C1$ and $C2$ indicate that $%
\left\{ u_{pt}\right\} _{t\in \mathbb{Z}}$\emph{\ }and\emph{\ }$\left\{
x_{pt}\right\} _{t\in \mathbb{Z}}$, $p\in \mathbb{N}^{+}$, are linear
processes that permit the usual $SAR$ (or more generally $SARMA$) model.
Indeed, by definition of the $SAR$ model, with $W$\ a spatial weight matrix
we have 
\begin{eqnarray*}
u &=&\left( I-\omega W\right) ^{-1}\varepsilon \\
&=&\left( I+\Xi \right) \varepsilon \text{, \ \ }\Xi =\left( \psi _{q}\left(
p\right) \right) _{p,q=1}^{n}\text{,}
\end{eqnarray*}%
so that $u_{p}=\sum_{q=0}^{n}\psi _{q}\left( p\right) \varepsilon _{q}$,
which implies that the $SAR$\ model satisfies Condition $C1$. Unlike the SAR
model, Condition $C1$ does permit the sequence $\sum_{p=1}^{n}\left\vert
a_{\ell }\left( p\right) \right\vert $ to grow with $n$. One can allow the
weights $a_{\ell }\left( p\right) $ to depend on the sample size
\textquotedblleft $n$\textquotedblright\ as is often done in $SAR$ models
with weight matrices $W$ row-normalized, but it does not add anything
significant. Our conditions, therefore, appear to be weaker than those
typically assumed when cross-sectional dependence is allowed while being
similar to those of Lee and Robinson\textbf{\ }$\left( 2013\right) $\textbf{.%
} As the sequences may exhibit long memory spatial dependence, the condition
of strong mixing for the spatial dependence in Jenish and Prucha $\left(
2012\right) $ is ruled out. This appears to be the case as Ibragimov and
Rozanov $\left( 1978\right) $ showed; if the sequence $\left\{ \gamma
_{u,pq}\left( j\right) \right\} _{j\in \mathbb{Z}}$ is not summable, the
process $\left\{ u_{pt}\right\} _{t\in \mathbb{Z}},p\in \mathbb{N}^{+}$,
cannot be \emph{strong}-\emph{mixing}. The long memory dependence also rules
out that the process is Near Epoch Dependent with size greater than $1/2$,
which appears to be a necessary condition for standard asymptotic CLT
results.

Conditions $C1$ and $C2$ do rule out long\textbf{\ }memory temporal
dependence on the sequences $\left\{ x_{pt}\right\} _{t\in \mathbb{Z}}$ and $%
\left\{ u_{pt}\right\} _{t\in \mathbb{Z}}$ for each $p$. Even though there
are several results available allowing their temporal dependence to exhibit
long memory, see Robinson and Hidalgo $\left( 1997\right) $ or Hidalgo $%
\left( 2003\right) $, we have decided to assume the temporal dependence of
the regressors and errors to be weakly dependent to simplify the arguments.
It is worth pointing out that our Conditions $C1\left( \mathbf{i}\right) $
and $C2\left( \mathbf{i}\right) $ can be relaxed to some extend to allow
some type of mixing condition such as $L^{4}-$Near Epoch dependence with
size greater than or equal to $2$. The latter condition is often invoked
when we allow the errors to have a nonlinear type of dependence structure or
if $\left( \text{\ref{model_1}}\right) $ were replaced by a nonlinear panel
data model 
\begin{equation*}
y_{pt}=g\left( x_{pt};\beta \right) +\eta _{p}+\alpha _{t}+u_{pt}\text{, \ \ 
}p=1,...,n,\text{ \ \ \ }t=1,...,T\text{.}
\end{equation*}%
In fact, we expect the conclusions of our results to hold under such a
mixing condition\textbf{\ }as it has been shown in numerous situations.
Conditions $C1$ and $C2$\textbf{\ }do permit heterogeneity in its second
moments as $E\left( \xi _{pt}^{2}\mid \mathcal{V}_{p,t-1}\right) =\sigma
_{\xi ,p}^{2}$\emph{\ }and\emph{\ }$Cov\left( \chi _{pt}\mid \Upsilon
_{p,t-1}\right) =\Sigma _{\chi ,p}$. This follows from our conditions
because $E\left( \xi _{pt}^{2}\mid \mathcal{V}_{p,t-1}\right) =\sum_{\ell
=1}^{\infty }\left\vert a_{\ell }\left( p\right) \right\vert ^{2}$ clearly
depends on $p$. Furthermore, we allow for some trending behaviour of the
sequences $\left\{ x_{pt}\right\} _{t\in \mathbb{Z}}$, $p\in \mathbb{N}^{+}$%
, as we allow the mean of $x_{pt}$ to depend on time.

An important consequence of Conditions $C1$ and $C2$ is that they guarantee
that the covariance structure of the sequences $\left\{ u_{pt}\right\}
_{t\in \mathbb{Z}}$ and $\left\{ x_{pt}\right\} _{t\in \mathbb{Z}}$, $p\in 
\mathbb{N}^{+}$, is multiplicative. For instance, Condition $C1$ implies
that, for all $p,q\in \mathbb{N}^{+}$, 
\begin{eqnarray}
E\left( u_{pt}u_{qs}\right) &=&E\left( \dsum\limits_{k=0}^{\infty
}d_{k}\left( p\right) \xi _{p,t-k}\dsum\limits_{\ell =0}^{\infty }d_{\ell
}\left( q\right) \xi _{q,s-\ell }\right)  \notag \\
&=&E\left( \xi _{p1}\xi _{q1}\right) \left\{ 
\begin{array}{c}
\dsum\limits_{\ell =0}^{\infty }d_{t-s+\ell }\left( p\right) d_{\ell }\left(
q\right) \text{ \ \ }t>s \\ 
\dsum\limits_{\ell =0}^{\infty }d_{\ell }\left( p\right) d_{s-t+\ell }\left(
q\right) \text{ \ \ }t\leq s%
\end{array}%
\right.  \label{cov_u} \\
&=&\varphi _{u}\left( p,q\right) \gamma _{u;pq}\left( t-s\right) \text{.} 
\notag
\end{eqnarray}%
Following the spatio-temporal literature, see Cressie and Huang $\left(
1999\right) $, we can denote this covariance structure as \emph{separable}.
Of course, there are \emph{nonseparable} covariance structures, see Gneiting 
$\left( 2002\right) $ and tests for separability are available, see Fuentes $%
\left( 2006\right) $ or Matsuda and Yajima $\left( 2004\right) $.\ Notice
that in the absence of cross-sectional dependence, $E\left( \xi _{p1}\xi
_{q1}\right) =\sigma _{\xi p}^{2}\mathbf{1}\left( p=q\right) $ and $E\left(
u_{pt}u_{qs}\right) =\sigma _{\xi p}^{2}\gamma _{u;pp}\left( t-s\right) 
\mathbf{1}\left( p=q\right) $. Here, and in what follows, $\mathbf{1}\left(
A\right) $ denotes the indicator function.

\begin{remark}
The condition $\sup_{p\in \mathbb{N}^{+}}\sum_{\ell =0}^{\infty }\left\vert
a_{\ell }\left( p\right) \right\vert ^{2}<\infty $ guarantees that for any
reordering of the sequence $\left\{ \left\vert a_{\ell }\left( p\right)
\right\vert ^{2}\right\} _{\ell \in \mathbb{N}^{+}}$, say $\left\{
\left\vert a_{\ell \left( \tau \right) }\left( p\right) \right\vert
^{2}\right\} _{\ell \left( \tau \right) \in \mathbb{N}^{+}}$, we have that $%
a_{\ell \left( \tau \right) }\left( p\right) =O\left( \ell \left( \tau
\right) ^{-\zeta }\right) $ for some $\zeta >1/2$. Similarly the requirement 
$\sup_{\ell \geq 1}\sum_{p=1}^{n}\left\vert a_{\ell }\left( p\right)
\right\vert ^{2}<\infty $ will mean that $a_{\ell }\left( p\right) =O\left(
p^{-\zeta }\right) $ for some $\zeta >1/2$ uniformly in $\ell \geq 1$.
Similar arguments follow for $\left\{ \left\vert b_{\ell }\left( p\right)
\right\vert ^{2}\right\} _{\ell \in \mathbb{N}^{+}}$, $p\geq 1$.
\end{remark}

Condition $C3$ assumes that the sequences $\left\{ x_{pt}\right\} _{t\in 
\mathbb{Z}}$ and $\left\{ u_{pt}\right\} _{t\in \mathbb{Z}}$\emph{, }$p\in 
\mathbb{N}^{+}$ are independent, although we envisage that it can be relaxed
to require only conditional independence in first and second moments. To
simplify the arguments somewhat, we have preferred to keep the condition as
it stands. Even though we allow long memory spatial dependence of the
individual sequences, the absolute summability requirement in $\left( \text{%
\ref{c3}}\right) $\ limits the combined cross-sectional dependence, that is
the dependence of the sequence $\left\{ z_{pt}=u_{pt}x_{pt}\right\} _{t\in 
\mathbb{Z}}$, $p\in N^{+}$, is \textquotedblleft weakly spatially
dependent\textquotedblright , see also Hidalgo and Schafgans $\left(
2017\right) $.\textbf{\ }We have adopted the convention that $\gamma
_{u;pp}\left( t-s\right) =E\left( u_{pt}u_{ps}\right) /\varphi _{u}\left(
p,p\right) .$ Importantly, as we assume that the errors and regressors are
uncorrelated, the spectral density matrix of the sequences $\left\{
z_{pt}=:u_{pt}x_{pt}\right\} _{t\in \mathbb{Z}}$, $p\in \mathbb{N}^{+}$ is
given by the convolution of the spectral density matrix of $\left\{
x_{pt}\right\} _{t\in \mathbb{Z}}$ and spectral density function of $\left\{
u_{pt}\right\} _{t\in \mathbb{Z}}$, that is%
\begin{equation}
f_{p}\left( \lambda \right) =:\int_{-\pi }^{\pi }f_{u,p}\left( \upsilon
\right) f_{x,p}\left( \lambda -\upsilon \right) d\upsilon \text{, \ }p\in 
\mathbb{N}^{+}\text{,}  \label{fz}
\end{equation}%
where Conditions $C1$ and $C2$ imply that $f_{p}\left( \lambda \right) $ is
twice continuous differentiable. By Fuller's $\left( 1996\right) $ Theorem
3.4.1, or Corollary 3.4.1.2, the Fourier coefficients of $f_{p}\left(
\lambda \right) $ are given by $\gamma _{p}\left( j\right) =\gamma
_{x,p}\left( j\right) \gamma _{u,p}\left( j\right) $, \ $p\in \mathbb{N}^{+}$%
, so that 
\begin{equation*}
\sup_{p,q=1,..,n}\sum_{\ell =-\infty }^{\infty }\left\Vert \gamma
_{pq}\left( \ell \right) \right\Vert <\infty ;\text{ \ \textsl{Cov}}\left(
z_{pt};z_{qs}\right) =\gamma _{pq}\left( t-s\right) \varphi \left(
p,q\right) \text{.}
\end{equation*}%
With the convention that $\gamma _{u,pq}\left( 0\right) =\gamma
_{x,pq}\left( 0\right) =1$, $Cov\left( z_{pt},z_{qt}\right) =\varphi \left(
p,q\right) =:\varphi _{u}\left( p,q\right) \varphi _{x}\left( p,q\right) $
as defined in Condition $C3$.

\begin{remark}
\label{Remark_1}It is worth noticing that $\left( \text{\ref{c3}}\right) $
ensures that $\varphi \left( p,q\right) =O\left( q^{-1-\delta }\right) $ or $%
\varphi \left( p,q\right) =O\left( p^{-1-\delta }\right) $ for some $\delta
>0$, so that 
\begin{equation*}
\lim_{n\rightarrow \infty }\frac{1}{n}\sum_{p,q=1}^{n}\varphi \left(
p,q\right) <\infty \text{.}
\end{equation*}%
The latter displayed expression can be regarded as a type of weak dependence
in the cross-sectional dimension, see also Robinson $\left( 2011\right) $ or
Lee and Robinson $\left( 2013\right) $. In addition, the ergodicity in
second mean, that is%
\begin{equation*}
\frac{1}{n^{2}}\sum_{p,q=1}^{n}\left( \varphi _{u}\left( p,q\right) +\varphi
_{x}\left( p,q\right) \right) =o(1)\text{,}
\end{equation*}%
implies that $\varphi _{u}\left( p,q\right) =O\left( q^{-\varsigma
_{u}}\right) $ and $\varphi _{x}\left( p,q\right) =O\left( q^{-\varsigma
_{x}}\right) $ such that $\varsigma _{u}+\varsigma _{x}=1+\delta >0$.
\end{remark}

Conditions $C1-C3,$ therefore, imply that the \textquotedblleft \emph{average%
}\textquotedblright\ long-run variance of the sequences $\left\{
z_{pt}=:u_{pt}x_{pt}\right\} _{t\in \mathbb{Z}}$, $p\in \mathbb{N}^{+},$ is
given by\textbf{\ }%
\begin{eqnarray}
&&\Phi =:2\pi \lim_{n\rightarrow \infty }\frac{1}{n}\sum_{p,q=1}^{n}f_{pq}%
\left( 0\right) \varphi \left( p,q\right) <\infty  \label{sep_1} \\
&&2\pi f_{pq}\left( 0\right) =\sum_{\ell =-\infty }^{\infty }\gamma
_{pq}\left( \ell \right) .  \notag
\end{eqnarray}%
Observe that standard algebra yields that%
\begin{eqnarray}
\Phi &=&:\lim_{n\rightarrow \infty }\lim_{T\rightarrow \infty }\frac{1}{nT}%
E\left\{ \left( \sum_{p=1}^{n}\sum_{t=1}^{T}x_{pt}u_{pt}\right) \left(
\sum_{p=1}^{n}\sum_{t=1}^{T}x_{pt}^{\prime }u_{pt}\right) \right\}  \notag \\
&=&\lim_{n\rightarrow \infty }\lim_{T\rightarrow \infty }\frac{1}{nT}%
\sum_{p,q=1}^{n}\sum_{t,s=1}^{T}E\left( x_{pt}x_{qs}^{\prime }\right)
E\left( u_{pt}u_{qs}\right) ,  \label{v_1}
\end{eqnarray}%
or, using its spectral domain formulation,%
\begin{eqnarray}
\Phi &=&\lim_{n\rightarrow \infty }\lim_{T\rightarrow \infty }\frac{1}{nT}%
E\left\{ \left( \sum_{j=1}^{T-1}\sum_{p=1}^{n}\mathcal{J}_{x,p}\left(
\lambda _{j}\right) \mathcal{J}_{u,p}\left( -\lambda _{j}\right) \right)
\left( \sum_{j=1}^{T-1}\sum_{p=1}^{n}\mathcal{J}_{x,p}^{\prime }\left(
-\lambda _{j}\right) \mathcal{J}_{u,p}\left( \lambda _{j}\right) \right)
\right\}  \notag \\
&=&\lim_{n\rightarrow \infty }\lim_{T\rightarrow \infty }\frac{1}{nT}%
\sum_{j=1}^{T-1}\sum_{p,q=1}^{n}E\left( \mathcal{J}_{x,p}\left( \lambda
_{j}\right) \mathcal{J}_{x,q}^{\prime }\left( -\lambda _{j}\right) \right)
E\left( \mathcal{J}_{u,p}\left( -\lambda _{j}\right) \mathcal{J}_{u,q}\left(
\lambda _{j}\right) \right) \text{.}  \label{v_1freq}
\end{eqnarray}%
Finally, we denote%
\begin{equation}
\text{\textsl{V}}=\Sigma _{x}^{-1}\Phi \Sigma _{x}^{-1}\text{,}  \label{v_2}
\end{equation}%
where $\Sigma _{x}>0$ was defined in Condition $C2$.

The following theorem presents our result establishing the CLT for our slope
parameter estimates in the presence of general temporal and cross-sectional
dependence.

\begin{theorem}
\label{ThmEst}Under Conditions $C1-C3$, we have that as $n,T\rightarrow
\infty $,%
\begin{equation*}
\text{\ }\left( nT\right) ^{1/2}\left( \widetilde{\beta }-\beta \right) 
\overset{d}{\rightarrow }\mathcal{N}\left( 0,\text{\textsl{V}}\right) \text{.%
}
\end{equation*}
\end{theorem}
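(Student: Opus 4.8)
The plan is to work entirely in the frequency domain using the representation $\left( \text{\ref{model_4freq}}\right) $. Writing $\left( nT\right) ^{1/2}\left( \widetilde{\beta}-\beta \right) =\left( \frac{1}{nT}\sum_{p=1}^{n}\sum_{j=1}^{T-1}\mathcal{J}_{\widetilde{x},p}\left( \lambda _{j}\right) \mathcal{J}_{\widetilde{x},p}^{\prime }\left( -\lambda _{j}\right) \right) ^{-1}\left( \frac{1}{\left( nT\right) ^{1/2}}\sum_{p=1}^{n}\sum_{j=1}^{T-1}\mathcal{J}_{\widetilde{x},p}\left( \lambda _{j}\right) \mathcal{J}_{\widetilde{u},p}\left( -\lambda _{j}\right) \right) $, the proof splits into three parts: (a) showing the normalized design matrix converges in probability to $\Sigma _{x}$; (b) showing the DFT ``tildes'' can be replaced by the raw DFTs $\mathcal{J}_{x,p}$, $\mathcal{J}_{u,p}$ at negligible cost; and (c) establishing a CLT for $S_{nT}=:\left( nT\right) ^{-1/2}\sum_{p=1}^{n}\sum_{j=1}^{T-1}\mathcal{J}_{x,p}\left( \lambda _{j}\right) \mathcal{J}_{u,p}\left( -\lambda _{j}\right) \overset{d}{\rightarrow }\mathcal{N}\left( 0,\Phi \right) $. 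Then Slutsky gives the result with $\text{\textsl{V}}=\Sigma _{x}^{-1}\Phi \Sigma _{x}^{-1}$.

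For step (b), the demeaning transformation $\left( \text{\ref{not}}\right) $ affects only the zero frequency and, through the cross-sectional averages $\overline{\varsigma}_{\cdot t}$, contributes terms of order $n^{-1/2}$ uniformly; since $J_{\widetilde{\varsigma},p}\left( 0\right) =0$ already removes the individual mean, one checks that $\mathcal{J}_{\widetilde{x},p}\left( \lambda _{j}\right) =\mathcal{J}_{x,p}\left( \lambda _{j}\right) -\mathcal{J}_{\overline{x}_{\cdot },\cdot}\left( \lambda _{j}\right) $ for $j\neq 0$, and the cross term involving $\mathcal{J}_{\overline{x}_{\cdot}}$ against $\mathcal{J}_{\overline{u}_{\cdot}}$ is $O_{p}\left( n^{-1/2}\right) $ by Conditions $C1$--$C3$ (the cross-sectional averaging kills dependence by $\left( \text{\ref{c3}}\right) $). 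Step (a) is then the ergodic-type law of large numbers: $\frac{1}{nT}\sum_{p,j}\mathcal{J}_{x,p}\left( \lambda _{j}\right) \mathcal{J}_{x,p}^{\prime }\left( -\lambda _{j}\right) =\frac{1}{n}\sum_{p=1}^{n}\frac{1}{T}\sum_{t=1}^{T}x_{pt}x_{pt}^{\prime }$ by Parseval, and this converges to $\Sigma _{x}$ using $C2\left( \mathbf{i}\right) $--$\left( \mathbf{iii}\right) $ and the weak cross-sectional dependence; invertibility follows from $\Sigma _{x}>0$.

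The heart of the matter is step (c). The key structural fact, emphasized in the text, is that the errors' DFTs $\mathcal{J}_{u,p}\left( \lambda _{j}\right) $ are asymptotically uncorrelated across Fourier frequencies $j$, while $E\left( \mathcal{J}_{u,p}\left( -\lambda _{j}\right) \mathcal{J}_{u,q}\left( \lambda _{j}\right) \right) \rightarrow 2\pi f_{u,pq}\left( \lambda _{j}\right) \varphi _{u}\left( p,q\right) $ and similarly for $x$, so that (after the convolution identity $\left( \text{\ref{fz}}\right) $) the variance of $S_{nT}$ converges to $\Phi $ as in $\left( \text{\ref{v_1freq}}\right) $. My plan is to use the linear-process structure $C1\left( \mathbf{i}\right) $, $C1\left( \mathbf{ii}\right) $ and $C2$ to write each $\mathcal{J}_{x,p}\left( \lambda _{j}\right) $ and $\mathcal{J}_{u,p}\left( \lambda _{j}\right) $ as a (transfer-function-weighted) DFT of the underlying innovations $\varepsilon _{\ell t}$, $\eta _{\ell t}$ plus a uniformly small end-effect remainder (controlled by $\sum k\,d_{k}<\infty $ and $\sum k\,c_{k}<\infty $), reducing $S_{nT}$ to a quadratic form in iid arrays. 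Because $\xi _{pt}$ is a martingale difference with respect to $\mathcal{V}_{p,t-1}$, I would set up a martingale-difference CLT by summing along a single index (say, ordering by time after conditioning appropriately, or by using the Brillinger-type big-block/small-block argument over frequencies) — this is the route that replaces the mutual-independence-across-$p$ assumption of Phillips and Moon and the strong-mixing assumption of Jenish and Prucha. The main obstacle will be verifying the Lindeberg / conditional-variance conditions for this martingale array uniformly in the presence of \emph{strong} cross-sectional dependence: one must show the conditional variance concentrates at $\Phi $ and that no single $\left( p,j\right) $ term dominates, which requires the fourth-cumulant bounds in $C1\left( \mathbf{iii}\right) $ and $C2\left( \mathbf{iii}\right) $ together with the two-sided square-summability $\sup_{p}\sum_{\ell}|a_{\ell}\left( p\right) |^{2}<\infty $ and $\sup_{\ell}\sum_{p}|a_{\ell}\left( p\right) |^{2}<\infty $ (and the $b_{\ell}$ analogues) to bound the variance of the quadratic form and its fourth moment. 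These cumulant and summability controls, invoked through a sequence of lemmas, are what make the CLT go through; the rest is bookkeeping via Slutsky's theorem.
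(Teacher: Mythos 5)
Your overall architecture (frequency-domain formulation, negligibility of the demeaning terms, CLT for $\left( nT\right) ^{-1/2}\sum_{p,j}\mathcal{J}_{x,p}\left( \lambda _{j}\right) \mathcal{J}_{u,p}\left( -\lambda _{j}\right) $, then Slutsky) matches the paper's, but two points need repair. First, in step (b) you assert that the cross-sectional averages contribute terms of order $n^{-1/2}$ "uniformly". That is false under the dependence permitted here: Condition $C1$ allows $\sum_{p,q}\varphi _{u}\left( p,q\right) $ to grow almost like $n^{2}$ (long-memory spatial dependence), so $\mathcal{J}_{\overline{u},\cdot }\left( \lambda _{j}\right) $ need not be $O_{p}\left( n^{-1/2}\right) $. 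What saves the argument is not the size of each average separately but the \emph{product} condition $\left( \text{\ref{c3}}\right) $: summability of $\varphi _{u}\left( p,q\right) \varphi _{x}\left( p,q\right) $ implies $\sum_{p_{1},p_{2}}\varphi _{u}\left( p_{1},p_{2}\right) \sum_{q_{1},q_{2}}\varphi _{x}\left( q_{1},q_{2}\right) =o\left( n^{3}\right) $, and it is this bound, applied to the second moment of the cross terms (together with the separability $E\left( u_{pt}u_{qs}\right) =\varphi _{u}\left( p,q\right) \gamma _{u,pq}\left( t-s\right) $ and the summability of the temporal autocovariances), that yields $o_{p}\left( 1\right) $. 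You cite $\left( \text{\ref{c3}}\right) $ but the mechanism you describe is the wrong one.

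Second, and more seriously, step (c) — which you rightly call the heart of the matter — is left as a plan ("set up a martingale-difference CLT by summing along a single index... or big-block/small-block over frequencies") without the device that actually makes it work. The sum $n^{-1/2}\sum_{p}x_{pt}u_{pt}$ is \emph{not} a martingale difference in $t$, because both factors are temporally dependent, so ordering by time and conditioning does not by itself produce a martingale array. The paper's route is a second-order Beveridge--Nelson decomposition in the style of Phillips and Solo $\left( 1992\right) $: writing $u_{pt}x_{pt}=\varrho _{0}\left( 1\right) \xi _{pt}\chi _{pt}+\xi _{pt}\sum_{k\geq 1}\varrho _{k}\left( 1\right) \chi _{p,t-k}+\chi _{pt}\sum_{\ell \geq 1}g_{\ell }\left( 1\right) \xi _{p,t-\ell }$ plus telescoping $\left( 1-L\right) $ remainders that are $O_{p}\left( T^{-1/2}\right) $ after summation. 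The leading terms are genuinely \emph{independent} across $t$ (not merely martingale differences), because $\xi _{pt}$ and $\chi _{pt}$ are linear in innovations that are iid over $t$; this is what allows the cross-sectional CLT of Hidalgo and Schafgans $\left( 2017\right) $ — which is where the strong cross-sectional dependence is actually handled — to be applied at each $t$ and then combined over $t$ by a standard argument, with a truncation in the lag index controlled by Bernstein's lemma. The frequency-domain version requires the additional Bartlett-type decomposition $\mathcal{J}_{u,p}\left( j\right) =\mathcal{B}_{u,p}\left( -j\right) \mathcal{J}_{\xi ,p}\left( j\right) +\mathrm{Y}_{u,p}\left( j\right) $ with second-moment bounds on the remainders $\mathrm{Y}$, which reduces the frequency sum to a Fourier-coefficient-weighted time-domain sum of products $\chi _{ps}\xi _{pt}$ of the same independent-in-$t$ form. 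Without identifying this decomposition (or an equivalent), your Lindeberg/conditional-variance verification has no array to be verified on, so the CLT step remains a gap.
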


\begin{proof}
The proof of this result, based on either the time or frequency domain
formulation, will be given in Appendix A. All other proofs are relegated to
this appendix as well.
\end{proof}

\begin{remark}
While the result could be shown to hold with finite $n$,\ a setting
considered by Robinson (1998), the presence of the time fixed effect would
require special attention since the dependence structure of $u_{pt}$ and $%
n^{-1}\sum_{p=1}^{n}u_{pt}$ are not quite the same when $n$\ is finite.
\end{remark}

With \textsl{V} defined in $\left( \text{\ref{v_2}}\right) $, Theorem \ref%
{ThmEst} indicates that to make inferences on $\beta $, we need to provide a
consistent estimator of $\Phi $. A first glance at $\left( \text{\ref{v_1}}%
\right) $ or $\left( \text{\ref{v_1freq}}\right) $ suggests that this might
be complicated or computationally burdensome due to the general
spatio-temporal dependence structure of the data. As we pointed out in the
introduction, the standard approach to deal with such dependence, that is a
HAC type of estimator, has various potential drawbacks in the presence of
cross-sectional dependence. While choosing a bandwidth parameter associated
with the cross-sectional dependence requires or induces an artificial and/or
nontrivial ordering, the presence of individual heterogeneous temporal
dependence (as assumed in Conditions $C1$\ and $C2)$\ would even render any
cross validation method used to choose the temporal bandwidth parameter
intractable.

While Kim and Sun's $\left( 2013\right) $\ approach is subject to both these
criticisms, Driscoll and Kraay $\left( 1989\right) $ avoid the need to
specify an ordering of individuals by introducing an HAC estimator of
cross-sectional averages, so that one can consider their estimator as a
hybrid between an HAC and a cluster one: they employ the HAC methodology to
deal with the temporal dependence whereas they employ a cluster type of
estimator to account for the cross-sectional dependence. We advocate to use
an approach that does not require any ordering and/or selection of a
bandwidth parameter and also permits a more general spatio-temporal
dependence than allowed by either Driscoll and Kraay $\left( 1989\right) $
or Kim and Sun $\left( 2013\right) $ and permits the cross-sectional
dependence to be "long-memory" which latter work ruled out. Moreover our
approach permits the temporal dependence to be heterogeneous across
individuals, which is more realistic.

Our approach can be regarded as a natural extension of the earlier work by
Robinson $\left( 1998\right) $\ on inference without smoothing in a time
series regression model context. In his case, abstracting from
cross-sectional dependence%
\begin{equation*}
\Phi =:\lim_{n\rightarrow \infty }\frac{2\pi }{n}\sum_{p=1}^{n}f_{pp}\left(
0\right) \text{.}
\end{equation*}%
Applying his estimator to our model, would yield the estimator%
\begin{equation}
\frac{2\pi }{n}\sum_{p=1}^{n}\frac{1}{T}\sum_{j=1}^{T}\mathcal{I}%
_{u,p}\left( \lambda _{j}\right) \mathcal{I}_{x,p}\left( -\lambda
_{j}\right) =\frac{1}{n}\sum_{p=1}^{n}\sum_{\ell =-T+1}^{T-1}\widehat{\gamma 
}_{x,p}\left( \ell \right) \widehat{\gamma }_{u,p}\left( \ell \right) \text{,%
}  \label{rob_1}
\end{equation}%
where $\widehat{\gamma }_{x,p}\left( j\right) $ and $\widehat{\gamma }%
_{u,p}\left( j\right) $ are respectively the standard sample moment
estimators of $\gamma _{x,p}\left( j\right) $ and $\gamma _{u,p}\left(
j\right) $ and $I_{u,p}\left( \lambda \right) =T^{-1}\left(
\sum_{t=1}^{T}u_{pt}e^{it\lambda }\right) \left(
\sum_{t=1}^{T}u_{pt}e^{-it\lambda }\right) ^{\prime }$ with $I_{x,p}\left(
\lambda \right) $ similarly defined. When cross-sectional dependence is
allowed, the latter arguments suggest that $\left( \text{\ref{rob_1}}\right) 
$ is not a consistent (cluster) estimator of $\Phi $. The reason for this
(see also the proof of Proposition \ref{PropV1} below) is that 
\begin{equation*}
\frac{1}{n}\sum_{p=1}^{n}\sum_{\ell =-T+1}^{T-1}\gamma _{x,p}\left( \ell
\right) \gamma _{u,p}\left( \ell \right) \nrightarrow \Phi
\end{equation*}%
as expected since the first moment of $\left( \text{\ref{rob_1}}\right) $
does not capture the cross-sectional dependence. The purpose of the next
section is therefore to provide a consistent \textquotedblleft
cluster\textquotedblright\ estimator of $\Phi $ that accounts for the
presence of cross-sectional dependence.

\subsection{\textbf{Cluster estimator of }$\Phi $}

$\left. {}\right. $

We shall present a simple cluster estimator of $\Phi $ using the
\textquotedblleft frequency\textquotedblright\ domain methodology.
Obviously, there is a time domain analogue, which we shall briefly describe
at the end of the section. Our cluster estimator appears to be the first one
which permits both time and cross-sectional dependence and gives a formal
justification of its statistical properties. Our estimator therefore becomes
an extension of previous cluster estimators in the literature such as that
in Arellano $\left( 1987\right) $\ (where only temporal dependence is
present) or Bester, Conley and Hansen $\left( 2011\right) $ (where only
cross-sectional dependence is present).

Our main motivation to propose a cluster estimator using the frequency
domain methodology comes from the well known observation that for all $j\neq
k$, $J_{u,p}\left( \lambda _{j}\right) $ and $J_{u,q}\left( \lambda
_{k}\right) $ can be considered as being uncorrelated although possibly
heteroskedastic. This observation was employed in the landmark paper by
Hannan $\left( 1963\right) $ on adaptive estimation in a time series
regression model. The fact that we may therefore consider $\mathcal{J}_{%
\widetilde{x},p}\left( \lambda _{j}\right) \mathcal{J}_{u,p}\left( -\lambda
_{j}\right) $ as a sequence of uncorrelated and heteroskedastic random
variables in $j$, although not in $p$, suggests that, in a spirit similar to
White's $\left( 1980\right) $ estimator, we may estimate $\Phi $ by 
\begin{equation}
\breve{\Phi}=\frac{1}{T}\sum_{j=1}^{T-1}\left\{ \left( \frac{1}{n^{1/2}}%
\sum_{p=1}^{n}\mathcal{J}_{\widetilde{x},p}\left( \lambda _{j}\right) 
\mathcal{J}_{\widehat{u},p}\left( -\lambda _{j}\right) \right) \left( \frac{1%
}{n^{1/2}}\sum_{p=1}^{n}\mathcal{J}_{\widetilde{x},p}^{\prime }\left(
-\lambda _{j}\right) \mathcal{J}_{\widehat{u},p}\left( \lambda _{j}\right)
\right) \right\} \text{.}  \label{v_estimate}
\end{equation}%
Based on the DFT formulation, we denote the estimator of $\Sigma _{x}$ by 
\begin{equation*}
\widetilde{\Sigma }_{x}=\frac{1}{nT}\sum_{j=1}^{T-1}\sum_{p=1}^{n}\mathcal{J}%
_{\widetilde{x},p}\left( \lambda _{j}\right) \mathcal{J}_{\widetilde{x}%
,p}^{\prime }\left( -\lambda _{j}\right) \text{.}
\end{equation*}

The following proposition establishes the consistency of our cluster
estimator for the \textquotedblleft average\textquotedblright\ long-run
variance of the sequences $\left\{ z_{pt}=:u_{pt}x_{pt}\right\} _{t\in 
\mathbb{Z}}$, $p\in N^{+}.$

\begin{proposition}
\label{PropV1}Under the conditions of Theorem \ref{ThmEst}, we have that%
\begin{eqnarray*}
&&\left( \mathbf{a}\right) \ \ \ \ \ \breve{\Phi}-\Phi =o_{p}\left( 1\right)
\\
&&\left( \mathbf{b}\right) \ \ \ \ \ \widetilde{\Sigma }_{x}-\Sigma
_{x}=o_{p}\left( 1\right) \text{.}
\end{eqnarray*}
\end{proposition}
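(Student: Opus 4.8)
The plan is to prove part (b) first, then part (a), since the arguments for $\breve\Phi$ closely parallel those for $\widetilde\Sigma_x$ but with an extra layer of complication coming from the estimated residuals $\mathcal{J}_{\widehat u,p}$. For part (b), I would show $E\widetilde\Sigma_x\to\Sigma_x$ and $\mathrm{Var}(\widetilde\Sigma_x)\to 0$. The mean convergence follows from $\frac1T\sum_{j=1}^{T-1}E\big(\mathcal{J}_{\widetilde x,p}(\lambda_j)\mathcal{J}_{\widetilde x,p}^{\prime}(-\lambda_j)\big)\to\Sigma_{x,p}$ uniformly in $p$ (a standard periodogram-mean computation, using that the fixed-effect transformation $\widetilde x_{pt}$ only contributes $O(T^{-1})$ per-frequency corrections because the cross-sectional averaging removes a rank-one piece), together with Condition $C2(iii)$ which gives $\frac1n\sum_p\Sigma_{x,p}\to\Sigma_x$. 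For the variance, I would write $\widetilde\Sigma_x$ as a double sum over $p,q$ and $j,k$; the $C2$ linear-process structure and the fourth-cumulant summability in $C2(iii)$ bound the covariance of periodogram ordinates across frequencies, and the cross-sectional weak dependence $\max_p\sum_q\|\varphi_x(p,q)\|<\infty$ (implied by $C3$) controls the sum over $p,q$, yielding a bound of order $O\big((nT)^{-1}\big)+o(1)$-type terms. I expect to invoke the lemmas in Appendix B for the frequency-domain covariance bounds rather than redo them.

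For part (a), I would first replace $\mathcal{J}_{\widehat u,p}$ by $\mathcal{J}_{\widetilde u,p}$ and show the difference is $o_p(1)$. Since $\widehat u_{pt}=\widetilde u_{pt}-(\widetilde\beta-\beta)^{\prime}\widetilde x_{pt}$, the DFT is linear, so $\mathcal{J}_{\widehat u,p}(\lambda_j)=\mathcal{J}_{\widetilde u,p}(\lambda_j)-(\widetilde\beta-\beta)^{\prime}\mathcal{J}_{\widetilde x,p}(\lambda_j)$; expanding the quadratic form in $\breve\Phi$ produces cross terms and a quadratic term, each carrying a factor $(\widetilde\beta-\beta)$, which is $O_p((nT)^{-1/2})$ by Theorem \ref{ThmEst}. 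The remaining factors are of the same type as $\widetilde\Sigma_x$ (for the quadratic term) or a mixed $x$-$u$ cluster average (for the cross terms), both $O_p(1)$ by the argument below and by (b), so the whole correction is $o_p(1)$. Then I would show $\breve\Phi_{(\widetilde u)}-\Phi=o_p(1)$, again by a mean-plus-variance argument: for the mean, $\frac1{nT}\sum_{j}\sum_{p,q}E\big(\mathcal{J}_{\widetilde x,p}(\lambda_j)\mathcal{J}_{\widetilde x,q}^{\prime}(-\lambda_j)\big)E\big(\mathcal{J}_{\widetilde u,p}(-\lambda_j)\mathcal{J}_{\widetilde u,q}(\lambda_j)\big)$ must converge to $\Phi$ as in \eqref{v_1freq}. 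The key point — and the reason the naive estimator \eqref{rob_1} fails — is that the clustering over $p$ inside the modulus-squared retains the $p\neq q$ cross-covariances $\varphi(p,q)$, whereas \eqref{rob_1} does not; here the separability \eqref{cov_u}, the convolution formula \eqref{fz}, and the identity $2\pi f_{pq}(0)=\sum_\ell\gamma_{pq}(\ell)$ deliver exactly the limit in \eqref{sep_1}. The fixed-effect correction terms (those involving $\overline x_{\cdot t}$ etc.) are again negligible because each cross-sectional average is $O_p(n^{-1/2})$ in the relevant norm.

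The main obstacle will be the variance bound for $\breve\Phi$ in part (a): unlike $\widetilde\Sigma_x$, the summand is a product of an $x$-periodogram-type term and a $u$-periodogram-type term, and when I compute its second moment I get a quadruple cross-sectional sum $\sum_{p,q,p',q'}$ weighted by products like $\varphi_x(p,q)\varphi_u(p,q)\varphi_x(p',q')\varphi_u(p',q')$ plus fourth-cumulant contributions across the four indices, all multiplied by frequency-domain covariance factors that are only $O(1)$ (not $o(1)$) on the diagonal $j=k$. Controlling this requires combining: (i) the across-frequency decay of periodogram covariances from Conditions $C1(iii)$, $C2(iii)$ so that the $j=k$ diagonal contributes $O(T^{-1})$ after averaging; (ii) the cross-sectional summability $\max_p\sum_q\|\varphi(p,q)\|<\infty$ from $C3$ to keep the $p,q$ sums bounded; and (iii) the independence of $\{x_{pt}\}$ and $\{u_{pt}\}$ from $C3$ to factor expectations. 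I would organize this so the dominant term is $O\big((nT)^{-1}\big)\cdot O(nT)\cdot$(a bounded cross-sectional factor)$=O(n^{-1})+O(T^{-1})=o(1)$, and delegate the detailed periodogram-covariance inequalities to the Appendix B lemmas, invoking \eqref{v_1freq} to identify the limit and Chebyshev to conclude consistency.
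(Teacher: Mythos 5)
Your overall architecture matches the paper's: expand $\mathcal{J}_{\widehat u,p}=\mathcal{J}_{\widetilde u,p}-(\widetilde\beta-\beta)'\mathcal{J}_{\widetilde x,p}$, kill the estimation-error terms, show the fixed-effect (cross-sectional average) corrections are negligible, and establish the main term by a mean-plus-variance/Chebyshev argument. However, there is a genuine gap in the justification you give for the variance bounds. You assert that Condition $C3$ implies $\max_p\sum_q\|\varphi_x(p,q)\|<\infty$ and use this to control the cross-sectional sums in part (b). That implication is false: $C3$ only bounds $\sum_q\|\varphi_u(p,q)\varphi_x(p,q)\|$, i.e.\ the \emph{product}, and the whole point of the paper's framework (see Remark \ref{Remark_1}) is that $\varphi_x$ and $\varphi_u$ individually may be non-summable (strong spatial dependence, $\varphi_x(p,q)=O(q^{-\varsigma_x})$ with $\varsigma_x$ possibly below $1$). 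The same problem infects your claim that each cross-sectional average is $O_p(n^{-1/2})$: under strong dependence it is only $o_p(1)$. The correct tools, which the paper uses, are (i) the ergodicity bound $n^{-2}\sum_{p,q}\varphi_x(p,q)=o(1)$ together with boundedness of $\varphi_x$ (so $n^{-2}\sum_{p,q}\varphi_x^2(p,q)=o(1)$, Lemma \ref{z_13}), and (ii) for the second moment of the main term in part (a), the interpolation inequality $\sum_{p_1}\varphi_x(p_1,p_2)\varphi_u(p_1,q_2)=O(1)$ in $\left(\text{\ref{A}}\right)$, needed because the Gaussian-type pairings produce \emph{mismatched} index products such as $\varphi_x(p_1,p_2)\varphi_u(p_1,q_2)$, not only the matched products $\varphi_x(p,q)\varphi_u(p,q)$ you list. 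Your sketch does not identify either device, and without them the quadruple sum is not controlled in the strong-dependence regime the proposition is meant to cover.

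Two smaller points. First, your bookkeeping for the estimation-error terms understates a factor of $n$: the quadratic term is $(\widetilde\beta-\beta)^2\,T^{-1}\sum_j\bigl(n^{-1/2}\sum_p\mathcal{I}_{\widetilde x,p}(j)\bigr)^2$, whose second factor is $O_p(n)$ (not $O_p(1)$), so the term is $O_p((nT)^{-1})\cdot O_p(n)=O_p(T^{-1})$ — the conclusion survives but not for the reason you state, and the cross term then needs Cauchy--Schwarz as in the paper. Second, where you invoke the fourth-cumulant conditions for across-frequency decorrelation, the paper instead passes through the Bartlett decomposition $\left(\text{\ref{bartlett}}\right)$ to the innovation DFTs $\mathcal{J}_{\chi,p},\mathcal{J}_{\xi,p}$, for which exact orthogonality across Fourier frequencies holds by $\left(\text{\ref{dft_cov}}\right)$; your cumulant route is viable in principle but would need to be carried out explicitly rather than delegated, since the Appendix B lemmas are stated for the Bartlett remainders and innovation DFTs, not for raw periodogram covariances.
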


Denoting \textsl{\^{V}}$=:\widetilde{\Sigma }_{x}^{-1}\breve{\Phi}\widetilde{%
\Sigma }_{x}^{-1}$, we now obtain the following corollary.

\begin{corollary}
\label{Pro_est}Under the conditions of Theorem \ref{ThmEst}, we have that%
\begin{equation*}
\left( nT\right) ^{1/2}\text{\textsl{\^{V}}}^{-1/2}\left( \widetilde{\beta }%
-\beta \right) \overset{d}{\rightarrow }\mathcal{N}\left( 0,I\right) \text{.}
\end{equation*}
\end{corollary}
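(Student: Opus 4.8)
The plan is to combine Theorem \ref{ThmEst} with the consistency results of Proposition \ref{PropV1} via a standard Slutsky-type argument, the only delicate points being the invertibility of the limiting matrix and the continuity of the map $A \mapsto A^{-1/2}$ near $\textsl{V}$. First I would recall that by Condition $C2(\mathbf{iii})$ we have $\Sigma_x > 0$, and that $\Phi > 0$ follows from $\left(\text{\ref{c3}}\right)$ together with the fact that the long-run variance in $\left(\text{\ref{v_1}}\right)$ is a limit of nonnegative-definite matrices bounded below (this is where one must be a little careful: strictly speaking one needs $\Phi$ positive \emph{definite}, not merely semidefinite, which should be read as a maintained regularity requirement implicit in writing $\textsl{V} = \Sigma_x^{-1}\Phi\Sigma_x^{-1}$ as a nonsingular asymptotic covariance). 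Granting this, $\textsl{V} = \Sigma_x^{-1}\Phi\Sigma_x^{-1}$ is symmetric positive definite, so $\textsl{V}^{-1/2}$ is well defined, and there is a neighbourhood of $\textsl{V}$ in the space of symmetric matrices on which $A \mapsto A^{-1/2}$ is continuous.

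Next I would assemble the pieces. Proposition \ref{PropV1}$(\mathbf{a})$ gives $\breve\Phi \xrightarrow{p} \Phi$ and part $(\mathbf{b})$ gives $\widetilde\Sigma_x \xrightarrow{p} \Sigma_x$; since matrix inversion is continuous at the nonsingular matrix $\Sigma_x$, the continuous mapping theorem yields $\widetilde\Sigma_x^{-1} \xrightarrow{p} \Sigma_x^{-1}$, and hence
\begin{equation*}
\textsl{\^{V}} = \widetilde\Sigma_x^{-1}\breve\Phi\,\widetilde\Sigma_x^{-1} \xrightarrow{p} \Sigma_x^{-1}\Phi\,\Sigma_x^{-1} = \textsl{V}.
\end{equation*}
Applying the continuous mapping theorem once more to the map $A \mapsto A^{-1/2}$, which is continuous on a neighbourhood of $\textsl{V}$, gives $\textsl{\^{V}}^{-1/2} \xrightarrow{p} \textsl{V}^{-1/2}$. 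Meanwhile Theorem \ref{ThmEst} gives $(nT)^{1/2}(\widetilde\beta - \beta) \xrightarrow{d} \mathcal{N}(0,\textsl{V})$.

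Finally I would invoke Slutsky's theorem for the product of a sequence converging in probability to the constant matrix $\textsl{V}^{-1/2}$ and a sequence converging in distribution to $\mathcal{N}(0,\textsl{V})$, obtaining
\begin{equation*}
(nT)^{1/2}\,\textsl{\^{V}}^{-1/2}(\widetilde\beta - \beta) \xrightarrow{d} \textsl{V}^{-1/2}\,\mathcal{N}(0,\textsl{V}) = \mathcal{N}\!\left(0,\,\textsl{V}^{-1/2}\textsl{V}\,\textsl{V}^{-1/2}\right) = \mathcal{N}(0,I),
\end{equation*}
which is the claim. The substantive content is entirely carried by Theorem \ref{ThmEst} and Proposition \ref{PropV1}; the main thing to watch is that one is entitled to treat $\textsl{V}$ as invertible and to use a \emph{symmetric} square-root root inverse so that $\textsl{V}^{-1/2}\textsl{V}\textsl{V}^{-1/2} = I$ holds exactly. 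No obstacle of real depth arises at this stage — this is the routine ``plug in a consistent variance estimator'' corollary.
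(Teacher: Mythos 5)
Your proposal is correct and is exactly the standard Slutsky/continuous-mapping argument that the paper itself invokes (its proof of Corollary \ref{Pro_est} simply states that the result is standard from Theorem \ref{ThmEst} and Proposition \ref{PropV1} and omits the details). Your remark that positive definiteness of $\Phi$ must be read as an implicit maintained assumption is a fair and accurate observation, but it does not change the substance.
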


\begin{proof}
The proof is standard from Theorem \ref{ThmEst} and Proposition \ref{PropV1}%
, and is therefore omitted.
\end{proof}

We now describe the time domain analogue estimator of $\Phi $. For that
purpose, using $\sum_{t=1}^{T}e^{it\lambda _{\ell }}=0$ if $1\leq \ell \leq
T-1$, we have after standard algebra that 
\begin{equation*}
\breve{\Phi}=\frac{1}{n}\sum_{p,q=1}^{n}\sum_{\left\vert \ell \right\vert
=0}^{T-1}\widehat{\gamma }_{x,pq}\left( \ell \right) \widehat{\gamma }%
_{u,pq}\left( \ell \right) \text{,}
\end{equation*}%
where due to $\left( \ref{cov_u}\right) $ 
\begin{eqnarray*}
\widehat{\gamma }_{x,pq}\left( \ell \right) &=&\frac{1}{T}%
\sum_{t=1}^{T-\left\vert \ell \right\vert }\widetilde{x}_{pt}\widetilde{x}%
_{q,t+\ell }^{\prime }; \\
\widehat{\gamma }_{u,pq}\left( \ell \right) &=&\frac{1}{T}%
\sum_{t=1}^{T-\left\vert \ell \right\vert }\widehat{u}_{pt}\widehat{u}%
_{q,t+\ell }\mathbf{1}\left( \ell >0\right) +\frac{1}{T}\sum_{t=1}^{T-\left%
\vert \ell \right\vert }\widehat{u}_{qt}\widehat{u}_{p,t+\ell }\mathbf{1}%
\left( \ell <0\right) \text{,}
\end{eqnarray*}%
and $\widehat{u}_{pt}=\widetilde{y}_{pt}-\widetilde{\beta }^{\prime }%
\widetilde{x}_{pt}$, \ $p=1,...,n;$ $t=1,...,T$.

\section{\textbf{BOOTSTRAP SCHEMES}}

Our motivation to introduce bootstrap schemes emanates from findings in our
Monte-Carlo experiment, which suggest that the asymptotic distribution of $%
\left( nT\right) ^{1/2}$\textsl{\^{V}}$^{-1/2}\left( \widetilde{\beta }%
-\beta \right) $ does not appear to provide a good approximation of its
finite sample distribution. In such situations, the use of the bootstrap has
been advocated as it has been shown to improve the finite sample
performance. The general spatio-temporal dependence inherent in our model
suggests that a valid bootstrap mechanism may not to be easy to implement
since one of the basic requirements for its validity is that it has to
preserve the covariance structure of the data/model. Drawing analogies from
the time series literature, one might be tempted to use the block bootstrap\
($BB$) principle as it is not clear how the sieve bootstrap can be
implemented under cross-sectional dependence in the absence of a clear
ordering of the data. Applying a $BB$\ in both dimensions, however, would
also be sensitive to the particular ordering chosen by the practitioner and
be subject to the absence of weak stationarity, where the dependence
structure of say $\left( x_{p_{1},t},...,x_{p_{1}+m,t}\right) ^{\prime }$\
and $\left( x_{p_{1}+1,t},...,x_{p_{1}+m+1,t}\right) ^{\prime }$\ need not
be identical.

Avoiding the need to establish a particular ordering of the cross sectional
units, Gon\c{c}alves $\left( 2011\right) $\ proposes to apply a moving block
bootstrap ($MBB$) to the vector containing all individual observations for
each\textbf{\ }$t$, that is it only applies a $BB$ in the time dimension.\
The $MBB$, however, does require the choice of the block size and is known
to be sensitive to its choice in finite samples. In the absence of temporal
dependence, the block size equals one, and the approach is similar to
Hidalgo and Schafgans (2017).\ 

Here we propose valid bootstrap schemes with the interesting feature that
they are computationally simple (there is no need to estimate, either by
parametric or nonparametric methods, the time and/or cross-sectional
dependence of the error term) and do not require the choice of any bandwidth
parameter for its implementation, thereby avoiding any level of
arbitrariness.

Both bootstrap schemes considered are in the frequency domain. We recall the
DFT for generic sequences $\left\{ \varsigma _{pt}\right\} _{t=1}^{T}$, $%
p\geq 1$, as $\mathcal{J}_{\varsigma ,p}\left( \lambda _{j}\right) =\frac{1}{%
T^{1/2}}\sum_{t=1}^{T}\varsigma _{pt}e^{-it\lambda _{j}}$, \ \ $j=1,...,%
\widetilde{T}=\left[ T/2\right] $, \ $\lambda _{j}=\frac{2\pi j}{T}$, and
define its periodogram 
\begin{equation*}
\mathcal{I}_{\varsigma ,p}\left( \lambda _{j}\right) =\left\vert \mathcal{J}%
_{\varsigma ,p}\left( \lambda _{j}\right) \right\vert ^{2}\text{ \ \ }%
j=1,...,\widetilde{T}=[T/2]\text{, \ \ }p=1,...,n\text{.}
\end{equation*}

The first scheme, labelled the na\"{\i}ve bootstrap, imposes Condition $C4$\
under which the time dependence is assumed to be homogeneous across
individuals. We relax this assumption in the second scheme, labelled the
wild bootstrap, in line with Conditions $C1$ and $C2$.

For homogenous temporal dependence, therefore, we impose

\begin{description}
\item[Condition $\mathbf{C4}$] Homogeneous time dependence: $d_{k}(p)$\ and $%
c_{k}(p)$\ defined in Conditions $C1$ and $C2$ do not vary over $p.$
\end{description}

Denoting $\sigma _{u}^{2}\left( p\right) =Eu_{pt}^{2}$ and $f_{u,p}\left(
\lambda \right) $ the spectral density function of the sequence $\left\{
u_{pt}\right\} _{t\in \mathbb{Z}}$, for any $p=1,2,...$, Condition $C4$
ensures that%
\begin{equation}
\frac{f_{u,p}\left( \lambda \right) }{\sigma _{u}^{2}\left( p\right) }%
=:g_{u}\left( \lambda \right) \text{, \ \ }p=1,2,3,...\text{.}  \label{HTT}
\end{equation}%
That is, the spectral density function normalized by the variance does not
depend on $p$. This enables us to use the average periodogram of
standardized residuals in constructing the valid bootstrap model. What
really matters here is that the "correlation" structure is the same.

The na\"{\i}ve bootstrap, involves resampling from the residuals of the model%
\footnote{%
See also Remark \ref{DFTbootstrap} below.} and involves the following simple 
\emph{3 STEPS:}

\begin{description}
\item[\emph{STEP 1}] Obtain the residuals%
\begin{equation*}
\widehat{u}_{pt}=\widetilde{y}_{pt}-\widetilde{\beta }^{\prime }\widetilde{x}%
_{pt}\text{, \ }p=1,...,n;\text{ }t=1,...,T,
\end{equation*}
compute $\widetilde{\sigma }_{\widehat{u}}^{2}\left( p\right)
=T^{-1}\sum_{t=1}^{T}\widehat{u}_{pt}^{2}$, and obtain the standardized
residuals 
\begin{equation*}
\check{u}_{pt}=\widehat{u}_{pt}/\widetilde{\sigma }_{\widehat{u}}\left(
p\right) \text{.}
\end{equation*}

\item[\emph{STEP 2}] Denoting $\hat{U}_{t}=\left\{ \hat{u}_{pt}\right\}
_{p=1}^{n}$, do standard random sampling from the empirical distribution of
the residuals $\{\hat{U}_{t}\}_{t=1}^{T}$. That is, we assign probability $%
T^{-1}$ to each $n\times 1$ vector $\hat{U}_{t}$. Denote the bootstrap
sample by $\left\{ U_{t}^{\ast }\right\} _{t=1}^{T},$ where $U_{t}^{\ast
}=\left\{ u_{pt}^{\ast }\right\} _{p=1}^{n}.$ Compute the bootstrap analogue
of $\left( \text{\ref{model_1R}}\right) $ as%
\begin{equation*}
\mathcal{J}_{y\ast ,p}\left( \lambda _{j}\right) =\widetilde{\beta }^{\prime
}\mathcal{J}_{\widetilde{x},p}\left( \lambda _{j}\right) +\left( \frac{1}{n}%
\sum_{q=1}^{n}\mathcal{I}_{\check{u},q}\left( \lambda _{j}\right) \right)
^{1/2}\mathcal{J}_{u^{\ast },p}\left( \lambda _{j}\right)
\end{equation*}%
\ for $p=1,...,n$ \ and $j=1,...,T-1.$

\item[\emph{STEP 3}] Compute the corresponding bootstrap analogue of $\left( 
\text{\ref{beta_fef}}\right) $ as 
\begin{equation}
\widetilde{\beta }^{\ast }=\left( \sum_{p=1}^{n}\sum_{j=1}^{T-1}\mathcal{J}_{%
\widetilde{x},p}\left( \lambda _{j}\right) \mathcal{J}_{\widetilde{x}%
,p}^{\prime }\left( -\lambda _{j}\right) \right) ^{-1}\left(
\sum_{p=1}^{n}\sum_{j=1}^{T-1}\mathcal{J}_{\widetilde{x},p}\left( \lambda
_{j}\right) \mathcal{J}_{\widetilde{y}\ast ,p}\left( -\lambda _{j}\right)
\right) ,  \label{beta_fe**}
\end{equation}%
with $\mathcal{J}_{\widetilde{y}\ast ,p}\left( \lambda _{j}\right) =\mathcal{%
J}_{y\ast ,p}\left( \lambda _{j}\right) -\frac{1}{n}\sum_{q=1}^{n}\mathcal{J}%
_{y\ast ,q}\left( \lambda _{j}\right) $.
\end{description}

\begin{remark}
Since $\overline{\widehat{u}}_{p}=0$ there is no need the recenter in Step
1. The standardization of the residuals (the variance is not the same for
all individuals) is used in Step 2 to impose the appropriate dependence
structure on our bootstrap regression. As the bootstrap is done on the
vector containing all individual observations for each $t$\ there is no need
for standardization otherwise.
\end{remark}

\begin{remark}
We use the average periodogram of the standardized residuals to impose the
appropriate dependence structure on our bootstrap regression in Step 2. When
the time dependence is homogeneous among the cross-sectional units $\frac{1}{%
n}\sum_{q=1}^{n}\mathcal{I}_{\check{u},q}\left( \lambda _{j}\right) =$ $%
\sigma _{u}^{-2}(p)f_{u,p}\left( \lambda _{j}\right)
(1+o_{p}(1))=:g_{u}\left( \lambda _{j}\right) \left( 1+o_{p}\left( 1\right)
\right) ,$ see also (\ref{HTT}). That is, if the temporal dependence were
given by an AR(1) model, the right side becomes the spectral density
function of an AR(1) sequence, where the innovation sequence has variance
equal to 1. In addition, as we bootstrap from $\hat{u}_{pt}$, which are the
residuals, we ensure that the variance of $u_{pt}^{\ast }$ is that of $%
u_{pt} $.
\end{remark}

\begin{remark}
\label{DFTbootstrap}Alternatively, we could have used random sampling from
the normalized DFT of the residuals as considered by Hidalgo (2003). In that
case, denoting $T_{\widehat{u}}\left( \lambda _{j}\right) =\left\{ \mathcal{J%
}_{\widehat{u},p}\left( \lambda _{j}\right) /|\mathcal{J}_{\widehat{u}%
,p}\left( \lambda _{j}\right) |\right\} _{p=1}^{n}$,\ $T_{u^{\ast },p}\left(
\lambda _{j}\right) $\ form independent draws from the empirical
distribution of $\tilde{T}_{\widehat{u}}\left( \lambda _{j}\right) =(T_{%
\widehat{u}}\left( \lambda _{j}\right) -\bar{T}_{\widehat{u}})/\hat{\sigma}_{%
\mathcal{T}}$\ where $\bar{T}_{\widehat{u}}=\left[ T/2\right]
^{-1}\sum_{j=1}^{T/2}T_{\widehat{u}}\left( \lambda _{j}\right) $\ and $\hat{%
\sigma}_{\mathcal{T}}^{2}=\left[ T/2\right] ^{-1}\sum_{j=1}^{T/2}\left( 
\mathcal{T}_{\widehat{u}}\left( \lambda _{j}\right) -\mathcal{\bar{T}}%
\right) ^{2}.$\ The bootstrap analogue of $\left( \text{\ref{model_1R}}%
\right) $ would then be obtained using $\mathcal{J}_{y\ast ,p}\left( \lambda
_{j}\right) =\widetilde{\beta }^{\prime }\mathcal{J}_{\widetilde{x},p}\left(
\lambda _{j}\right) +\left( \frac{1}{n}\sum_{q=1}^{n}\mathcal{I}_{\check{u}%
,q}\left( \lambda _{j}\right) \right) ^{1/2}\widetilde{\sigma }_{\widehat{u}%
}\left( p\right) \mathcal{T}_{u^{\ast },p}\left( \lambda _{j}\right) $. Our
scheme uses Step 2, which has better finite sample properties as observed in
Hidalgo (2003).
\end{remark}

The key feature of this na\"{\i}ve bootstrap, is that there is no need to
choose any bandwidth parameter for its implementation. Under Condition $C4$%
,\ uniformly in $j=1,...,T-1$, we have that 
\begin{eqnarray*}
\mathcal{I}_{\check{u},p}\left( \lambda _{j}\right) &=&\widetilde{\sigma }_{%
\widehat{u}}^{-2}\left( p\right) \left\{ \mathcal{I}_{u,p}\left( \lambda
_{j}\right) +\left( \widetilde{\beta }-\beta \right) ^{2}\mathcal{I}%
_{x,p}\left( \lambda _{j}\right) +\left( \widetilde{\beta }-\beta \right) 
\mathcal{J}_{x,p}\left( \lambda _{j}\right) \mathcal{J}_{u,p}\left( -\lambda
_{j}\right) \right\} \\
&=&\sigma _{u}^{-2}\left( p\right) \mathcal{I}_{u,p}\left( \lambda
_{j}\right) \left( 1+o_{p}\left( 1\right) \right) ,
\end{eqnarray*}%
and 
\begin{eqnarray*}
&&E\mathcal{I}_{u,p}\left( \lambda _{j}\right) =f_{u,p}\left( \lambda
_{j}\right) \left( 1+o\left( 1\right) \right) \\
&&E^{\ast }\left( \mathcal{J}_{u^{\ast },p}\left( \lambda _{j}\right) 
\mathcal{J}_{u^{\ast },p}\left( -\lambda _{\ell }\right) \right) =0,\ \ 
\text{if }j\neq \ell ,\text{ }\sigma _{u}^{2}(p)\text{ otherwise} \\
&&\widetilde{\sigma }_{\widehat{u}}^{2}\left( p\right) =\sigma
_{u}^{2}\left( p\right) \left( 1+o_{p}\left( 1\right) \right) \text{.}
\end{eqnarray*}%
The last displayed expressions suggest that, under Condition $C4$, we can
consider \linebreak $\left( \frac{1}{n}\sum_{q=1}^{n}\mathcal{I}_{\check{u}%
,q}\left( \lambda _{j}\right) \right) ^{1/2}\mathcal{J}_{u^{\ast },p}\left(
\lambda _{j}\right) $\textbf{\ }as some type of wild bootstrap in the
frequency domain because under homogeneous time dependence%
\begin{eqnarray*}
E^{\ast }\left\vert \left( \frac{1}{n}\sum_{q=1}^{n}\mathcal{I}_{\check{u}%
,q}\left( \lambda _{j}\right) \right) ^{1/2}\mathcal{J}_{u^{\ast },p}\left(
\lambda _{j}\right) \right\vert ^{2} &=&\left( \sigma
_{u}^{-2}(p)f_{u,p}\left( \lambda _{j}\right) \right) \cdot \sigma
_{u}^{2}\left( p\right) .\left( 1+o_{p}\left( 1\right) \right) \\
&=&f_{u,p}\left( \lambda _{j}\right) \left( 1+o_{p}\left( 1\right) \right) 
\text{.}
\end{eqnarray*}

The following theorem is used to establish the validity of our na\"{\i}ve
bootstrap scheme.

\begin{theorem}
\label{ThmBoot}(Na\"{\i}ve Bootstrap) Under Conditions $C1-C4$, we have that
in probability, 
\begin{equation*}
\left( nT\right) ^{1/2}\left( \widetilde{\beta }^{\ast }-\widetilde{\beta }%
\right) \overset{d^{\ast }}{\rightarrow }\mathcal{N}\left( 0,\text{\textsl{V}%
}\right) \text{.}
\end{equation*}
\end{theorem}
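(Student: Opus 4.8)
The plan is to mirror the proof of Theorem \ref{ThmEst}, working in the frequency domain, and show that the bootstrap replica of $(nT)^{1/2}(\widetilde{\beta}^{\ast}-\widetilde{\beta})$ satisfies a conditional CLT with the same asymptotic variance $\text{\textsl{V}}$. Writing out $\widetilde{\beta}^{\ast}-\widetilde{\beta}$ from $\left(\text{\ref{beta_fe**}}\right)$, one has
\begin{equation*}
(nT)^{1/2}(\widetilde{\beta}^{\ast}-\widetilde{\beta})=\left(\frac{1}{nT}\sum_{p=1}^{n}\sum_{j=1}^{T-1}\mathcal{J}_{\widetilde{x},p}(\lambda_{j})\mathcal{J}_{\widetilde{x},p}^{\prime}(-\lambda_{j})\right)^{-1}\frac{1}{(nT)^{1/2}}\sum_{p=1}^{n}\sum_{j=1}^{T-1}\mathcal{J}_{\widetilde{x},p}(\lambda_{j})\left(\frac{1}{n}\sum_{q=1}^{n}\mathcal{I}_{\check{u},q}(\lambda_{j})\right)^{1/2}\mathcal{J}_{u^{\ast},p}(-\lambda_{j})\text{.}
\end{equation*}
By Proposition \ref{PropV1}(b) the bracketed matrix converges in probability to $\Sigma_{x}$ (this does not involve the bootstrap randomness), so it suffices to establish the conditional CLT for the second factor, call it $S_{nT}^{\ast}=(nT)^{-1/2}\sum_{p,j}\mathcal{J}_{\widetilde{x},p}(\lambda_{j})w_{j}^{1/2}\mathcal{J}_{u^{\ast},p}(-\lambda_{j})$ with $w_{j}=n^{-1}\sum_{q}\mathcal{I}_{\check{u},q}(\lambda_{j})$.

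First I would compute the bootstrap first and second moments of $S_{nT}^{\ast}$, conditional on the data. Since the $\{U_{t}^{\ast}\}$ are i.i.d.\ draws from the centered empirical distribution $\{\widehat{U}_{t}\}$, $E^{\ast}(u_{pt}^{\ast})=\overline{\widehat{u}}_{p}=0$, hence $E^{\ast}S_{nT}^{\ast}=0$. For the variance, the key point is that the bootstrap DFTs satisfy $E^{\ast}(\mathcal{J}_{u^{\ast},p}(-\lambda_{j})\mathcal{J}_{u^{\ast},q}(\lambda_{k}))=0$ for $j\neq k$ (because $\sum_{t}e^{it(\lambda_{j}-\lambda_{k})}=0$) and equals the empirical cross moment $\widehat{\varphi}_{\widehat{u}}(p,q)=T^{-1}\sum_{t}\widehat{u}_{pt}\widehat{u}_{qt}$ when $j=k$. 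Therefore
\begin{equation*}
E^{\ast}(S_{nT}^{\ast}S_{nT}^{\ast\prime})=\frac{1}{nT}\sum_{j=1}^{T-1}w_{j}\sum_{p,q=1}^{n}\mathcal{J}_{\widetilde{x},p}(\lambda_{j})\mathcal{J}_{\widetilde{x},q}^{\prime}(-\lambda_{j})\,\widehat{\varphi}_{\widehat{u}}(p,q)\text{.}
\end{equation*}
I would then show this converges in probability to $\Phi$. The argument parallels Proposition \ref{PropV1}(a): replace $\widehat{\varphi}_{\widehat{u}}(p,q)$ by $\varphi_{u}(p,q)$ up to $o_{p}(1)$ using Condition $C3$ and the consistency of $\widetilde{\beta}$; replace $w_{j}=n^{-1}\sum_{q}\mathcal{I}_{\check{u},q}(\lambda_{j})$ by $g_{u}(\lambda_{j})$ up to $o_{p}(1)$ uniformly in $j$, using Condition $C4$ together with the smoothing effect of averaging over $q$ (here the fact that $g_{u}(\lambda)=f_{u,p}(\lambda)/\sigma_{u}^{2}(p)$ does not depend on $p$ is exactly what makes $w_{j}$ a legitimate estimator of the common normalized spectral density); and finally note that $\frac{1}{nT}\sum_{j}g_{u}(\lambda_{j})\sum_{p,q}\mathcal{J}_{\widetilde{x},p}(\lambda_{j})\mathcal{J}_{\widetilde{x},q}^{\prime}(-\lambda_{j})\varphi_{u}(p,q)$ is a Riemann sum whose probability limit is $2\pi\lim_{n}n^{-1}\sum_{p,q}\varphi(p,q)\int_{-\pi}^{\pi}g_{u}(\lambda)f_{x,pq}(\lambda)d\lambda$, which by the convolution/separability identity $\left(\text{\ref{fz}}\right)$–$\left(\text{\ref{sep_1}}\right)$ equals $\Phi$. (The $\widetilde{x}$ demeaning contributes only lower-order terms since $n^{-1}\sum_{p}\mathcal{J}_{x,p}$ is negligible relative to the individual DFTs, as in Theorem \ref{ThmEst}.)

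Having pinned the conditional mean and variance, I would verify a conditional Lindeberg-type CLT for $S_{nT}^{\ast}$. The natural device is to condition on the data and treat $S_{nT}^{\ast}=\sum_{t=1}^{T}\zeta_{t}^{\ast}$ where $\zeta_{t}^{\ast}=(nT)^{-1/2}\sum_{p}\widetilde{x}_{pt}^{\flat}u_{pt}^{\ast}$ with $\widetilde{x}_{pt}^{\flat}$ the appropriate (frequency-weighted) linear combination of the regressor DFTs; since $\{U_{t}^{\ast}\}_{t=1}^{T}$ are conditionally i.i.d., $\{\zeta_{t}^{\ast}\}$ is a conditionally independent triangular array and one applies the Lindeberg CLT conditionally, checking the Lindeberg condition via a fourth-moment (Lyapunov) bound. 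The fourth-moment bound uses $E^{\ast}|u_{pt}^{\ast}|^{4}=T^{-1}\sum_{t}\widehat{u}_{pt}^{4}$, which is $O_{p}(1)$ uniformly in $p$ by Condition $C1$ (finite fourth moments) and consistency of $\widetilde{\beta}$, together with the weak cross-sectional dependence from $C3$; this is the same type of estimate that controls the fourth cumulant term in the proof of Theorem \ref{ThmEst}. I expect the main obstacle to be the uniform-in-$j$ control of $w_{j}-g_{u}(\lambda_{j})$ and the demonstration that, after weighting by $w_{j}$, the triangular array still satisfies Lindeberg: one must handle the fact that $w_{j}$ is itself data-dependent and random, so the clean conditional-independence structure is only available after first replacing $w_{j}$ by its deterministic limit $g_{u}(\lambda_{j})$ on an event of probability tending to one; establishing that replacement uniformly in $j$ (rather than merely in an averaged sense) is the delicate point, and it is here that Condition $C4$ and the averaging over the $n$ cross-sectional units do the essential work. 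Once the conditional CLT for $S_{nT}^{\ast}$ with limiting variance $\Phi$ is in hand, Slutsky combined with Proposition \ref{PropV1}(b) delivers $(nT)^{1/2}(\widetilde{\beta}^{\ast}-\widetilde{\beta})\overset{d^{\ast}}{\rightarrow}\mathcal{N}(0,\Sigma_{x}^{-1}\Phi\Sigma_{x}^{-1})=\mathcal{N}(0,\text{\textsl{V}})$ in probability, which is the claim.
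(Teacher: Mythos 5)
Your proposal follows essentially the same route as the paper's proof: the paper likewise reduces the claim to a conditional CLT for $(nT)^{-1/2}\sum_{p,j}\mathcal{J}_{\widetilde{x},p}\left( \lambda _{j}\right) \widehat{f}_{u}^{1/2}\left( \lambda _{j}\right) \mathcal{J}_{u^{\ast },p}\left( -\lambda _{j}\right) $, uses the orthogonality $E^{\ast }\left( \mathcal{J}_{u^{\ast },p}\left( \lambda _{j}\right) \mathcal{J}_{u^{\ast },q}\left( -\lambda _{k}\right) \right) =\widehat{\sigma }_{u,pq}\mathbf{1}\left( j=k\right) $, inverts the DFT so as to exploit the conditional independence over $t$ of the resampled vectors $U_{t}^{\ast }$, and verifies Lindeberg via the fourth-moment bound $E^{\ast }\left\vert u_{pt}^{\ast }\right\vert ^{4}=T^{-1}\sum_{t}\widehat{u}_{pt}^{4}$. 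The only organizational difference is that the paper first splits off the term carrying $\widehat{f}_{u}^{1/2}\left( \lambda _{j}\right) -f_{u}^{1/2}\left( \lambda _{j}\right) $ and shows it is $o_{p^{\ast }}\left( 1\right) $ by a nonnegativity-plus-first-moment argument (so only an averaged, not uniform-in-$j$, control of $w_{j}-g_{u}\left( \lambda _{j}\right) $ is ever required), whereas you defer that replacement to the variance calculation and therefore anticipate a uniformity issue that the paper's ordering avoids.
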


With the bootstrap cluster estimator of the asymptotic covariance, given by, 
\begin{equation}
\breve{\Phi}^{\ast }=\frac{1}{T}\sum_{j=1}^{T-1}\left\{ \left( \frac{1}{%
n^{1/2}}\sum_{p=1}^{n}\mathcal{J}_{\widetilde{x},p}\left( \lambda
_{j}\right) \mathcal{J}_{\hat{u}^{\ast },p}\left( -\lambda _{j}\right)
\right) \left( \frac{1}{n^{1/2}}\sum_{p=1}^{n}\mathcal{J}_{\widetilde{x}%
,p}^{\prime }\left( -\lambda _{j}\right) \mathcal{J}_{\hat{u}^{\ast
},p}\left( \lambda _{j}\right) \right) \right\}  \label{clus_2boot}
\end{equation}%
the next proposition establishes the consistency of the bootstrap cluster
estimator.

\begin{proposition}
\label{PropV1Boot}(Na\"{\i}ve Bootstrap) Under the assumptions of Theorem %
\ref{ThmBoot}, we have 
\begin{equation*}
\breve{\Phi}^{\ast }-\breve{\Phi}=o_{p^{\ast }}\left( 1\right) \text{.}
\end{equation*}
\end{proposition}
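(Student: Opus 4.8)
The plan is to mirror, in the bootstrap world, the argument used for Proposition \ref{PropV1}(a), showing that $\breve{\Phi}^{\ast}$ has the same probability limit as $\breve{\Phi}$ by controlling its conditional mean and variance. First I would write $\mathcal{J}_{\hat{u}^{\ast},p}(\lambda_j)=\big(\frac{1}{n}\sum_{q=1}^{n}\mathcal{I}_{\check{u},q}(\lambda_j)\big)^{1/2}\mathcal{J}_{u^{\ast},p}(\lambda_j)$ plus the demeaning correction coming from $\mathcal{J}_{\tilde{y}\ast,p}$, and observe that the demeaning in the cross-sectional dimension contributes a negligible term by the same arguments that handle the fixed-effect projection in the proof of Theorem \ref{ThmEst}. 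Then, conditional on the sample, the $\{U_t^{\ast}\}$ are i.i.d. draws with $E^{\ast}(u_{pt}^{\ast}u_{qt}^{\ast})=\widehat{\varphi}(p,q)$, the empirical cross-sectional covariance, so that $E^{\ast}\big(\mathcal{J}_{u^{\ast},p}(\lambda_j)\mathcal{J}_{u^{\ast},q}(-\lambda_j)\big)$ equals that same empirical covariance up to the usual $O(T^{-1})$ Fejér-kernel leakage, and is exactly zero across distinct frequencies for the i.i.d. bootstrap innovations.

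The computation of $E^{\ast}\breve{\Phi}^{\ast}$ then reduces to
\[
\frac{1}{T}\sum_{j=1}^{T-1}\Big(\frac{1}{n}\sum_{q=1}^{n}\mathcal{I}_{\check{u},q}(\lambda_j)\Big)\,\frac{1}{n}\sum_{p,q=1}^{n}\mathcal{J}_{\widetilde{x},p}(\lambda_j)\mathcal{J}_{\widetilde{x},q}^{\prime}(-\lambda_j)\,\widehat{\varphi}(p,q)\;+\;o_{p^{\ast}}(1),
\]
and I would show this converges to $\Phi$ by: (i) replacing $\frac{1}{n}\sum_q\mathcal{I}_{\check{u},q}(\lambda_j)$ by $g_u(\lambda_j)$ uniformly in $j$, using Condition $C4$ together with the uniform expansion for $\mathcal{I}_{\check{u},q}(\lambda_j)$ already displayed in the text and Lemma-level results from Appendix B on the average periodogram; (ii) recognising the remaining sum as a frequency-domain Riemann sum which, by $C1$–$C3$ and the separability identity $(\ref{cov_u})$, has the same limit as the expression $(\ref{v_1freq})$ defining $\Phi$ — here the $\widehat{\varphi}(p,q)$ can be replaced by $\varphi(p,q)$ because the $\widetilde{x}$-DFT bilinear form, weighted by $\varphi(p,q)$ which is summable by $(\ref{c3})$, is insensitive to the $o_p(1)$ estimation error in the empirical covariance. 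For the conditional variance $\operatorname{Var}^{\ast}(\breve{\Phi}^{\ast})$ I would expand the fourth-order product, use the finite-fourth-moment and fourth-cumulant conditions ($C1$(iii), bootstrapped), and the near-orthogonality of DFTs across Fourier frequencies, to get a bound of order $T^{-1}$ times a cross-sectional factor that is $O(1)$ by $(\ref{c3})$, hence $o_{p^{\ast}}(1)$.

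The main obstacle will be step (i): controlling $\frac{1}{n}\sum_{q=1}^{n}\mathcal{I}_{\check{u},q}(\lambda_j)$ \emph{uniformly over the $T-1$ Fourier frequencies} and simultaneously in probability over the original sampling, since the periodogram of a single series is not consistent and one is relying on the cross-sectional average together with homogeneity ($C4$) to tame it; the $\sqrt n$-averaging must beat the variance of each $\mathcal{I}_{\check u,q}$, and the residual-versus-error replacement $\check u_{pt}$ vs.\ $\sigma_u^{-1}(p)u_{pt}$ introduces terms involving $(\widetilde\beta-\beta)$ whose DFTs must be shown negligible uniformly in $j$. I expect this to be dispatched by invoking the relevant lemmas in Appendix B (the same ones underpinning Proposition \ref{PropV1}), plus a standard maximal-inequality or summation-by-parts argument exploiting $\sum_k k\,d_k<\infty$ from $C1$(i); once that uniform replacement is in hand, the rest is the routine conditional mean/variance bookkeeping sketched above.
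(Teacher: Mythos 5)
Your proposal matches the paper's proof in all essentials: the same decomposition of $\mathcal{J}_{\hat{u}^{\ast},p}$ into a leading term plus $(\widetilde{\beta }^{\ast }-\widetilde{\beta })$ corrections killed by Theorem \ref{ThmBoot}, Lemma \ref{Lem21} and Cauchy--Schwarz, the same conditional first/second-moment computation based on $E^{\ast }\left( \mathcal{J}_{u^{\ast },p}\left( j\right) \mathcal{J}_{u^{\ast },q}\left( -k\right) \right) =\widehat{\sigma }_{u,pq}\mathbf{1}\left( j=k\right) $ with $\widehat{\sigma }_{u,pq}=\varphi _{u}\left( p,q\right) \left( 1+o_{p}\left( 1\right) \right) $, and the same fourth-order cumulant control of the bootstrap variance (via Lemma \ref{cum_2}). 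The one ``obstacle'' you flag --- uniform-in-$j$ control of the averaged periodogram --- is dispatched in the paper not by a maximal inequality but by observing that the relevant remainder expressions are nonnegative, so it suffices that their first moments vanish, for which pointwise convergence (together with $\breve{\Phi}-\Phi =o_{p}\left( 1\right) $ from Proposition \ref{PropV1}, which lets one compare $E^{\ast }\breve{\Phi}^{\ast }$ to $\breve{\Phi}$ rather than to $\Phi $ directly) is enough.
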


The previous results can be extended to incorporate the more realistic
situation where the temporal dynamics might differ by individual, as allowed
by Conditions $C1$ and $C2.$ This bootstrap, labelled the wild bootstrap,
merges ideas from Hidalgo $\left( 2003\right) $ and Chan and Ogden $\left(
2009\right) $. As the DFT residuals are heterogeneous whilst independent
over the Fourier frequencies, it applies the wild-type bootstrap approach to
the increasing dimensional vector $\left\{ \mathcal{J}_{\hat{u},p}\left(
\lambda _{j}\right) \right\} _{p=1}^{n}$. It requires a modification of the
above bootstrap, which primarily involves replacing \emph{STEP 2. }For
completeness we provide all steps:

\begin{description}
\item[\emph{STEP 1}$^{\prime }$] Obtain the residuals%
\begin{equation*}
\widehat{u}_{pt}=\widetilde{y}_{pt}-\widetilde{\beta }^{\prime }\widetilde{x}%
_{pt}\text{, \ }p=1,...,n;\text{ }t=1,...,T.
\end{equation*}

\item[\emph{STEP 2}$^{\prime }$] Denote $\left\{ \eta _{j}\right\} _{j=1}^{%
\widetilde{T}}$ a sequence of independent identically distributed random
variables with mean zero and unit variance. We then compute the bootstrap
analogue of $\left( \text{\ref{model_1R}}\right) $ as%
\begin{equation*}
\mathcal{J}_{y\ast ,p}\left( \lambda _{j}\right) =\widetilde{\beta }^{\prime
}\mathcal{J}_{\widetilde{x},p}\left( \lambda _{j}\right) +\mathcal{J}_{\hat{u%
},p}\left( \lambda _{j}\right) \eta _{j}\text{, \ \ \ }\left\{ 
\begin{array}{l}
p=1,...,n \\ 
j=1,...,T-1\text{,}%
\end{array}%
\right.
\end{equation*}%
where $\mathcal{J}_{y\ast ,p}\left( \lambda _{j}\right) =\overline{\mathcal{J%
}_{y\ast ,p}\left( \lambda _{T-j}\right) }$ and $\eta _{j}=\eta _{T-j}$, for 
$j=\widetilde{T}+1,...,T-1$.

\item[\emph{STEP 3}$^{\prime }$] Compute the corresponding bootstrap
analogue of $\left( \text{\ref{beta_fef}}\right) $ as 
\begin{equation*}
\widetilde{\beta }^{\ast }=\left( \sum_{p=1}^{n}\sum_{j=1}^{T-1}\mathcal{J}_{%
\widetilde{x},p}\left( \lambda _{j}\right) \mathcal{J}_{\widetilde{x}%
,p}^{\prime }\left( -\lambda _{j}\right) \right) ^{-1}\left(
\sum_{p=1}^{n}\sum_{j=1}^{T-1}\mathcal{J}_{\widetilde{x},p}\left( \lambda
_{j}\right) \mathcal{J}_{\widetilde{y}\ast ,p}\left( -\lambda _{j}\right)
\right) ,
\end{equation*}%
with $\mathcal{J}_{\widetilde{y}\ast ,p}\left( \lambda _{j}\right) =\mathcal{%
J}_{y\ast ,p}\left( \lambda _{j}\right) -\frac{1}{n}\sum_{q=1}^{n}\mathcal{J}%
_{y\ast ,q}\left( \lambda _{j}\right) $.
\end{description}

\begin{remark}
For a discussion regarding the requirement that $\eta _{j}=\eta _{T-j}$ for $%
j=\widetilde{T}+1,...,T-1$., we refer to Hidalgo $\left( 2003\right) $%
.\medskip
\end{remark}

The validity of the wild bootstrap scheme follows from the following
proposition.

\begin{proposition}
\label{Pro_estboot}(Wild Bootstrap) Under Conditions $C1-C3,$ in
probability, 
\begin{equation*}
\left( nT\right) ^{1/2}\left( \widetilde{\beta }^{\ast }-\widetilde{\beta }%
\right) \overset{d^{\ast }}{\rightarrow }\mathcal{N}\left( 0,\text{\textsl{V}%
}\right)
\end{equation*}%
and $\breve{\Phi}^{\ast }-\breve{\Phi}=o_{p^{\ast }}\left( 1\right) .$
\end{proposition}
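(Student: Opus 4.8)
The plan is to mirror the structure used for the na\"{\i}ve bootstrap (Theorem \ref{ThmBoot} and Proposition \ref{PropV1Boot}), but now exploiting the wild multipliers $\{\eta_j\}$ instead of resampling. First I would write, as in Step $3^{\prime}$,
\begin{equation*}
\left(nT\right)^{1/2}\left(\widetilde{\beta}^{\ast}-\widetilde{\beta}\right)=\left(\frac{1}{nT}\sum_{p=1}^{n}\sum_{j=1}^{T-1}\mathcal{J}_{\widetilde{x},p}\left(\lambda_j\right)\mathcal{J}_{\widetilde{x},p}^{\prime}\left(-\lambda_j\right)\right)^{-1}\left(\frac{1}{\left(nT\right)^{1/2}}\sum_{p=1}^{n}\sum_{j=1}^{T-1}\mathcal{J}_{\widetilde{x},p}\left(\lambda_j\right)\mathcal{J}_{\widehat{u},p}\left(-\lambda_j\right)\eta_j\right),
\end{equation*}
where I have used that $\frac{1}{n}\sum_q\mathcal{J}_{\widetilde{x},q}(\lambda_j)\mathcal{J}_{\widehat{u},q}(-\lambda_j)$ centering carries over exactly as in the original estimator because the $\eta_j$ factor is common across $p$. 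The first factor converges to $\Sigma_x^{-1}$ by Proposition \ref{PropV1}(b), which is not affected by the bootstrap. So everything reduces to a bootstrap CLT for
$S_{nT}^{\ast}:=\left(nT\right)^{-1/2}\sum_{j=1}^{T-1}\eta_j\,\xi_{nj}$, where $\xi_{nj}:=\sum_{p=1}^{n}\mathcal{J}_{\widetilde{x},p}(\lambda_j)\mathcal{J}_{\widehat{u},p}(-\lambda_j)$.

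Conditional on the data, $S_{nT}^{\ast}$ is a sum of independent (across $j$, using the symmetry convention $\eta_j=\eta_{T-j}$ and pairing conjugate frequencies as in Hidalgo (2003)) mean-zero terms. Its conditional variance is $\left(nT\right)^{-1}\sum_{j=1}^{T-1}\xi_{nj}\xi_{nj}^{\ast}$, which up to a harmless substitution of $\widehat{u}$ for $u$ (handled exactly as in the proof of Proposition \ref{PropV1}(a), since $\widetilde{\beta}-\beta=O_p((nT)^{-1/2})$) equals $\breve{\Phi}$ computed with the true errors; by Proposition \ref{PropV1}(a) this is $\Phi+o_p(1)$. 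Hence $\mathrm{Var}^{\ast}(S_{nT}^{\ast})\overset{p}{\to}\Phi$, giving the announced sandwich variance $\mathsf{V}=\Sigma_x^{-1}\Phi\Sigma_x^{-1}$. To obtain asymptotic normality I would verify a Lindeberg (or Lyapunov) condition for the triangular array $\{\left(nT\right)^{-1/2}\eta_j\xi_{nj}\}_j$: since the $\eta_j$ have finite variance (and, if one wants Lyapunov directly, one may take $E|\eta_j|^{2+\delta}<\infty$), it suffices to show $\left(nT\right)^{-1}\max_{j}|\xi_{nj}|^2\overset{p}{\to}0$ together with the variance convergence — i.e. no single frequency dominates. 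That negligibility of the maximal DFT cross-term is exactly the type of uniform bound already established in Appendix B for the lemmas underlying Proposition \ref{PropV1}, so I would invoke those lemmas.

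For the second claim, $\breve{\Phi}^{\ast}-\breve{\Phi}=o_{p^{\ast}}(1)$, I would write $\mathcal{J}_{\widehat{u}^{\ast},p}(\lambda_j)=\mathcal{J}_{\widehat{u},p}(\lambda_j)\eta_j+\mathcal{J}_{\widetilde{x},p}(\lambda_j)'(\widetilde{\beta}-\widetilde{\beta}^{\ast})$, substitute into \eqref{clus_2boot}, and expand. The leading term is $T^{-1}\sum_j\eta_j^2\,(n^{-1/2}\sum_p\mathcal{J}_{\widetilde{x},p}(\lambda_j)\mathcal{J}_{\widehat{u},p}(-\lambda_j))(n^{-1/2}\sum_p\mathcal{J}_{\widetilde{x},p}'(-\lambda_j)\mathcal{J}_{\widehat{u},p}(\lambda_j))$; taking its bootstrap expectation replaces $\eta_j^2$ by $1$ and returns $\breve{\Phi}$ (with $\widehat{u}$ in place of residuals, again negligibly different), while its bootstrap variance is $O_{p}(T^{-1})$ because the $\eta_j^2$ are independent across $j$ and, by the uniform DFT bounds, $\sup_j\|n^{-1/2}\sum_p\mathcal{J}_{\widetilde{x},p}(\lambda_j)\mathcal{J}_{\widehat{u},p}(-\lambda_j)\|=o_p((T/\log T)^{1/2})$ or similar. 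The cross terms involving $\widetilde{\beta}^{\ast}-\widetilde{\beta}=O_{p^{\ast}}((nT)^{-1/2})$ are $o_{p^{\ast}}(1)$ by Cauchy--Schwarz together with $\widetilde{\Sigma}_x=O_p(1)$ and the just-established bound on $\breve{\Phi}$.

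The main obstacle, as in Theorem \ref{ThmBoot}, is controlling the difference between the residual DFTs $\mathcal{J}_{\widehat{u},p}$ and the true-error DFTs $\mathcal{J}_{u,p}$ \emph{uniformly over the $T-1$ frequencies and summed over the $n$ cross-sectional units simultaneously}, while keeping track of the cross-sectional dependence encoded in $\varphi(p,q)$ through Condition $C3$; this is precisely where one needs the Appendix B lemmas, and the wild multiplier structure does not simplify this part relative to the na\"{\i}ve case. Everything else — the conditional variance calculation and the Lindeberg verification — is a routine consequence of the independence of the $\{\eta_j\}$ and of Proposition \ref{PropV1}.
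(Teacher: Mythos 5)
Your proposal is correct and its core coincides with the paper's argument: both reduce the problem to showing that the conditional variance of $(nT)^{-1/2}\sum_{j}\eta _{j}\sum_{p}\mathcal{J}_{\widetilde{x},p}\left( \lambda _{j}\right) \mathcal{J}_{\widehat{u},p}\left( -\lambda _{j}\right) $ converges in probability to $\Phi $, which is exactly the statement $\breve{\Phi}\overset{P}{\rightarrow }\Phi $ of Proposition \ref{PropV1}$\left( \mathbf{a}\right) $ once the residual-versus-true-error substitution and the $\mathcal{J}_{\overline{x},\cdot }$ demeaning terms are disposed of. The one genuine divergence is in how conditional asymptotic normality is then obtained. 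The paper's proof invokes normality of the $\eta _{j}$ (``because $\eta _{j}$ are normally distributed it suffices to show\ldots ''), so the conditional law of the bootstrap score is \emph{exactly} Gaussian given the data and nothing beyond the variance convergence is needed; you instead keep the multipliers general (iid, mean zero, unit variance, which is all Step $2^{\prime }$ actually states) and verify a Lindeberg condition, which costs you the extra negligibility requirement $\left( nT\right) ^{-1}\max_{j}\left\vert \xi _{nj}\right\vert ^{2}\overset{P}{\rightarrow }0$. That requirement does follow from a union bound combined with the fourth-moment computations underlying Lemma \ref{B_5} (one has $E\left\vert \xi _{nj}\right\vert ^{4}=O\left( n^{2}\right) $, whence the probability of the maximum exceeding $\epsilon nT$ is $O\left( T^{-1}\right) $), so your route is sound and in fact covers a strictly larger class of multipliers than the proof in the paper, at the price of this additional step. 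Two small caveats: first, your Lyapunov variant and your bootstrap-variance bound for $\breve{\Phi}^{\ast }$ (the term involving $\mathrm{Var}^{\ast }\left( \eta _{j}^{2}\right) $) implicitly require $E\eta _{j}^{4}<\infty $, which is not among the stated conditions on $\eta _{j}$ and should be made explicit (it is automatic under the Gaussianity the paper's proof assumes); second, the paper's displayed proof establishes only the convergence in distribution and leaves the $\breve{\Phi}^{\ast }-\breve{\Phi}=o_{p^{\ast }}\left( 1\right) $ claim implicit, so your sketch of that part is a complement to, rather than a departure from, the published argument.
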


We conclude with the stating the validity of the standardised bootstrap
statistic

\begin{corollary}
\label{Coll_estboot}Under Conditions $C1-C3$, we have that in probability, 
\begin{equation*}
\left( nT\right) ^{1/2}\text{\textsl{\^{V}}}^{\ast -1/2}\left( \widetilde{%
\beta }^{\ast }-\widetilde{\beta }\right) \overset{d^{\ast }}{\rightarrow }%
\mathcal{N}\left( 0,I\right) \text{,}
\end{equation*}%
where \textsl{\^{V}}$^{\ast }=\widetilde{\Sigma }_{x}^{-1}\breve{\Phi}^{\ast
}\widetilde{\Sigma }_{x}^{-1}$\textit{.}
\end{corollary}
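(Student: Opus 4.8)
The plan is to deduce Corollary \ref{Coll_estboot} from Proposition \ref{Pro_estboot} together with Proposition \ref{PropV1}(b) by a routine Slutsky-type argument in the bootstrap probability space. The key observation is that $\widetilde{\Sigma }_{x}$ is \emph{not} re-computed in the bootstrap world: it depends only on the original regressors $\left\{ \mathcal{J}_{\widetilde{x},p}\left( \lambda _{j}\right) \right\} $, so by Proposition \ref{PropV1}(b) we already have $\widetilde{\Sigma }_{x}\overset{p}{\rightarrow }\Sigma _{x}>0$, hence $\widetilde{\Sigma }_{x}^{-1}\overset{p}{\rightarrow }\Sigma _{x}^{-1}$ by continuity of matrix inversion at a nonsingular point. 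Combined with $\breve{\Phi}^{\ast }-\breve{\Phi}=o_{p^{\ast }}\left( 1\right) $ from Proposition \ref{Pro_estboot} and $\breve{\Phi}\overset{p}{\rightarrow }\Phi $ from Proposition \ref{PropV1}(a), one gets $\breve{\Phi}^{\ast }\overset{p^{\ast }}{\rightarrow }\Phi $ (in probability, with respect to the original sample), and therefore $\text{\textsl{\^{V}}}^{\ast }=\widetilde{\Sigma }_{x}^{-1}\breve{\Phi}^{\ast }\widetilde{\Sigma }_{x}^{-1}\overset{p^{\ast }}{\rightarrow }\Sigma _{x}^{-1}\Phi \Sigma _{x}^{-1}=\text{\textsl{V}}$.

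First I would record that $\text{\textsl{V}}>0$ — this follows from $\Sigma _{x}>0$ in Condition $C2$ and $\Phi \geq 0$ being (by \eqref{sep_1}) the limit of a sum of long-run variance contributions; strictly speaking one only needs that $\text{\textsl{V}}$ is nonsingular so that $\text{\textsl{V}}^{-1/2}$ exists, and this is implicit in the statement of Theorem \ref{ThmEst} and Corollary \ref{Pro_est}. Given nonsingularity of $\text{\textsl{V}}$, the map $A\mapsto A^{-1/2}$ is continuous in a neighbourhood of $\text{\textsl{V}}$ (on the cone of symmetric positive-definite matrices), so from $\text{\textsl{\^{V}}}^{\ast }\overset{p^{\ast }}{\rightarrow }\text{\textsl{V}}$ we obtain $\text{\textsl{\^{V}}}^{\ast -1/2}\overset{p^{\ast }}{\rightarrow }\text{\textsl{V}}^{-1/2}$, in probability with respect to the sample.

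Then I would finish by Slutsky's theorem applied conditionally. By Proposition \ref{Pro_estboot}, $\left( nT\right) ^{1/2}\left( \widetilde{\beta }^{\ast }-\widetilde{\beta }\right) \overset{d^{\ast }}{\rightarrow }\mathcal{N}\left( 0,\text{\textsl{V}}\right) $ in probability; pairing this with the (conditionally) constant, convergent-in-probability sequence $\text{\textsl{\^{V}}}^{\ast -1/2}$ gives, along almost every sequence of samples for which both limits hold (and such a subsequence argument is the standard device for handling "convergence in probability" bootstrap statements),
\begin{equation*}
\left( nT\right) ^{1/2}\text{\textsl{\^{V}}}^{\ast -1/2}\left( \widetilde{\beta }^{\ast }-\widetilde{\beta }\right) \overset{d^{\ast }}{\rightarrow }\text{\textsl{V}}^{-1/2}\mathcal{N}\left( 0,\text{\textsl{V}}\right) =\mathcal{N}\left( 0,\text{\textsl{V}}^{-1/2}\text{\textsl{V}}\text{\textsl{V}}^{-1/2}\right) =\mathcal{N}\left( 0,I\right) ,
\end{equation*}
which is the claim.

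The only mildly delicate point — and the one I would treat carefully rather than as routine — is the bookkeeping of the two nested modes of convergence: every displayed convergence here is "$\overset{d^{\ast }}{\rightarrow }$ in probability" (i.e.\ the conditional law converges in the bounded-Lipschitz metric to the stated Gaussian, where the convergence of that metric is in outer probability over the original data). The clean way to combine a conditional CLT with a consistency statement for a conditionally-random scaling matrix is to pass to an arbitrary subsequence, extract a further subsequence along which $\widetilde{\Sigma }_{x}\rightarrow \Sigma _{x}$ and $\breve{\Phi}^{\ast }-\breve{\Phi}\rightarrow 0$ and $\breve{\Phi}\rightarrow \Phi $ all hold almost surely, apply the deterministic Slutsky argument along that subsequence, and then invoke the subsequence characterisation of convergence in probability to conclude. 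This is entirely standard (cf.\ the arguments underlying Corollary \ref{Pro_est}), so I would state it briefly and refer to the proofs of Theorem \ref{ThmBoot} and Proposition \ref{Pro_estboot} in the appendix for the measure-theoretic details of what "$\overset{d^{\ast }}{\rightarrow }$ in probability" means in this paper.
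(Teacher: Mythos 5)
Your proposal is correct and is exactly the ``standard'' argument the paper has in mind: the authors' own proof of Corollary \ref{Coll_estboot} is a one-line appeal to Theorem \ref{ThmBoot} and Propositions \ref{PropV1}, \ref{PropV1Boot} and \ref{Pro_estboot}, and what you have written out (consistency of $\widetilde{\Sigma }_{x}$ and $\breve{\Phi}^{\ast }$, continuity of $A\mapsto A^{-1/2}$ at the nonsingular \textsl{V}, and a conditional Slutsky step handled via subsequences) is precisely the content being suppressed. No gaps; your care with the two nested modes of convergence is a welcome addition rather than a deviation.
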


\begin{proof}
The proof is standard after Theorem \ref{ThmBoot} and Propositions \ref%
{PropV1}, \ref{PropV1Boot} and \ref{Pro_estboot}.
\end{proof}

\section{(CONDITIONAL)\ HETEROSKEDASTICITY}

In this section, we extend our model to permit general forms of
heteroskedasticity. Specifically, we begin by considering%
\begin{equation}
y_{pt}=\beta ^{\prime }\acute{x}_{pt}+\eta _{p}+\alpha _{t}+v_{pt}\text{, \
\ }p=1,...,n,\text{ \ \ \ }t=1,...,T\text{,}  \label{model 2}
\end{equation}%
where 
\begin{equation}
v_{pt}=:\sigma _{1}\left( w_{p}\right) \sigma _{2}\left( \varrho _{t}\right)
u_{pt}\text{.}  \label{vpt}
\end{equation}%
The error $\left\{ u_{pt}\right\} _{t\in \mathbb{Z}}$, $p\in \mathbb{N}^{+}$%
satisfies the same regularity conditions given in Condition $C1$, exhibiting
general spatial and temporal dependence. The sequences $\left\{
w_{p}\right\} _{p\in \mathbb{N}}$\ and $\left\{ \varrho _{t}\right\} _{t\in 
\mathbb{Z}}$, which can even be functions of the fixed effects, are not
required to be mutually independent of the regressors\textbf{\ }$\left\{ 
\acute{x}_{pt}\right\} _{t\in \mathbb{Z}}$\textbf{, }$p\in N^{+}$\textbf{.}
Without loss of generality we will normalize $\sigma _{u,p}^{2}=1$ in
Condition $C1$; $\sigma _{u,p}^{2}$ is not separately identified from $%
\sigma _{1}^{2}(w_{p})$ and $\sigma _{2}^{2}\left( \varrho _{t}\right) $.
The error $\left\{ u_{pt}\right\} _{t\in \mathbb{Z}}$, $p\in \mathbb{N}^{+}$
is assumed to be independent of the regressors $\left\{ \acute{x}%
_{pt}\right\} _{t\in \mathbb{Z}}$, $p\in N^{+}$, $\left\{ w_{p}\right\}
_{p\in \mathbb{N}^{+}}$ and $\left\{ \varrho _{t}\right\} _{t\in \mathbb{Z}}$%
, see also footnote \ref{FN1}. Here the errors $v_{pt}$\ permit conditional
heteroskedasticity. This is an extension of the so-called \emph{groupwise
heteroskedasticity}, where observations belonging to different groups have
distinct variances, see for instance Greene $\left( 2018\right) $. This type
of heteroskedasticity is not uncommon in applications such as in development
economics, where it has been suggested that observations within a villages
or strata would have the same (conditional) variance while differences over
villages or strata exist (Deaton, $1996$), that is the variance depends on
some specific village variable(s).

Before we modify our Condition $C3$ to ensure we can permit this
generalisation, it is useful to introduce some notation. We shall denote $%
\left\{ \ddot{x}_{pt}\right\} _{t\in \mathbb{Z}},p\in \mathbb{N}^{+}$, the
sequence that applies the usual transformation to remove the fixed effects,
see (\ref{not}), to the sequence $\left\{ \acute{x}_{pt}\right\} _{t\in 
\mathbb{Z}},p\in \mathbb{N}^{+},$ such that 
\begin{equation*}
\ddot{x}_{pt}=\acute{x}_{pt}-\frac{1}{n}\sum_{q=1}^{n}\acute{x}_{qt}-\frac{1%
}{T}\sum_{s=1}^{T}\acute{x}_{ps}+\frac{1}{nT}\sum_{q=1}^{n}\sum_{s=1}^{T}%
\acute{x}_{qs}.
\end{equation*}%
Observe that as it happens with $\tilde{x}_{pt},$ we can take $E\left( \ddot{%
x}_{pt}\right) =0.$ Our new Condition $C3^{\prime }$ is given next.

\begin{description}
\item[Condition $\mathbf{C3}^{\prime }$] \emph{For all }$p\in \mathbb{N}^{+}$%
\emph{, the sequence }$\left\{ u_{pt}\right\} _{t\in \mathbb{Z}}$\emph{\ is
independent of }$\left\{ \acute{x}_{pt}\right\} _{t\in \mathbb{Z}},$\emph{\ }%
$\left\{ w_{p}\right\} _{p\in \mathbb{N}}$ \emph{and }$\left\{ \varrho
_{t}\right\} _{t\in \mathbb{Z}}$ \emph{and}%
\begin{equation}
0<\max_{1\leq p\leq n}\sum_{q=1}^{n}\left\Vert \varphi \left( p,q\right)
\right\Vert <\infty \text{\emph{,}}  \label{c3prime}
\end{equation}%
\emph{where }$\varphi \left( p,q\right) :=\varphi _{u}\left( p,q\right)
\varphi _{\ddot{x}}\left( p,q\right) $ \emph{and} 
\begin{equation*}
\varphi _{\ddot{x}}\left( p,q\right) =\text{\textsl{Cov}}\left( \sigma
_{1}\left( w_{p}\right) \ddot{x}_{pt};\sigma _{1}\left( w_{q}\right) \ddot{x}%
_{qt}^{\prime }\right) \text{\emph{, for any} }p,q\geq 1\text{.}
\end{equation*}
\end{description}

The requirement given in (\ref{c3prime}) limits the combined cross-sectional
dependence in $v_{pt}$ and $\ddot{x}_{pt}$ ($\acute{x}_{pt}$) needed to
ensure the existence of a consistent estimator of the \textquotedblleft 
\emph{average}\textquotedblright\ long-run variance of the sequences $%
\left\{ z_{pt}=:v_{pt}\ddot{x}_{pt}\right\} _{t\in \mathbb{Z}}$, $p\in 
\mathbb{N}^{+}$ in this framework. This is an obvious extension of our
previous Condition $C3$, since our expression for $\Phi $ under our
generalisation becomes%
\begin{eqnarray}
\Phi &=&:\lim_{n\rightarrow \infty }\lim_{T\rightarrow \infty }\frac{1}{nT}%
E\left\{ \left( \sum_{p=1}^{n}\sum_{t=1}^{T}\ddot{x}_{pt}v_{pt}\right)
\left( \sum_{p=1}^{n}\sum_{t=1}^{T}\ddot{x}_{pt}^{\prime }v_{pt}\right)
\right\}  \notag \\
&=&\lim_{n\rightarrow \infty }\lim_{T\rightarrow \infty }\frac{1}{nT}%
\sum_{p,q=1}^{n}\sum_{t,s=1}^{T}E\left( \left\{ \sigma _{2}\left( \varrho
_{t}\right) \sigma _{2}\left( \varrho _{s}\right) \right\} \left\{ \sigma
_{1}\left( w_{p}\right) \ddot{x}_{pt}\right\} \left\{ \sigma _{1}\left(
w_{q}\right) \ddot{x}_{qs}^{\prime }\right\} \right) E\left(
u_{pt}u_{qs}\right) .  \label{v_1mod}
\end{eqnarray}

We shall now give some examples. We can allow 
\begin{equation}
\left( \mathbf{i}\right) \text{ \ \ }\acute{x}_{pt}=:x_{pt}+h_{1}\left(
w_{p}\right) +h_{2}\left( \varrho _{t}\right) ~\text{and }\left( \mathbf{ii}%
\right) \text{ \ }\acute{x}_{pt}=:h\left( w_{p};\varrho _{t}\right) x_{pt}%
\text{,}  \label{xprime}
\end{equation}%
for given $h_{1},$ $h_{2}$\ and $h$, where $x_{pt}$ satisfies Condition $C2$%
. It is clear that under the additive structure in $\mathbf{(i)}$, the
transformed variables that account for the fixed effects, recall (\ref{not}%
), satisfy 
\begin{equation*}
\widetilde{\acute{x}}_{pt}=:\ddot{x}_{pt}\equiv \widetilde{x}_{pt}\text{, }%
p=1,...,n;~t=1,...,T\text{,}
\end{equation*}%
which renders this, potentially, the most straightforward setting. In this
case we have%
\begin{eqnarray*}
\Phi &=&:\lim_{n\rightarrow \infty }\lim_{T\rightarrow \infty }\frac{1}{nT}%
E\left\{ \left( \sum_{p=1}^{n}\sum_{t=1}^{T}\ddot{x}_{pt}v_{pt}\right)
\left( \sum_{p=1}^{n}\sum_{t=1}^{T}\ddot{x}_{pt}^{\prime }v_{pt}\right)
\right\} \\
&=&\lim_{n\rightarrow \infty }\lim_{T\rightarrow \infty }\frac{1}{nT}%
E\left\{ \left( \sum_{p=1}^{n}\sum_{t=1}^{T}x_{pt}\sigma _{1}\left(
w_{p}\right) \sigma _{2}\left( \varrho _{t}\right) u_{pt}\right) \left(
\sum_{p=1}^{n}\sum_{t=1}^{T}x_{pt}^{\prime }\sigma _{1}\left( w_{p}\right)
\sigma _{2}\left( \varrho _{t}\right) u_{pt}\right) \right\} \\
&=&\lim_{n\rightarrow \infty }\lim_{T\rightarrow \infty }\frac{1}{nT}%
\sum_{p,q=1}^{n}\sum_{t,s=1}^{T}E\left( \text{\H{x}}_{pt}\text{\H{x}}%
_{qs}^{\prime }\right) E\left( u_{pt}u_{qs}\right) ,
\end{eqnarray*}%
where \H{x}$_{pt}=x_{pt}\sigma _{1}\left( w_{p}\right) \sigma _{2}\left(
\varrho _{t}\right) $. The behaviour of the second moments of \H{x}$_{pt}$
are essentially those of $x_{pt}$ because 
\begin{equation*}
Cov\left( \text{\H{x}}_{pt},\text{\H{x}}_{qs}\right) =E\left( \sigma
_{1}\left( w_{p}\right) \sigma _{1}\left( w_{q}\right) \right) E\left(
\sigma _{2}\left( \varrho _{t}\right) \sigma _{2}\left( \varrho _{s}\right)
\right) Cov\left( x_{pt},x_{qs}\right) \text{.}
\end{equation*}

\begin{remark}
In the last displayed assumption we have only assumed that 
\begin{equation*}
E\left( x_{pt}\mid w_{p},\varrho _{t}\right) =0;\text{ \ \ }E\left(
x_{pt}x_{qs}\mid w_{p},w;\varrho _{t},\varrho _{s}\right) =E\left(
x_{pt}x_{qs}\right) =:Cov\left( x_{pt},x_{qs}\right)
\end{equation*}%
so some type of dependence between $x_{pt}$ and $\left( w_{p},\varrho
_{t}\right) $ is still allowed.
\end{remark}

With the multiplicative structure in $\left( \mathbf{ii}\right) $, it is
basically the same since 
\begin{eqnarray*}
\Phi &=&:\lim_{n\rightarrow \infty }\lim_{T\rightarrow \infty }\frac{1}{nT}%
E\left\{ \left( \sum_{p=1}^{n}\sum_{t=1}^{T}\ddot{x}_{pt}v_{pt}\right)
\left( \sum_{p=1}^{n}\sum_{t=1}^{T}\ddot{x}_{pt}^{\prime }v_{pt}\right)
\right\} \\
&=&\lim_{n\rightarrow \infty }\lim_{T\rightarrow \infty }\frac{1}{nT}%
E\left\{ \left( \sum_{p=1}^{n}\sum_{t=1}^{T}x_{pt}h\left( w_{p};\varrho
_{t}\right) \sigma _{1}\left( w_{p}\right) \sigma _{2}\left( \varrho
_{t}\right) u_{pt}\right) \left( \sum_{p=1}^{n}\sum_{t=1}^{T}x_{pt}^{\prime
}h\left( w_{p};\varrho _{t}\right) \sigma _{1}\left( w_{p}\right) \sigma
_{2}\left( \varrho _{t}\right) u_{pt}\right) \right\} \\
&=&\lim_{n\rightarrow \infty }\lim_{T\rightarrow \infty }\frac{1}{nT}%
\sum_{p,q=1}^{n}\sum_{t,s=1}^{T}E\left( \text{\H{x}}_{pt}\text{\H{x}}%
_{qs}^{\prime }\right) E\left( u_{pt}u_{qs}\right) ,
\end{eqnarray*}%
where now \H{x}$_{pt}=h\left( w_{p};\varrho _{t}\right) \sigma _{1}\left(
w_{p}\right) \sigma _{2}\left( \varrho _{t}\right) x_{pt}$,and $\left\vert
Cov\left( \text{\H{x}}_{pt},\text{\H{x}}_{qs}\right) \right\vert \leq
K\left\vert Cov\left( x_{pt},x_{qs}\right) \right\vert $ using Markov
inequality.\ The same caveats mentioned in the last remark apply in this
case.

We now turn to the consistent estimator of the \textquotedblleft \emph{%
average}\textquotedblright\ long-run variance of the sequences $\left\{
z_{pt}=:v_{pt}\ddot{x}_{pt}\right\} _{t\in \mathbb{Z}}$, $p\in \mathbb{N}%
^{+} $ in this framework (recognizing that we have established the necessary
regularity conditions for its existence). Following a rescaling of our
regressors, 
\begin{equation*}
\dot{x}_{pt}=:\ddot{x}_{pt}\sigma _{1}\left( w_{p}\right) \sigma _{2}\left(
\varrho _{t}\right) \text{,}
\end{equation*}%
for given $\sigma _{1}\left( w_{p}\right) \sigma _{2}\left( \varrho
_{t}\right) ,$ our estimator for $\Phi ,$ see also $\left( \ref{v_estimate}%
\right) ,$ becomes 
\begin{equation}
\breve{\Phi}=\frac{1}{T}\sum_{j=1}^{T-1}\left\{ \left( \frac{1}{n^{1/2}}%
\sum_{p=1}^{n}\mathcal{J}_{\widetilde{\dot{x}},p}\left( \lambda _{j}\right) 
\mathcal{J}_{\hat{u},p}\left( -\lambda _{j}\right) \right) \left( \frac{1}{%
n^{1/2}}\sum_{p=1}^{n}\mathcal{J}_{\widetilde{\dot{x}},p}^{\prime }\left(
-\lambda _{j}\right) \mathcal{J}_{\hat{u},p}\left( \lambda _{j}\right)
\right) \right\} \text{,}  \label{v_estimate_robust}
\end{equation}%
where $\hat{u}_{pt}:=\hat{v}_{pt}/\left( \hat{\sigma}_{1}\left( w_{p}\right) 
\hat{\sigma}_{2}(\varrho _{t}\right) )$ and $\widetilde{\dot{x}}_{pt}=\hat{%
\sigma}_{1}\left( w_{p}\right) \hat{\sigma}_{2}(\varrho _{t})\widetilde{%
\acute{x}}_{pt}.$ Implementation of this estimator only requires a
consistent estimator of $\sigma _{2}^{2}\left( \varrho _{t}\right) $ (up to
unknown scale of proportionality), and a natural estimator we can use is 
\begin{equation*}
\widehat{\sigma }_{2}^{2}\left( \varrho _{t}\right) =\frac{1}{n}%
\sum_{p=1}^{n}\hat{v}_{pt}^{2}\text{.}
\end{equation*}%
The estimator for $\sigma _{1}^{2}\left( w_{p}\right) $, $\hat{\sigma}%
_{1}^{2}\left( w_{p}\right) $, indeed cancels out when considering the
product $\mathcal{J}_{\widetilde{\dot{x}},p}\left( \lambda _{j}\right) 
\mathcal{J}_{\hat{u},p}\left( -\lambda _{j}\right) $, as 
\begin{eqnarray*}
\mathcal{J}_{\widetilde{\dot{x}},p}\left( \lambda _{j}\right) \mathcal{J}_{%
\widehat{u},p}\left( -\lambda _{j}\right) &=&\frac{1}{T^{1/2}}\sum_{t=1}^{T}%
\widetilde{\acute{x}}_{pt}\hat{\sigma}_{1}\left( w_{p}\right) \hat{\sigma}%
_{2}\left( \varrho _{t}\right) e^{-it\lambda _{j}}\sum_{t=1}^{T}\frac{%
\widehat{v}_{pt}}{\widehat{\sigma }_{2}\left( \varrho _{t}\right) \hat{\sigma%
}_{1}\left( w_{p}\right) }e^{-it\lambda _{j}} \\
&=&\frac{1}{T^{1/2}}\sum_{t=1}^{T}\widetilde{\acute{x}}_{pt}\widehat{\sigma }%
_{2}\left( \varrho _{t}\right) e^{-it\lambda _{j}}\sum_{t=1}^{T}\frac{%
\widehat{v}_{pt}}{\widehat{\sigma }_{2}\left( \varrho _{t}\right) }%
e^{-it\lambda _{j}}
\end{eqnarray*}%
Moreover, this result shows that when $\sigma _{2}\left( \varrho _{t}\right) 
$\ is a constant, our results in Section 2 and 3 continue to hold true. That
is, our estimators in previous section are robust to groupwise
heteroskedasticity in the cross-sectional unit, a result supported by our
Monte-Carlo simulations in Table 4 in the next section.

The intuition of the validity of this estimator comes form the standard
observation that 
\begin{equation*}
\frac{\widehat{\sigma }_{2}\left( \varrho _{t}\right) }{\sigma
_{2}^{2}\left( \varrho _{t}\right) }\overset{P}{\rightarrow }1,
\end{equation*}%
so that 
\begin{equation*}
\widehat{u}_{pt}=\frac{\widehat{v}_{pt}}{\widehat{\sigma }_{2}\left( \varrho
_{t}\right) }\simeq \frac{v_{pt}}{\widehat{\sigma }_{2}\left( \varrho
_{t}\right) }=\frac{v_{pt}}{\sigma _{2}\left( \varrho _{t}\right) }\left(
1+o_{p}\left( 1\right) \right) =:\sigma _{1}\left( w_{p}\right) u_{pt}\left(
1+o_{p}\left( 1\right) \right) \text{,}
\end{equation*}%
and 
\begin{equation*}
\frac{\widehat{v}_{pt}}{\widehat{\sigma }_{2}\left( \varrho _{t}\right)
\sigma _{1}\left( w_{p}\right) }=:u_{pt}\left( 1+o_{p}\left( 1\right) \right)
\end{equation*}%
from the above arguments. Of course the details can be lengthy, but
otherwise has been done in other contexts many times.

Our bootstrap algorithms also require some obvious and minimal change. The
only adjustment to the wild bootstrap algorithm relates to the use of the
robust estimator of $\Phi $ provided in (\ref{v_estimate_robust}). For the na%
\"{\i}ve bootstrap, a straightforward modification involves the following
steps

\begin{description}
\item[\emph{STEP 1}$^{\prime \prime }$] Obtain the residuals%
\begin{equation*}
\widehat{v}_{pt}=\widetilde{y}_{pt}-\widetilde{\beta }^{\prime }\widetilde{%
\acute{x}}_{pt}\text{, \ }p=1,...,n;\text{ }t=1,...,T,
\end{equation*}%
compute $\widehat{\sigma _{1}^{2}(w_{p})\sigma _{2}^{2}(\varrho _{t})}%
=T^{-1}\sum_{t=1}^{T}\widehat{v}_{pt}^{2}\cdot n^{-1}\sum_{p=1}^{n}\widehat{v%
}_{pt}^{2}$, and obtain the standardized residuals%
\begin{equation*}
\widehat{u}_{pt}=\widehat{v}_{pt}/\widehat{\sigma _{1}(w_{p})\sigma
_{2}(\varrho _{t})}
\end{equation*}

\item[\emph{STEP 2}$^{\prime \prime }$] Denoting $\hat{U}_{t}=\left\{ \hat{u}%
_{pt}\right\} _{p=1}^{n}$, do standard random sampling from the empirical
distribution of the residuals $\{\hat{U}_{t}\}_{t=1}^{T}$. That is, we
assign probability $T^{-1}$ to each $n\times 1$ vector $\hat{U}_{t}$. Denote
the bootstrap sample by $\left\{ U_{t}^{\ast }\right\} _{t=1}^{T},$ where $%
U_{t}^{\ast }=\left\{ u_{pt}^{\ast }\right\} _{p=1}^{n}.$ Let $V_{t}^{\ast
}=\left\{ \widehat{\sigma _{1}(w_{p})\sigma _{2}(\varrho _{t})}u_{pt}^{\ast
}\right\} _{p=1}^{n}$Compute the bootstrap analogue of $\left( \text{\ref%
{model_1R}}\right) $ as%
\begin{equation*}
\mathcal{J}_{y\ast ,p}\left( \lambda _{j}\right) =\widetilde{\beta }^{\prime
}\mathcal{J}_{\widetilde{\acute{x}},p}\left( \lambda _{j}\right) +\left( 
\frac{1}{n}\sum_{q=1}^{n}\mathcal{I}_{\hat{u},q}\left( \lambda _{j}\right)
\right) ^{1/2}\mathcal{J}_{v^{\ast },p}\left( \lambda _{j}\right)
\end{equation*}%
\ for $p=1,...,n$ \ and $j=1,...,T-1.$

\item[\emph{STEP 3}$^{\prime \prime }$] Compute the corresponding bootstrap
analogue of $\left( \text{\ref{beta_fef}}\right) $ as 
\begin{equation*}
\widetilde{\beta }^{\ast }=\left( \sum_{p=1}^{n}\sum_{j=1}^{T-1}\mathcal{J}_{%
\widetilde{\acute{x}},p}\left( \lambda _{j}\right) \mathcal{J}_{\widetilde{%
\acute{x}},p}^{\prime }\left( -\lambda _{j}\right) \right) ^{-1}\left(
\sum_{p=1}^{n}\sum_{j=1}^{T-1}\mathcal{J}_{\widetilde{\acute{x}},p}\left(
\lambda _{j}\right) \mathcal{J}_{\widetilde{y}\ast ,p}\left( -\lambda
_{j}\right) \right) ,
\end{equation*}%
with $\mathcal{J}_{\widetilde{y}\ast ,p}\left( \lambda _{j}\right) =\mathcal{%
J}_{y\ast ,p}\left( \lambda _{j}\right) -\frac{1}{n}\sum_{q=1}^{n}\mathcal{J}%
_{y\ast ,q}\left( \lambda _{j}\right) $
\end{description}

\begin{remark}
Step 2$^{\prime \prime }$ assumes that the temporal dependence of the $%
n\times 1$ vector $\{v_{pt}/(\sigma _{1}(w_{p})\sigma _{2}(\varrho
_{t}))\}_{p=1}^{n}$ is homogeneous so we can use the average periodogram to
impose proper dependence structure on $u_{pt}^{\ast }$ (drawings from the
empirical distribution of $\{\hat{v}_{pt}/(\widehat{\sigma _{1}(w_{p})\sigma
_{2}(\varrho _{t})})\}_{p=1}^{n}$).\medskip
\end{remark}

We now discuss the scenario where the conditional moment of the error term
depends on the regressors $\acute{x}_{pt}=:x_{pt}$\ themselves, i.e.,%
\begin{equation*}
y_{pt}=\beta ^{\prime }\acute{x}_{pt}+\eta _{p}+\alpha _{t}+v_{pt},\text{
with }v_{pt}=:\sigma (\acute{x}_{pt})u_{pt}
\end{equation*}%
As mentioned in the introduction, this would require us to estimate the
conditional expectation, $\sigma ^{2}\left( \acute{x}_{pt}\right) ,$\
nonparametrically. Several methods are available such as the Kernel
regression method or sieve estimation. As this approach would require the
selection of a bandwidth parameter which we set out to avoid in this paper,
we do not consider this in detail although we outline how to proceed.
Regardless of the approach used, we anticipate that the estimator would be
pretty accurate as the number of observations in large panel data will
normally be huge. For instance in a typical data set, with $T=20$\ and $%
n=1000,$\ we can use $20,000$\ observations to estimate the nonparametric
function. The estimator for $\Phi ,$\ see also $\left( \ref%
{v_estimate_robust}\right) ,$\ becomes 
\begin{equation*}
\breve{\Phi}=\frac{1}{T}\sum_{j=1}^{T-1}\left\{ \left( \frac{1}{n^{1/2}}%
\sum_{p=1}^{n}\mathcal{J}_{\widetilde{\acute{x}},p}\left( \lambda
_{j}\right) \mathcal{J}_{\hat{u},p}\left( -\lambda _{j}\right) \right)
\left( \frac{1}{n^{1/2}}\sum_{p=1}^{n}\mathcal{J}_{\acute{x},p}^{\prime
}\left( -\lambda _{j}\right) \mathcal{J}_{\hat{u},p}\left( \lambda
_{j}\right) \right) \right\} \text{,}
\end{equation*}%
where $\hat{u}_{pt}:=\hat{v}_{pt}/\hat{\sigma}\left( \acute{x}_{pt}\right) $%
\ and $\widetilde{\dot{x}}_{pt}=\hat{\sigma}\left( \acute{x}_{pt}\right) 
\widetilde{x}_{pt}.$\ For the associated na\"{\i}ve bootstrap procedure, we
can proceed as above where $\widehat{\sigma _{1}(w_{p})\sigma _{2}(\varrho
_{t})}$\ is replaced by $\hat{\sigma}\left( \acute{x}_{pt}\right) .$

\section{\textbf{FINITE SAMPLE BEHAVIOUR}}

In this section, we discuss the finite sample performance of our
cluster-based inference procedure in the presence of cross-sectional and
temporal dependence of unknown form. We contrast this performance with the
HAC-based inference procedure proposed by Driscoll and Kraay (1989), which
unlike ours, requires the choice of smoothing parameters that may be
arbitrary and erroneous. We also provide evidence of the potential finite
sample improvements of our frequency domain bootstrap schemes and implement
the MBB time domain bootstrap to the vector containing all individual
observations for each $t$. Our frequency domain approaches have the benefit
that they do not rely on the choice of any smoothing parameter or require an
ordering of cross-sectional units, which, as we argued before, may be
arbitrary and erroneous. Another benefit of our estimator we address in our
simulations is the fact that our estimator permits heterogeneity in the
temporal dependence. In our simulations, we also consider a multiplicative
error structure that permits groupwise heteroskedasticity and we reveal the
robustness of our estimator to this setting.

In our Monte-Carlo experiments, we first consider the following data
generating process%
\begin{equation*}
y_{pt}=\alpha _{t}+\eta _{p}+\beta x_{pt}+u_{pt}\text{ for }p=1,...,n\text{
and }t=1,..,T.
\end{equation*}%
The time fixed effects $\alpha _{t}$ and individual fixed effects $\eta _{p}$
are drawn independently $(\alpha _{t}\sim IIDN(1,1)$ and $\eta _{p}\sim
IIDN(1,1)$) and are held fixed across replications and without loss of
generality $\beta $ is set equal to zero. The independently drawn errors and
regressors are postulated to exhibit a variety of scenarios for the temporal
and cross-sectional dependence that are assumed to be the same for
simplicity.

To evaluate the performance of our proposed cluster estimator, we analyze
the empirical size and power for testing the significance of our
parameter,~\ $H_{0}:\beta =0$ against $H_{A}:\beta \neq 0$, at the nominal
5\% level for various pairs of $n$ and $T$ using 5,000 simulations. In
addition to presenting the rejection rates based on the asymptotic
distribution of the Wald statistic $nT\hat{\beta}_{FE}^{\prime }\hat{V}^{-1}%
\hat{\beta}_{FE},$\ with $\widehat{V}=:\widetilde{\Sigma }_{x}^{-1}\breve{%
\Phi}\widetilde{\Sigma }_{x}^{-1}$\ where $\breve{\Phi}$\ is defined in (\ref%
{v_estimate}) (or equivalently the asymptotic t-test as $\beta $ is scalar),
we present rejection rates based on the empirical distribution of the
bootstrapped test statistic%
\begin{equation*}
nT\left( \hat{\beta}_{FE}^{\ast }-\hat{\beta}_{FE}\right) ^{\prime }\left[ 
\QTR{sl}{\hat{V}}^{\ast }\right] ^{-1}\left( \hat{\beta}_{FE}^{\ast }-\hat{%
\beta}_{FE}\right) ,
\end{equation*}%
where $\hat{\beta}_{FE}^{\ast }$ and $\hat{V}^{\ast }$\ are the bootstrapped
estimators of $\beta $ and $V$ defined in Section 3\textbf{. }As inference
based on the asymptotic distribution might not provide a good approximation
to the finite sample one, this allows us to assess the finite sample
improvements our bootstrap schemes may yield.

We compare the finite sample performance of our cluster-based inference
procedure to the HAC based inference procedure and select the bandwidth
parameter, denoted $m_{T}$, using the parametric AR(1) plug-in method
suggested in Andrews $\left( 1991\right) .$\footnote{$m_{T}$ is chosen to be
upward rounded integers.}\textbf{\ }This lag window is designed to minimize
(approximately) the mean square of the standard error.\footnote{%
Kiefer and Vogelsang (2002) discuss the use of HAC estimators with bandwidth
equal to the sample size $(b=1)$. This bandwidth free approach does come at
the cost of power relative to Andrews' popular data driven optimal bandwidth
selection, see also\ Vogelsang (2012).} For the HAC based inference we
provide rejection rates of the Wald statistic $nT\hat{\beta}_{FE}^{\prime }%
\hat{V}_{m_{T}}^{-1}\hat{\beta}_{FE}$ based on the asymptotic critical
values (asy) and the critical values based on the fixed-b asymptotics (fixb)
of Kiefer and Vogelsang (2005) as this is shown to lead to more reliable
inference, see also Vogelsang $(2012)$. With $\widehat{V}_{m_{T}}=:%
\widetilde{\Sigma }_{x}^{-1}\hat{\Phi}_{m_{T}}\widetilde{\Sigma }_{x}^{-1},$ 
$\hat{\Phi}_{m_{T}}$ is defined as%
\begin{equation*}
\hat{\Phi}_{m_{T}}=\frac{1}{nT}\sum_{p=1}^{n}\sum_{q=1}^{n}\sum_{t=1}^{T}%
\sum_{s=1}^{T}K\left( \frac{\left\vert t-s\right\vert }{m_{T}}\right) \hat{z}%
_{pt}\hat{z}_{qs}^{\prime }\text{,}
\end{equation*}%
where $\hat{z}_{pt}=\widetilde{x}_{pt}\widehat{u}_{pt}$ and $K(h)=\left(
1-\left\vert h\right\vert \right) \mathbf{1}\left( \left\vert h\right\vert
\leq 1\right) $ is the Bartlett kernel. The fixed-b asymptotic distribution
is non-standard and our critical values are obtained by simulation.\footnote{%
Let $W_{q}(r)$ denote a $q$ dimensional vector of independent standard
Wiener processes and define $\tilde{W}_{q}(r)=W_{q}\left( r\right)
-rW_{q}\left( 1\right) .$ The limiting distribution of the t-test is $%
W_{1}(1)/\sqrt{C_{1}}$ with $C_{q}=\frac{2}{b}\int_{0}^{1}\tilde{W}_{q}(r)%
\tilde{W}_{q}(r)^{\prime }dr$ $-\frac{1}{b}\int_{0}^{1-b}\left[ \tilde{W}%
_{q}(r+b)\tilde{W}_{q}(r)^{\prime }+\tilde{W}_{q}(r)\tilde{W}%
_{q}(r+b)^{\prime }\right] dr$ given the use of the Bartlet kernel, where $%
b\in (0,1]$ with $m_{T}=bT$ (see Theorem 4, Vogelsang, 2012); the limiting
distribution of the Wald test is $W_{q}(1)^{\prime }C_{q}^{-1}W_{q}(1).$ We
obtain the critical values using 500,000 simulations.}

We also provide critical values for HAC based inference that rely on the
pairs moving block bootstrap proposed by Gon\c{c}alves\ $\left( 2011\right)
. $ She obtained bootstrapped samples $z_{it}^{\ast }=(y_{it}^{\ast
},x_{it}^{\ast \prime })^{\prime }$ by arranging $k$ resampled blocks of $%
\ell $ observations from the set of $T-\ell +1$\ overlapping blocks $\left\{
B_{1,\ell },..,B_{T-\ell +1,\ell }\right\} $\ with $B_{t,\ell }=\left\{
z_{t,n},z_{t+1,n},..,z_{t+\ell -1,n}\right\} $\ and $z_{t,n}=\left(
z_{1t},..,z_{nt}\right) ^{\prime }$ in sequence (for notational simplicity $%
T=k\ell $). When $\ell =1$\ this corresponds to the standard iid bootstrap
on $\left\{ z_{t,n}\right\} _{t=1}^{T}$. The MMB based critical value are
based on the standardized test statistic $T\left( \hat{\beta}_{FE}^{\ast }-%
\hat{\beta}_{FE}\right) ^{\prime }\allowbreak \left[ \QTR{sl}{\hat{V}}_{\ell
}^{\ast }\right] ^{-1}\allowbreak \left( \hat{\beta}_{FE}^{\ast }-\hat{\beta}%
_{FE}\right) .$ Here $\QTR{sl}{\hat{V}}_{\ell }^{\ast }=\left( \widetilde{%
\Sigma }_{x}^{\ast }\right) ^{-1}\breve{\Phi}_{\ell }^{\ast }\left( 
\widetilde{\Sigma }_{x}^{\ast }\right) ^{-1}$, $\widetilde{\Sigma }%
_{x}^{\ast }=\frac{1}{nT}\tsum\nolimits_{p=1}^{n}\tsum\nolimits_{t=1}^{T}%
\widetilde{x}_{pt}^{\ast }\widetilde{x}_{pt}^{\ast }$ and 
\begin{equation*}
\breve{\Phi}_{\ell }^{\ast }=\frac{1}{k}\tsum\nolimits_{j=1}^{k}\left( \ell
^{-1/2}\tsum\nolimits_{t=1}^{\ell }n^{-1}\hat{s}_{n,\left( j-1\right) \ell
+t}^{\ast }\right) \left( \ell ^{-1/2}\tsum\nolimits_{t=1}^{\ell }n^{-1}\hat{%
s}_{n,\left( j-1\right) \ell +t}^{\ast }\right) ^{\prime },
\end{equation*}%
where $\hat{s}_{nt}^{\ast }=\tsum\nolimits_{p=1}^{n}\widetilde{x}_{pt}^{\ast
}\left( \tilde{y}_{pt}^{\ast }-\tilde{x}_{pt}^{\ast \prime }\hat{\beta}%
_{FE}^{\ast }\right) $ with $\widetilde{y}_{pt}^{\ast }=y_{pt}^{\ast }-%
\overline{y}_{p\cdot }^{\ast }-\overline{y}_{\cdot t}^{\ast }+\overline{%
\overline{y}}_{\cdot \cdot }^{\ast }$\ and $\widetilde{x}_{pt}^{\ast
}=x_{pt}^{\ast }-\overline{x}_{p\cdot }^{\ast }-\overline{x}_{\cdot t}^{\ast
}+\overline{\overline{x}}_{\cdot \cdot }^{\ast }$ (see also G\"{o}tze and K%
\"{u}nsch, 1996). The block size used is given by the integer part of the
automatic bandwidth chosen by the Andrews (1991) as proposed by Gon\c{c}%
alves\ $\left( 2011\right) .$

\subsection{Simulations with Homogeneous Time Dependence}

In the first set of simulations, we assume that the time dependence is
homogenous among individuals $p=1,..,n$. In particular, we assume that the
error and regressors are mutually independent, homoscedastic, first order
auto regressive random variables with $\rho =0.7$\ or $\rho =0.9.$ The error
term, therefore, takes the form%
\begin{equation*}
u_{pt}=\rho u_{p,t-1}+\sqrt{1-\rho ^{2}}\eta _{pt}\text{, with }\rho =0.7,0.9
\end{equation*}%
where $\eta _{pt}$ characterizes the spatial dependence inherent in the
error.\footnote{%
We generated the spatial data with $49+T$\ periods and take the last $T$\
periods as our sample using $0$\ as the starting value.} We consider both a
weak and strong cross-sectional dependence scenarios for $u_{pt}$ $(\eta
_{pt})$. To describe the cross-sectional dependence, we follow Lee and
Robinson $\left( 2013\right) $ and draw random locations for individual
units along a line, denoted $s=\left( s_{1},...s_{n}\right) ^{\prime }$ with 
$s_{p}\sim IIDU[0,n]$ for $p=1,..,n$. Using the linear time dependence
representation, $\eta _{pt}=\sigma _{p}\left( \sum_{\ell =1}^{\infty
}c_{\ell }\left( p\right) e_{\ell t}\right) $ with $e_{\ell t}\sim
IIDN(0,1), $ we set $c_{\ell }(p)=(1+|s_{\ell }-s_{p}|_{+})^{-10}$ to permit
weak dependence; $\sigma _{p}$ is such that $Var(\eta _{pt})=1$. For the
strong spatial dependence setting, we use $c_{\ell }(p)=(1+|s_{\ell
}-s_{p}|_{+})^{-0.7}$ instead, see also Hidalgo and Schafgans (2017). The
same discussion holds for the independently drawn, strictly exogenous
regressor $x_{pt}$, where, to allow for some time heterogeneity, we may,
without loss of generality add $\mu _{t},$ which is independently drawn $%
\left( \mu _{t}\sim IIDN(1,1)\right) .$

In Table \ref{MC1}, we report the empirical size for testing the
significance of $\beta $ at the 5\% level of significance based on our
cluster estimator of the variance of $\tilde{\beta}$ in the columns labelled 
$HS$ (Cluster). In addition to presenting the rejection rates based on the
asymptotic critical values (asy), we report the empirical size based on the
na\"{\i}ve bootstrap (nb), and the wild bootstrap (nb). The empirical size
based on the HAC based inference procedure proposed by Driscoll and Kraay
are reported in the columns labelled $DK$ (HAC). For the HAC based
inference, we provide rejection rates based on the asymptotic critical
values (asy), the critical values based on the fixed-b asymptotics (fixb) of
Kiefer and Vogelsang (2005), and Gon\c{c}alves'\textbf{\ }$\left(
2011\right) $ MBB (mbb). We used the parametric AR(1) plug-in method
suggested by Andrews $\left( 1991\right) $ to determine the window lag $%
m_{T} $ and the block length $\ell $.\renewcommand{\thetable}{\arabic{table}}%
\begin{table}[tbp] \centering%
\caption{Monte Carlo Simulations with Homogeneous Time Dependence \newline Empirical size of test for significance of $\beta$}%
\label{MC1}%
\begin{tabular}{|c|cccccc|cccccc|}
\hline
Spatial & \multicolumn{6}{|c}{Weak Spatial Dependence} & 
\multicolumn{6}{|c|}{Strong Spatial Dependence} \\ 
Dependence & \multicolumn{6}{|c}{} & \multicolumn{6}{|c|}{} \\ \hline
\multicolumn{1}{|l|}{\ $\quad $Estimator} & \multicolumn{3}{|c}{$\,HS$
(Cluster)} & \multicolumn{3}{|c}{$DK$ (HAC)} & \multicolumn{3}{|c}{$\,HS$
(Cluster)} & \multicolumn{3}{|c|}{$DK$ (HAC)} \\ \hline
\multicolumn{1}{|l|}{} & asy & nb & wb & \multicolumn{1}{|c}{asy} & fixb & 
mbb & asy & nb & wb & \multicolumn{1}{|c}{asy} & fixb & mbb \\ \hline
\multicolumn{1}{|l|}{} & \multicolumn{11}{|c}{} &  \\ 
\multicolumn{1}{|l|}{$(n,T)$} & \multicolumn{12}{|c|}{Time Dependence:
AR(1), $\rho =0.7$} \\ 
\multicolumn{1}{|l|}{$(50,16)$} & .180 & .074 & .134 & .253 & .163 & .028 & 
.177 & .068 & .133 & .261 & .176 & .030 \\ 
\multicolumn{1}{|l|}{$(50,32)$} & .126 & .067 & .091 & .192 & .131 & .042 & 
.129 & .056 & .091 & .210 & .148 & .043 \\ 
\multicolumn{1}{|l|}{$(50,64)$} & .080 & .054 & .068 & .128 & .092 & .049 & 
.091 & .050 & .076 & .158 & .119 & .056 \\ 
\multicolumn{1}{|l|}{$(50,128)$} & .067 & .049 & .062 & .108 & .084 & .056 & 
.068 & .046 & .057 & .116 & .089 & .060 \\ 
\multicolumn{1}{|l|}{$(50,256)$} & .055 & .048 & .055 & .087 & .073 & .058 & 
.060 & .051 & .050 & .096 & .083 & .065 \\ 
\multicolumn{1}{|l|}{} &  &  &  &  &  &  &  &  &  &  &  &  \\ 
\multicolumn{1}{|l|}{$(100,16)$} & .172 & .070 & .120 & .249 & .153 & .033 & 
.183 & .073 & .134 & .261 & .174 & .031 \\ 
\multicolumn{1}{|l|}{$(100,32)$} & .122 & .057 & .094 & .185 & .126 & .050 & 
.121 & .053 & .088 & .200 & .143 & .037 \\ 
\multicolumn{1}{|l|}{$(100,64)$} & .082 & .056 & .070 & .132 & .098 & .064 & 
.096 & .055 & .084 & .153 & .110 & .054 \\ 
\multicolumn{1}{|l|}{$(100,128)$} & .065 & .047 & .056 & .108 & .082 & .066
& .072 & .052 & .060 & .114 & .091 & .062 \\ 
\multicolumn{1}{|l|}{$(100,256)$} & .058 & .050 & .063 & .088 & .074 & .065
& .062 & .054 & .058 & .089 & .078 & .059 \\ 
\multicolumn{1}{|l|}{} &  &  &  &  &  &  &  &  &  &  &  &  \\ 
\multicolumn{1}{|l|}{$(n,T)$} & \multicolumn{12}{|c|}{Time Dependence:\
AR(1), $\rho =0.9$} \\ 
\multicolumn{1}{|l|}{$(50,16)$} & .320 & .131 & .276 & .410 & .258 & .009 & 
.312 & .106 & .257 & .415 & .279 & .013 \\ 
\multicolumn{1}{|l|}{$(50,32)$} & .242 & .097 & .189 & .327 & .209 & .013 & 
.260 & .093 & .201 & .368 & .246 & .022 \\ 
\multicolumn{1}{|l|}{$(50,64)$} & .168 & .058 & .107 & .261 & .169 & .026 & 
.174 & .068 & .124 & .281 & .195 & .037 \\ 
\multicolumn{1}{|l|}{$(50,128)$} & .111 & .057 & .084 & .192 & .132 & .046 & 
.115 & .059 & .089 & .199 & .147 & .050 \\ 
\multicolumn{1}{|l|}{$(50,256)$} & .081 & .055 & .067 & .142 & .107 & .062 & 
.085 & .055 & .069 & .149 & .114 & .061 \\ 
\multicolumn{1}{|l|}{} &  &  &  &  &  &  &  &  &  &  &  &  \\ 
\multicolumn{1}{|l|}{$(100,16)$} & .316 & .125 & .254 & .414 & .255 & .007 & 
.302 & .130 & .253 & .400 & .268 & .011 \\ 
\multicolumn{1}{|l|}{$(100,32)$} & .252 & .084 & .204 & .350 & .224 & .017 & 
.242 & .086 & .173 & .344 & .229 & .015 \\ 
\multicolumn{1}{|l|}{$(100,64)$} & .174 & .067 & .118 & .249 & .167 & .026 & 
.174 & .069 & .139 & .269 & .188 & .033 \\ 
\multicolumn{1}{|l|}{$(100,128)$} & .112 & .054 & .091 & .181 & .131 & .052
& .118 & .060 & .083 & .198 & .140 & .056 \\ 
\multicolumn{1}{|l|}{$(100,256)$} & .075 & .049 & .068 & .132 & .096 & .057
& .088 & .047 & .071 & .146 & .115 & .061 \\ \hline
\end{tabular}%
\end{table}%

The results from Table \ref{MC1} reveal that our cluster based inference
performs remarkably well even in the presence of strong cross sectional
dependence for moderately large panels. As before, the rejection rates based
on the asymptotic critical values tend to be closer to the nominal rejection
rates as $n$ and $T$ increase. The finite sample performance using these
asymptotic critical values does suffer, in particular, from $T$ being small,
more so when the temporal dependence is stronger.\textbf{\ }This suggests
that the cluster variance's finite sample performance, in particular,
appears to require larger $T$, in order for us to be able to rely on the
asymptotic critical value. Nevertheless, finite sample improvements in
inference can be made using either frequency domain bootstrap schemes as
rejection rates based on them are typically closer to the nominal rejection
rates, with the differences typically smaller as sample sizes increase.
Given that we assume the temporal dynamics to be the same for all
individuals in this simulation, both bootstrap schemes are valid. The na%
\"{\i}ve bootstrap approach tends to perform better in the sense of
providing a size closer to the nominal rejection rate.

Our cluster based inference, using the na\"{\i}ve bootstrap for small
panels, suggests large improvements in size relative to HAC based inference.
While the use of fixed-b asymptotic critical values for HAC based inference
does indeed improve its performance, in accordance with Vogelsang, $\left(
2012\right) $, the gains in improvement in size achieved by our cluster
based estimator remain significant and are larger when the temporal or
spatial dependence is stronger. Our cluster based inference, however, does
not necessarily perform superior to the HAC based inference that use the
critical values based on Gon\c{c}alves' pairwise MBB. Her approach indeed
performs very well in this setting where the temporal dependence is
homogenous across individuals. As we will see in Table \ref{MC3} relaxing
this assumption, which is more realistic, does reveal a marked improvement
of our cluster based performance over HAC based inference using the MBB. But
even in the homogenous setting, it should be noted that the MBB approach is
sensitive to the chosen blocksize, and its selection here was appropriate
given the imposed AR(1) temporal dependence (which is unknown in practical
applications). Contrary to the MBB we do not need to choose a block size.

In Table \ref{MC2}, we present the empirical power of our test for the
significance of the slope when $\beta =0.1$ for a selection of $(n,T)$ pairs
and compare the performance of our cluster-based inference procedure to the
HAC\ based inference procedure proposed by Driscoll and Kraay as before.%
\begin{table}[t] \centering%
\caption{Monte Carlo Simulations with Homogeneous Time Dependence \newline Empirical Power of test for signifcance of $\beta$ when $\beta=0.1$}%
\label{MC2}%
\begin{tabular}{|c|cccccc|cccccc|}
\hline
Spatial & \multicolumn{6}{|c}{Weak Spatial Dependence} & 
\multicolumn{6}{|c|}{Strong Spatial Dependence} \\ 
Dependence & \multicolumn{6}{|c}{} & \multicolumn{6}{|c|}{} \\ \hline
\multicolumn{1}{|l|}{\ $\quad $Estimator} & \multicolumn{3}{|c}{$\,HS$
(Cluster)} & \multicolumn{3}{|c}{$DC$ (HAC)} & \multicolumn{3}{|c}{$\,HS$
(Cluster)} & \multicolumn{3}{|c|}{$DK$ (HAC)} \\ \hline
\multicolumn{1}{|l|}{} & asy & nb & wb & \multicolumn{1}{|c}{asy} & fixb & 
mbb & asy & nb & wb & \multicolumn{1}{|c}{asy} & fixb & mbb \\ \hline
\multicolumn{1}{|l|}{} & \multicolumn{11}{|c}{} &  \\ 
\multicolumn{1}{|l|}{$(n,T)$} & \multicolumn{12}{|c|}{Time Dependence:
AR(1), $\rho =0.7$} \\ 
\multicolumn{1}{|l|}{$(50,64)$} & .852 & .794 & .830 & .919 & .882 & .802 & 
.243 & .158 & .212 & .337 & .277 & .162 \\ 
\multicolumn{1}{|l|}{$(50,128)$} & .980 & .971 & .979 & .992 & .986 & .978 & 
.386 & .322 & .357 & .465 & .428 & .356 \\ 
\multicolumn{1}{|l|}{$(50,256)$} & 1.00 & 1.00 & 1.00 & 1.00 & 1.00 & 1.00 & 
.549 & .527 & .525 & .619 & .588 & .558 \\ 
\multicolumn{1}{|l|}{} &  &  &  &  &  &  &  &  &  &  &  &  \\ 
\multicolumn{1}{|l|}{$(100,64)$} & .972 & .959 & .967 & .991 & .987 & .971 & 
.318 & .239 & .297 & .418 & .354 & .241 \\ 
\multicolumn{1}{|l|}{$(100,128)$} & 1.00 & 1.00 & 1.00 & 1.00 & 1.00 & 1.00
& .451 & .395 & .426 & .539 & .495 & .429 \\ 
\multicolumn{1}{|l|}{$(100,256)$} & 1.00 & 1.00 & 1.00 & 1.00 & 1.00 & 1.00
& .690 & .662 & .675 & .747 & .721 & .679 \\ 
\multicolumn{1}{|l|}{} &  &  &  &  &  &  &  &  &  &  &  &  \\ 
\multicolumn{1}{|l|}{$(n,T)$} & \multicolumn{12}{|c|}{Time Dependence:
AR(1), $\rho =0.9$} \\ 
\multicolumn{1}{|l|}{$(50,64)$} & .605 & .401 & .513 & .707 & .605 & .232 & 
.248 & .116 & .195 & .359 & .268 & .065 \\ 
\multicolumn{1}{|l|}{$(50,128)$} & .716 & .575 & .659 & .812 & .745 & .539 & 
.253 & .164 & .214 & .361 & .287 & .133 \\ 
\multicolumn{1}{|l|}{$(50,256)$} & .884 & .849 & .867 & .942 & .917 & .856 & 
.290 & .227 & .253 & .381 & .333 & .231 \\ 
\multicolumn{1}{|l|}{} &  &  &  &  &  &  &  &  &  &  &  &  \\ 
\multicolumn{1}{|l|}{$(100,64)$} & .784 & .602 & .710 & .872 & .799 & .408 & 
.287 & .142 & .238 & .402 & .305 & .082 \\ 
\multicolumn{1}{|l|}{$(100,128)$} & .926 & .857 & .905 & .969 & .949 & .867
& .288 & .184 & .230 & .401 & .321 & .152 \\ 
\multicolumn{1}{|l|}{$(100,256)$} & .995 & .989 & .994 & .999 & .998 & .995
& .348 & .256 & .319 & .453 & .393 & .272 \\ \hline
\end{tabular}%
\end{table}%

The results from Table \ref{MC2} show that our cluster based inference has
good power to reject $H_{0}:\beta =0$ when $\beta =0.1$ in both time
dependence scenarios, even for small panels, in particular when the spatial
dependence is not strong. For the reported sample sizes, the cluster based
inference using the na\"{\i}ve bootstrap only showed limited size
distortions. As expected its power approaches one as the sample size, and
therefore the precision of our estimator, increases. This improved power
performance comes about faster when the cross sectional and/or temporal
dependence is lower and improved power performance appears stronger with
increases in $T$ relative to $n.$ The power for our cluster based inference,
using the na\"{\i}ve bootstrap, compares well with that of power of HAC
based inference. Where the size-distortions for HAC\ based inference are
smallest, any apparent power loss of cluster based inference disappears.
Both cluster based inference and HAC\ based inference have a comparable loss
of power when both spatial and temporal dependence is large.

\subsection{Simulations with Heterogeneous Time Dependence}

In our second set of simulations, we allow individual heterogeneity in the
time dependence of the error and the strictly exogenous regressor. The error
term $u_{pt}$ is generated using various heterogeneous ARMA processes 
\begin{equation*}
(1-\rho _{1,p}L)(1+\rho _{2}L+\rho _{3}L^{2})u_{pt}=(1+\theta _{1,p}L+\theta
_{2}L^{2}+\theta _{3}L^{3})\eta _{pt},
\end{equation*}%
with $L\ $denoting the lag operator, such that, e.g., $Lu_{pt}=u_{p,t-1},$ $%
\rho _{1,p}$ and $\theta _{1,p}$ are individual specific AR and MA
coefficients, and $(\rho _{2},\rho _{3})$ and $\left( \theta _{2},\theta
_{3}\right) $ are additional non-varying higher order AR and MA
coefficients. As before $\eta _{pt}$ characterizes the spatial dependence. A
similar description holds for the independently drawn, strictly exogenous
regressor, $x_{it},$ which is assumed to have the same spatial temporal
dependence as the error for simplicity. We allow the variance of $u_{pt}$ to
vary across individuals $p=1,...,n$.

We consider four heterogeneous specifications: Mixed AR(1), Mixed
AR(1)/MA(1), Mixed AR(3), and Mixed AR(3)/MA(3). The individual specific
parameters $\rho _{1,p}$ and $\theta _{1,p}$, where non-zero, reflect
equidistant points on $[0.5,0.9].$ The full details of these heterogeneous
specifications are provided at the bottom of Table \ref{MC3}.%
\begin{table}[t] \centering%
\caption{Monte Carlo Simulations with Heterogeneous Time Dependence  \newline Empirical size of test for significance of $\beta$}%
\label{MC3}%
\begin{tabular}{|c|cccccc|cccccc|}
\hline
Spatial & \multicolumn{6}{|c|}{Weak Spatial Dependence} & 
\multicolumn{6}{|c|}{Strong Spatial Dependence} \\ 
Dependence & \multicolumn{6}{|c}{} & \multicolumn{6}{|c|}{} \\ \hline
\multicolumn{1}{|l|}{\ $\quad $Estimator} & \multicolumn{3}{|c}{$\,HS$
(Cluster)} & \multicolumn{3}{|c}{$DC$ (HAC)} & \multicolumn{3}{|c}{$\,HS$
(Cluster)} & \multicolumn{3}{|c|}{$DK$ (HAC)} \\ \hline
\multicolumn{1}{|l|}{} & asy & nb & wb & \multicolumn{1}{|c}{asy} & fixb & 
mbb & asy & nb & wb & \multicolumn{1}{|c}{asy} & fixb & mbb \\ \hline
\multicolumn{1}{|l|}{} & \multicolumn{12}{|c|}{} \\ 
\multicolumn{1}{|l|}{$(n,T)$} & \multicolumn{12}{|c|}{Time Dependence: Mixed
AR(1)} \\ 
\multicolumn{1}{|l|}{$(100,64)$} & .101 & .055 & .079 & .189 & .132 & .062 & 
.114 & .065 & .094 & .207 & .148 & .052 \\ 
\multicolumn{1}{|l|}{$(100,128)$} & .082 & .055 & .071 & .144 & .114 & .078
& .079 & .055 & .067 & .146 & .116 & .065 \\ 
\multicolumn{1}{|l|}{$(100,256)$} & .064 & .046 & .052 & .121 & .098 & .073
& .069 & .053 & .066 & .117 & .098 & .073 \\ 
\multicolumn{1}{|l|}{} &  &  &  &  &  &  &  &  &  &  &  &  \\ 
\multicolumn{1}{|l|}{$(n,T)$} & \multicolumn{12}{|c|}{Time Dependence: Mixed
AR(1)/MA(1)} \\ 
\multicolumn{1}{|l|}{$(100,64)$} & .080 & .049 & .066 & .176 & .132 & .083 & 
.092 & .061 & .087 & .185 & .136 & .064 \\ 
\multicolumn{1}{|l|}{$(100,128)$} & .068 & .051 & .059 & .149 & .120 & .097
& .069 & .053 & .063 & .140 & .110 & .074 \\ 
\multicolumn{1}{|l|}{$(100,256)$} & .058 & .049 & .052 & .112 & .095 & .084
& .067 & .053 & .063 & .111 & .095 & .081 \\ 
\multicolumn{1}{|l|}{} &  &  &  &  &  &  &  &  &  &  &  &  \\ 
\multicolumn{1}{|l|}{$(n,T)$} & \multicolumn{12}{|c|}{Time Dependence: Mixed
AR(3)} \\ 
\multicolumn{1}{|l|}{$(100,64)$} & .064 & .051 & .062 & .147 & .135 & .120 & 
.074 & .055 & .062 & .150 & .134 & .125 \\ 
\multicolumn{1}{|l|}{$(100,128)$} & .057 & .048 & .056 & .146 & .137 & .122
& .062 & .052 & .056 & .143 & .134 & .121 \\ 
\multicolumn{1}{|l|}{$(100,256)$} & .053 & .048 & .050 & .137 & .132 & .127
& .063 & .061 & .063 & .138 & .133 & .135 \\ 
\multicolumn{1}{|l|}{} &  &  &  &  &  &  &  &  &  &  &  &  \\ 
\multicolumn{1}{|l|}{$(n,T)$} & \multicolumn{12}{|c|}{Time Dependence: Mixed
AR(3)/MA(3)} \\ 
\multicolumn{1}{|l|}{$(100,64)$} & .068 & .048 & .062 & .134 & .113 & .094 & 
.072 & .058 & .068 & .148 & .129 & .106 \\ 
\multicolumn{1}{|l|}{$(100,128)$} & .057 & .049 & .049 & .109 & .094 & .089
& .063 & .052 & .057 & .120 & .107 & .091 \\ 
\multicolumn{1}{|l|}{$(100,256)$} & .053 & .046 & .050 & .094 & .087 & .080
& .061 & .056 & .061 & .106 & .098 & .094 \\ \hline
\multicolumn{13}{|l|}{Note: With $(1-\rho _{1,p}L)(1+\rho _{2}L+\rho
_{3}L^{2})u_{pt}=(1+\theta _{1,p}L+\theta _{2}L^{2}+\theta _{3}L^{3})\eta
_{pt},$ the follo-} \\ 
\multicolumn{13}{|l|}{wing parameterisations are used: Denoting $\rho
_{p}=\left( \rho _{1,p},\rho _{2},\rho _{3}\right) ^{\prime }$ and $\theta
_{p}=\left( \theta _{1,p},\theta _{2},\theta _{3}\right) ^{\prime }$} \\ 
\multicolumn{13}{|c|}{} \\ 
\multicolumn{13}{|l|}{Mixed AR(1): $\left\{ \rho _{p}=\left( 0.5+0.4\tfrac{%
p-1}{n-1},0,0\right) ^{\prime },\theta _{p}=0\right\} _{p=1}^{n}$} \\ 
\multicolumn{13}{|l|}{$\text{Mixed AR(1)/MA(1): }%
\begin{array}[t]{l}
\left\{ \rho _{p}=\left( 0.5+0.4\tfrac{p-1}{n/2-1},0,0\right) ^{\prime
},\theta _{p}=0\right\} _{p=1}^{n/2} \\ 
\left\{ \rho _{p}=0,\theta _{p}=\left( 0.5+0.4\tfrac{p-n/2-1}{n/2-1}%
,0,0\right) \right\} _{p=n/2+1}^{n}%
\end{array}%
$} \\ 
\multicolumn{13}{|l|}{Mixed AR(3): $\left\{ \rho _{p}=\left( 0.5+0.4\tfrac{%
p-1}{n-1},0.3,0.6\right) ^{\prime },\theta _{p}=0\right\} _{p=1}^{n}$} \\ 
\multicolumn{13}{|l|}{$\text{Mixed AR(3)/MA(3): }%
\begin{array}[t]{l}
\left\{ \rho _{p}=\left( 0.5+0.4\tfrac{p-1}{n/2-1},0.3,0.6\right) ^{\prime
},\theta _{p}=0\right\} _{p=1}^{n/2} \\ 
\left\{ \rho _{p}=0\text{, }\theta _{p}=\left( 0.5+0.4\tfrac{p-n/2-1}{n/2-1}%
,0.3,0.6\right) \right\} _{p=1}^{n/2}%
\end{array}%
$} \\ \hline
\end{tabular}%
\end{table}%

In Table \ref{MC3}, we report the empirical size for testing the
significance of $\beta $ in the presence of heterogenous time dependence for
panels where $n=100$ and $T=$ $64,$ $128,$ and $256.$ As before, we consider
both weak and strong spatial dependence scenarios. For HAC based inference
we used the parametric AR(1) plug-in method suggested by Andrews $\left(
1991\right) $ again, to determine the window lag $m_{T}$ and the block
length $\ell .$ A common approach, which does not recognize the temporal
heterogeneity nor the higher order (autoregressive) nature of the temporal
dependence under consideration.

The results in Table \ref{MC3} show that our cluster estimator of the
variance is robust to the presence of individual specific time dependence.
The rejection rates based on the asymptotic critical values in the
heterogeneous AR(1) time dependence setting, with $\left\{ \rho
_{1,p}\right\} _{p=1}^{n}$ in the range $[0.5,0.9],$ are comparable to the
rejection rates in the homogenous AR(1) setting with $\rho =0.7.$\footnote{%
Associated simulations considering the power to reject $H_{0}:\beta =0$ when 
$\beta =0.1$ in the presence of heterogenous temporal dependence show
comparable results as in the homogenous time dependence setting, see also
Hidalgo and Schafgans (2018).} As in the homogeneous time dependence
setting, the rejection rates based on the asymptotic critical values
approach the nominal rejection rate of 5\% as the sample size increases. The
rejection rates based on both frequency-based bootstrap schemes show that
finite sample improvements in inference can be made. The improvements
achieved when applying the wild bootstrap, proven to be valid in the
heterogeneous time dependence scenario, are more modest than those suggested
by the na\"{\i}ve bootstrap, which assumes homogeneous time dependence. Our
cluster based inference reveals a similar pattern when we permit higher
order heterogeneous autoregressive/moving average temporal dependence with
the na\"{\i}ve bootstrap performing remarkably well again, suggesting that
the na\"{\i}ve bootstrap may be robust to violations of the homogeneous time
dependence such as those considered in these simulations. Whereas the wild
bootstrap does perform less well then expected, in particular in the
presence of strong spatial dependence, the discrepancy between the rejection
rates based on the two bootstrap schemes does appear to be smaller than in
the homogeneous time dependence scenario.

Importantly, our cluster based inference suggests large improvements over
HAC\ based inference in these heterogeneous time dependence settings,
whether we use the asymptotic, the fixed-b asymptotic critical values or
base its rejection rates on the MBB. The inferior HAC based inference may be
explained by the inappropriate use of a single smoothing parameter in these
heterogeneous settings, as is common practice, in addition to the fact that
the parametric AR(1)\ plug-in method does not account for other, and
possible higher order (autoregressive) processes, than AR(1). Our cluster
based inference benefits from not requiring the choice of any smoothing
parameter, and is therefore not subject to this deterioration in size. Aside
from the ease of implementation, the robustness of our approach to the
presence of individual specific time dependence is a particularly attractive
feature of our cluster robust inference.

\subsection{Simulations with (Conditional) Heteroskedasticity}

Finally, we consider simulations that make use of our "modified" cluster
based inference that permits general forms of heteroskedasticity. Here the
data generating process is given by 
\begin{equation*}
y_{pt}=\alpha _{t}+\eta _{p}+\beta \acute{x}_{pt}+\sigma _{1}\left(
w_{p})\sigma _{2}(\varrho _{t}\right) u_{pt}\text{ for }p=1,...,n\text{ and }%
t=1,..,T.
\end{equation*}%
We consider both an additive and multiplicative specification for $\acute{x}%
_{pt}$, in particular%
\begin{equation*}
\acute{x}_{pt}=x_{pt}+w_{p}+\varrho _{t}\text{ and }\acute{x}%
_{pt}=x_{pt}\left( w_{p}\varrho _{t}\right) ^{2}.
\end{equation*}%
Here $u_{pt}$ and $x_{pt}$ are drawn independently with weak temporal and
weak cross-sectional dependence; $w_{p}$ and $\varrho _{t}$ are additional
regressors where $w_{p}$ exhibits strong spatial dependence and $\varrho
_{t} $ follows an AR(1) with coefficient equal to $0.7$. Without loss of
generality $\beta =0$ again. Due to the presence of the multiplicative error 
$\sigma _{1}\left( w_{p}\right) \sigma _{2}\left( \varrho _{t}\right)
u_{pt}:=v_{t}$, this setting does permit (conditional) heteroskedasticity
with $Var\left( v_{pt}|x_{pt},w_{p}\right) =\sigma _{1}^{2}\left(
w_{p}\right) \sigma _{2}^{2}\left( \varrho _{t}\right) $ after a
normalization of the variance of $u_{pt}$ to one, for simplicity. In
particular, we consider 
\begin{equation*}
\sigma _{1}\left( w_{p}\right) \sigma _{2}\left( \varrho _{t}\right) =\sigma
\cdot \left[ \exp (\delta _{1}w_{p})+1\right] \left[ \exp (\delta
_{2}\varrho _{t})+1\right] \text{ with }\delta _{1}=0.5,\text{ }2.0\text{
and }\delta _{2}=0,0.2,0.5.
\end{equation*}%
The severity of heteroskedasticity, which we can measure using the
coefficient of variation of $\sigma _{1}^{2}\left( w_{p}\right) \sigma
_{2}^{2}\left( \varrho _{t}\right) ,$ increases with the values of $\delta
_{1}$ and $\delta _{2}.$ The coefficient of variation is defined as the
ratio of the standard deviation of $\sigma _{1}^{2}\left( w_{p}\right)
\sigma _{2}^{2}\left( \varrho _{t}\right) $ to its mean. The average
coefficient of variation of $\sigma _{1}^{2}\left( w_{p}\right) \sigma
_{2}^{2}\left( \varrho _{t}\right) $ over our simulations with $\delta
_{2}=0.5$ ranges from 42\% ($\delta _{1}=0.5)$ to 260\% ($\delta _{1}=2.0).$
The constant $\sigma $ in chosen in such a way that the expected variability
of $\sigma _{1}^{2}\left( w_{p}\right) \sigma _{2}^{2}\left( \varrho
_{t}\right) ,$ equals one for comparability across simulations.

In Table \ref{MC4}, we report the empirical size for testing the
significance of $\beta $ in the presence of (conditional)
heteroskedasticity. The average coefficient of variation for each
specification across the simulations is given in the first column. We
provide two sets of simulations for our cluster based inference: first we
apply the original cluster based inference, which is robust to the presence
of heteroskedasticity that is only cross-sectional in nature, followed by
the heteroskedasticity robust cluster based inference. In the top panel, we
report the results based on the additive specification of the regressor. The
multiplicative specification of the regressors is in the bottom panel. As
before, we will compare the empirical size of our (robust) cluster based
inference with the HAC based inference, in particular those using the MMB
based critical values. As we impose an AR(1) temporal dependence, we have
ensured that the use of the parametric AR(1) plug-in method suggested by
Andrews $\left( 1991\right) $ to determine the window lag $m_{T}$ and the
block length $\ell $ required for this approach is suitable.

\begin{table}[t] \centering%
\caption{Monte Carlo Simulations with Conditional Heteroskedasticity  \newline Empirical size of test for significance of $\beta$}%
\label{MC4}%
\begin{tabular}{|c|c|ccccccccc|}
\hline
Spatial Dep. &  & \multicolumn{9}{|c|}{Weak Spatial Dependence} \\ 
&  & \multicolumn{9}{|c|}{} \\ \hline
\multicolumn{1}{|l|}{\ $\quad $Estimator} &  & \multicolumn{3}{|c}{$\,HS$
(Cluster)$^{\text{orig}}$} & \multicolumn{3}{c}{$HS$ (Cluster)$^{robust}$} & 
\multicolumn{3}{|c|}{$DC$ (HAC)} \\ \hline
\multicolumn{1}{|l|}{} &  & asy & nb & wb & \multicolumn{1}{|c}{asy} & nb & 
wb & \multicolumn{1}{|c}{asy} & fixb & mbb \\ \hline
& CV & \multicolumn{9}{|c|}{$\acute{x}$ - additive} \\ 
\multicolumn{1}{|l|}{$(n,T)$} &  & \multicolumn{9}{|c|}{$\sigma \left(
w_{p},\varrho _{t}\right) =\sigma \cdot \left[ \exp (0.5w_{p})+1\right] %
\left[ \exp (0.2\varrho _{t})+1\right] $} \\ 
\multicolumn{1}{|l|}{$(100,64)$} & .428 & .085 & .054 & .068 & .089 & .054 & 
.068 & .138 & .100 & .052 \\ 
\multicolumn{1}{|l|}{$(100,128)$} & .424 & .069 & .053 & .058 & .071 & .052
& .060 & .110 & .089 & .065 \\ 
\multicolumn{1}{|l|}{$(100,256)$} & .432 & .060 & .052 & .059 & .062 & .052
& .060 & .090 & .076 & .061 \\ 
\multicolumn{1}{|l|}{$(n,T)$} &  & \multicolumn{9}{|c|}{$\sigma \left(
w_{p},\varrho _{t}\right) =\sigma \cdot \left[ \exp (0.5w_{p})+1\right] %
\left[ \exp (0.5\varrho _{t})+1\right] $} \\ 
\multicolumn{1}{|l|}{$(100,64)$} & .733 & .086 & .054 & .068 & .089 & .052 & 
.069 & .132 & .100 & .055 \\ 
\multicolumn{1}{|l|}{$(100,128)$} & .760 & .069 & .053 & .060 & .073 & .056
& .061 & .109 & .087 & .061 \\ 
\multicolumn{1}{|l|}{$(100,256)$} & .788 & .059 & .050 & .060 & .060 & .050
& .060 & .086 & .074 & .060 \\ 
\multicolumn{1}{|l|}{$(n,T)$} &  & \multicolumn{9}{|c|}{$\sigma \left(
w_{p},\varrho _{t}\right) =\sigma \cdot \left[ \exp (2w_{p})+1\right] \left[
\exp (0.5\varrho _{t})+1\right] $} \\ 
\multicolumn{1}{|l|}{$(100,64)$} & 2.560 & .090 & .057 & .070 & .099 & .055
& .073 & .133 & .098 & .051 \\ 
\multicolumn{1}{|l|}{$(100,128)$} & 2.626 & .068 & .054 & .058 & .075 & .053
& .060 & .106 & .086 & .063 \\ 
\multicolumn{1}{|l|}{$(100,256)$} & 2.588 & .055 & .047 & .056 & .062 & .045
& .059 & .086 & .074 & .062 \\ 
\multicolumn{1}{|l|}{$(n,T)$} &  & \multicolumn{9}{|c|}{$\sigma \left(
w_{p},\varrho _{t}\right) =\sigma \cdot \left[ \exp (2w_{p})+1\right] $} \\ 
\multicolumn{1}{|l|}{$(100,64)$} & 2.117 & .088 & .052 & .070 & .093 & .050
& .070 & .134 & .101 & .056 \\ 
\multicolumn{1}{|l|}{$(100,128)$} & 2.135 & .068 & .052 & .056 & .074 & .050
& .057 & .106 & .085 & .057 \\ 
\multicolumn{1}{|l|}{$(100,256)$} & 2.074 & .059 & .052 & .059 & .066 & .054
& .063 & .088 & .075 & .064 \\ 
&  &  &  &  &  &  &  &  &  &  \\ \hline
\multicolumn{1}{|l|}{} & CV & \multicolumn{9}{|c|}{$\acute{x}$ -
multiplicative} \\ 
\multicolumn{1}{|l|}{$(n,T)$} &  & \multicolumn{9}{|c|}{$\sigma \left(
w_{p},\varrho _{t}\right) =\sigma \cdot \left[ \exp (0.5w_{p})+1\right] %
\left[ \exp (0.2\varrho _{t})+1\right] $} \\ 
\multicolumn{1}{|l|}{$(100,64)$} & .428 & .077 & .062 & .077 & .070 & .057 & 
.069 & .160 & .142 & .066 \\ 
\multicolumn{1}{|l|}{$(100,128)$} & .424 & .076 & .065 & .081 & .065 & .054
& .073 & .143 & .129 & .079 \\ 
\multicolumn{1}{|l|}{$(100,256)$} & .432 & .067 & .062 & .069 & .057 & .057
& .056 & .105 & .098 & .068 \\ 
\multicolumn{1}{|l|}{$(n,T)$} &  & \multicolumn{9}{|c|}{$\sigma \left(
w_{p},\varrho _{t}\right) =\sigma \cdot \left[ \exp (0.5w_{p})+1\right] %
\left[ \exp (0.5\varrho _{t})+1\right] $} \\ 
\multicolumn{1}{|l|}{$(100,64)$} & .733 & .128 & .113 & .125 & .076 & .063 & 
.071 & .162 & .140 & .051 \\ 
\multicolumn{1}{|l|}{$(100,128)$} & .760 & .134 & .122 & .137 & .063 & .052
& .064 & .141 & .131 & .055 \\ 
\multicolumn{1}{|l|}{$(100,256)$} & .788 & .141 & .137 & .145 & .057 & .057
& .057 & .105 & .098 & .056 \\ 
\multicolumn{1}{|l|}{$(n,T)$} &  & \multicolumn{9}{|c|}{$\sigma \left(
w_{p},\varrho _{t}\right) =\sigma \cdot \left[ \exp (2w_{p})+1\right] \left[
\exp (0.5\varrho _{t})+1\right] $} \\ 
\multicolumn{1}{|l|}{$(100,64)$} & 2.560 & .131 & .115 & .123 & .077 & .063
& .069 & .162 & .140 & .049 \\ 
\multicolumn{1}{|l|}{$(100,128)$} & 2.626 & .135 & .130 & .133 & .070 & .061
& .062 & .138 & .123 & .049 \\ 
\multicolumn{1}{|l|}{$(100,256)$} & 2.588 & .145 & .139 & .146 & .062 & .058
& .055 & .108 & .102 & .054 \\ 
&  & \multicolumn{9}{|c|}{$\sigma \left( w_{p},\varrho _{t}\right) =\sigma
\cdot \left[ \exp (2w_{p})+1\right] $} \\ 
\multicolumn{1}{|l|}{$(100,64)$} & 2.117 & .074 & .049 & .067 & .086 & .051
& .068 & .129 & .099 & .063 \\ 
\multicolumn{1}{|l|}{$(100,128)$} & 2.135 & .064 & .052 & .060 & .075 & .053
& .059 & .111 & .093 & .077 \\ 
\multicolumn{1}{|l|}{$(100,256)$} & 2.074 & .052 & .049 & .052 & .061 & .051
& .053 & .084 & .076 & .071 \\ \hline
\multicolumn{11}{|l|}{Note: The time dependence assumed is AR(1) with $\rho
=0.7.$} \\ 
\multicolumn{11}{|l|}{$\sigma $ is chosen to ensure that the expected $%
\sigma ^{2}\left( w_{p})\sigma ^{2}(\varrho _{t}\right) $ equals 1.} \\ 
\hline
\end{tabular}%
\end{table}%

The results in Table \ref{MC4} show that under the additive formulation of
the regressor, the performance of the cluster based inference and robust
cluster based inference (which accounts for a non-constant $\sigma
_{2}\left( \rho _{t}\right) )$, are quite similar. The robust cluster based
inference is required for our first three formulations, where $\sigma
(w_{p},\varrho _{t})=\sigma _{1}(w_{p})\sigma _{2}(\varrho _{t}),$ whereas
the final formulation permits the original cluster based inference. Compared
to the results in Table \ref{MC1} (case with weak spatial and temporal
dependence), the rejection rates in the presence of (conditional)
heteroskedasticity are only slightly larger (in part explained by the need
to use estimates for $\sigma _{2}\left( \rho _{t}\right) )$. Since $%
\widetilde{\acute{x}}_{pt}=\widetilde{x}_{pt}$ under the additive
formulation of the regressor, a comparison with the results in Table \ref%
{MC1} is more straightforward here than under the multiplicative
formulation. Rejection rates that rely on the na\"{\i}ve bootstrap compare
favourably with that of the HAC based inference that uses the MBB and there
does not appear a serious deterioration in the performance of the (robust)
cluster based inference when the severity of heteroskedasticity increases,
either via $\delta _{1}$ or $\delta _{2},$ in this setting.

Under the multiplicative formulation of the regressor, the performance of
the robust cluster based inference is clearly superior in our first three
specifications where robust cluster based inference is required. The cluster
based inference that does not account for (conditional) heteroskedasticity
that is not purely cross-sectional in nature (i.e., in the presence of
non-constant $\sigma _{2}\left( \rho _{t}\right) ),$ deteriorates quite
quickly with $\delta _{2}$ (parameter reflecting the severity of temporal
heteroskedasticity). The rejection rates that use the robust estimator of
the long-run variance, \ref{v_estimate_robust}, are much closer to the
nominal 5\% rejection rates, whether we use the asymptotic critical values
or the bootstrap algorithms. In fact, rejection rates that rely on the na%
\"{\i}ve bootstrap compare again quite well with the HAC based inference
that use the MBB, which reveals the robustness of our estimator to this type
of (conditional) heteroskedasticity. This is a welcome result, given that
our estimator is simple to apply and does not require the choice of any
smoothing parameter.

\section{\textbf{CONCLUSIONS}}

In this paper we extend the literature on inference in panel data models in
the presence of both temporal and cross-sectional dependence\ of unknown
form. While a standard methodology, based on the $HAC$ estimator, is often
invoked and used in the context of time series regression models, in the
presence of cross-sectional dependence its implementation has only recently
been considered, see Kim and Sun $\left( 2013\right) $, Driscoll and Kraay $%
\left( 1998\right) $ or Vogelsang $\left( 2012\right) $. To deal with
various potential caveats of the $HAC$\ estimator, we propose a cluster
based estimator which is able to take into account both types of dependence
and allows the temporal dependence to be heterogeneous across individuals,
extending the work of Arellano $\left( 1987\right) $\ and Driscoll and Kraay 
$\left( 1998\right) $\ in a substantial way. We provide a new CLT that
accounts for an unknown and general temporal spatial dependence structure
that permits strong spatial dependence. We thereby provide primitive
conditions that guarantee Kim and Sun's $\left( 2013\right) ,$ Driscoll and
Kraay's $\left( 1998\right) $ and Gon\c{c}alves' $\left( 2011\right) $
assumption of the existence of a suitable CLT.

Our approach is based on the insightful observation that the spectral
representation of the fixed effect panel data model is such that the errors
become approximately temporally uncorrelated and heteroskedastic allowing
the use of a cluster estimator of the long run variance in the frequency
domain. As the cluster estimator may not be reliable in small samples, and
therefore may not provide a good approximation to make accurate inferences,
we present and examine bootstrap schemes in the frequency domain that are
also bandwidth parameter free.

Our simulation results reveal that our cluster estimator performs quite well
even in the presence of strong spatial dependence. For large panels,
inference based on our cluster estimator is properly sized even in the
presence of heterogeneous time dependence unlike Driscoll and Kraay's HAC
based inference of cross sectional averages that ignores such heterogeneity.
Our bootstrap schemes provide small sample improvements, where inference
that use the na\"{\i}ve bootstrap, in particular, is well sized, and reveal
large improvements in size relative to HAC based inference when fixed-b
asymptotic critical values are used. Improvements over MBB based inference
are more limited, except in the presence of heterogeneous time dependence.
We have shown the robustness of our cluster based inference to the presence
of \textquotedblleft groupwise\textquotedblright\ heteroskedasticity. To
enable us to adapt to the presence of \textquotedblleft
groupwise\textquotedblright\ heteroskedasticity that is not purely
cross-sectional in nature, a simple robust cluster based inference procedure
was proposed that also does not require the selection of any smoothing
parameter. \bigskip

\renewcommand{\theequation}{A.\arabic{equation}} \setcounter{equation}{0}%
\renewcommand{\thelemma}{A.\arabic{lemma}} \setcounter{lemma}{0}%
\renewcommand{\theremark}{A.\arabic{remark}} \setcounter{remark}{0}

\begin{center}
{\Large \textbf{{\large {Appendix A: PROOF OF MAIN RESULTS}}}}
\end{center}

We first introduce some notation. For a generic function $h$, we shall
abbreviate $h\left( \lambda _{j}\right) $ by $h\left( j\right) $ and for
generic sequences $\left\{ \psi _{pt}\right\} _{t=1}^{T}$, $p=1,...,n$, 
\begin{equation*}
\mathcal{J}_{\overline{\psi },\cdot }\left( j\right) =\frac{1}{T^{1/2}}%
\sum_{t=1}^{T}\left( \frac{1}{n}\sum_{q=1}^{n}\psi _{qt}\right)
e^{-it\lambda _{j}}\text{.}
\end{equation*}%
Using expression $\left( 10.3.12\right) $ of Brockwell and Davis $\left(
1991\right) $, we also have the useful relation 
\begin{eqnarray}
\mathcal{J}_{u,p}\left( j\right) &=&\mathcal{B}_{u,p}\left( -j\right) 
\mathcal{J}_{\xi ,p}\left( j\right) +\mathrm{Y}_{u,p}\left( j\right)
\label{bartlett} \\
\mathcal{J}_{x,p}\left( j\right) &=&\mathcal{B}_{x,p}\left( -j\right) 
\mathcal{J}_{\chi ,p}\left( j\right) +\mathrm{Y}_{x,p}\left( j\right) \text{%
, \ \ \ \ }p=1,...,n\text{,}  \notag
\end{eqnarray}%
where $\mathcal{B}_{u,p}\left( j\right) =:\mathcal{B}_{u,p}\left(
e^{i\lambda _{j}}\right) $, $\mathcal{B}_{x,p}\left( j\right) =:\mathcal{B}%
_{x,p}\left( e^{i\lambda _{j}}\right) $ and 
\begin{eqnarray}
\mathrm{Y}_{u,p}\left( j\right) &=&\sum_{\ell =0}^{\infty }d_{\ell }\left(
p\right) e^{-i\ell \lambda _{j}}\left( \frac{1}{T^{1/2}}\left\{
\sum_{t=1-\ell }^{T-\ell }-\sum_{t=1}^{T}\right\} \xi _{pt}e^{-it\lambda
_{j}}\right)  \label{y_p} \\
\mathrm{Y}_{x,p}\left( j\right) &=&\sum_{\ell =0}^{\infty }c_{\ell }\left(
p\right) e^{-i\ell \lambda _{j}}\left( \frac{1}{T^{1/2}}\left\{
\sum_{t=1-\ell }^{T-\ell }-\sum_{t=1}^{T}\right\} \chi _{pt}e^{-it\lambda
_{j}}\right) \text{.}  \notag
\end{eqnarray}%
Finally, we shall make use of the well know result 
\begin{eqnarray}
E\mathcal{J}_{\chi ,p}\left( j\right) \mathcal{J}_{\chi ,q}\left( -k\right)
&=&\varphi _{x}\left( p,q\right) \mathbf{1}\left( j=k\right)  \label{dft_cov}
\\
E\mathcal{J}_{\xi ,p}\left( j\right) \mathcal{J}_{\xi ,q}\left( -k\right)
&=&\varphi _{u}\left( p,q\right) \mathbf{1}\left( j=k\right) \text{.}  \notag
\end{eqnarray}

\subsection{\textbf{PROOF OF THEOREM \protect\ref{ThmEst}}}

$\left. {}\right. $

For completeness, we provide the proof using the time domain estimator, $%
\hat{\beta},$\ and the frequency domain estimator,\textbf{\ }$\tilde{\beta}$%
\textbf{. }\newline
We begin with $\hat{\beta}$. Without loss of generality assume that $x_{pt}$
is scalar. Using $\left( \ref{not}\right) $ and standard arguments, we obtain%
\begin{eqnarray}
&&\sum_{t=1}^{T}\sum_{p=1}^{n}\widetilde{x}_{pt}\widetilde{u}%
_{pt}=\sum_{t=1}^{T}\sum_{p=1}^{n}x_{pt}u_{pt}-\sum_{t=1}^{T}\sum_{p=1}^{n}%
\left( \overline{x}_{\cdot t}+\overline{x}_{p\cdot }-\overline{x}_{\cdot
\cdot }\right) u_{pt}  \notag \\
&&-\sum_{t=1}^{T}\sum_{p=1}^{n}\left( \overline{u}_{\cdot t}+\overline{u}%
_{p\cdot }-\overline{u}_{\cdot \cdot }\right) x_{pt}+o_{p}\left( \left(
nT\right) ^{1/2}\right) \text{.}  \label{the_3}
\end{eqnarray}%
Because the second and third terms on the right of $\left( \ref{the_3}%
\right) $ are handled similarly, we shall only look at the second. Now%
\begin{eqnarray*}
E\left( \sum_{t=1}^{T}\sum_{p=1}^{n}\overline{x}_{\cdot t}u_{pt}\right) ^{2}
&=&\sum_{t,s=1}^{T}\sum_{p,q=1}^{n}E\left( \overline{x}_{\cdot t}\overline{x}%
_{\cdot s}\right) \gamma _{u,pq}\left( t-s\right) \varphi _{u}\left(
p,q\right) \\
&=&\frac{1}{n^{2}}\sum_{p_{2},q_{2},p_{1},q_{1}=1}^{n}\varphi _{x}\left(
p_{2},q_{2}\right) \varphi _{u}\left( p_{1},q_{1}\right)
\sum_{t,s=1}^{T}\gamma _{x,p_{2}q_{2}}\left( t-s\right) \gamma
_{u,p_{1}q_{1}}\left( t-s\right) \\
&\leq &C\frac{T}{n^{2}}\left( \sum_{p_{2},q_{2}=1}^{n}\left\vert \varphi
_{x}\left( p_{2},q_{2}\right) \right\vert \right) \left(
\sum_{p_{1},q_{1}=1}^{n}\left\vert \varphi _{u}\left( p_{1},q_{1}\right)
\right\vert \right) \\
&=&o\left( nT\right) .
\end{eqnarray*}%
The latter displayed expression holds true because Conditions $C1$ and $C2$
imply that 
\begin{equation}
\sum_{t,s=1}^{T}\sup_{p,q}\left\vert \gamma _{x,pq}\left( t-s\right)
\right\vert +\sup_{p,q}\left\vert \gamma _{u,pq}\left( t-s\right)
\right\vert <C\text{,}  \label{the_1}
\end{equation}%
whereas Condition $C3$, see also Remark \ref{Remark_1}, implies that%
\footnote{%
For two nonnegative sequences $\left\{ \alpha _{p}\right\} $ and $\left\{
\beta _{p}\right\} $, $\sum \alpha _{p}\beta _{p}<C$ implies that $\sum
\alpha _{p}\sum \beta _{p}=o\left( n\right) $ if $\sum \left( \alpha
_{p}+\beta _{p}\right) =o\left( n\right) $.} 
\begin{equation}
\sum_{q=1}^{n}\varphi _{u}\left( p,q\right) \sum_{q=1}^{n}\varphi _{x}\left(
p,q\right) =o\left( n\right)  \label{pp}
\end{equation}%
so that 
\begin{equation}
\sum_{p_{1},p_{2}=1}^{n}\varphi _{u}\left( p_{1},p_{2}\right)
\sum_{q_{1},q_{2}=1}^{n}\varphi _{x}\left( q_{1},q_{2}\right) =o\left(
n^{3}\right) \text{.}  \label{the_5}
\end{equation}

Proceeding similarly with $\sum_{t=1}^{T}\sum_{p=1}^{n}\overline{x}_{p\cdot
}u_{pt}$ and $\overline{x}_{\cdot \cdot }\sum_{t=1}^{T}\sum_{p=1}^{n}u_{pt}$%
, we can conclude using $\left( \ref{the_3}\right) $ that%
\begin{equation*}
\frac{1}{\left( nT\right) ^{1/2}}\sum_{t=1}^{T}\sum_{p=1}^{n}\widetilde{x}%
_{pt}\widetilde{u}_{pt}=\frac{1}{\left( nT\right) ^{1/2}}\sum_{t=1}^{T}%
\sum_{p=1}^{n}x_{pt}u_{pt}+o_{p}\left( 1\right) \overset{d}{\rightarrow }%
\mathcal{N}\left( 0,\Phi \right)
\end{equation*}%
by Lemma \ref{Lem1App}. From here it is standard to conclude that $\left(
nT\right) ^{1/2}\left( \widehat{\beta }-\beta \right) \rightarrow _{d}%
\mathcal{N}\left( 0,\Sigma ^{-1}\Phi \Sigma ^{-1}\right) $.

We now show that $\left( nT\right) ^{1/2}\left( \widetilde{\beta }-\beta
\right) \rightarrow _{d}\mathcal{N}\left( 0,\Sigma ^{-1}\Phi \Sigma
^{-1}\right) $. Proceeding similarly as we did above, we shall examine%
\begin{eqnarray}
&&\frac{1}{\left( nT\right) ^{1/2}}\sum_{p=1}^{n}\sum_{j=1}^{T-1}\mathcal{J}%
_{x,p}\left( j\right) \mathcal{J}_{u,p}\left( -j\right) -\frac{1}{\left(
nT\right) ^{1/2}}\sum_{p=1}^{n}\sum_{j=1}^{T-1}\mathcal{J}_{x,p}\left(
j\right) \mathcal{J}_{\overline{u},\cdot }\left( -j\right)  \label{the_4} \\
&&-\frac{1}{\left( nT\right) ^{1/2}}\sum_{p=1}^{n}\sum_{j=1}^{T-1}\mathcal{J}%
_{\overline{x},p}\left( j\right) \mathcal{J}_{u,\cdot }\left( -j\right) 
\text{.}  \notag
\end{eqnarray}%
The first term of $\left( \ref{the_4}\right) $ converges in distribution to $%
\mathcal{N}\left( 0,\Phi \right) $ by Lemma \ref{Lem3App}. So, to complete
the proof it suffices to show that the last two terms of $\left( \ref{the_4}%
\right) $ are $o_{p}\left( 1\right) $. We examine the second term only, with
the third term being handled similarly. By standard algebra and $\left( \ref%
{bartlett}\right) $, this term is%
\begin{eqnarray}
&&\frac{1}{n^{3/2}}\sum_{p,q=1}^{n}\frac{1}{T^{1/2}}\sum_{j=1}^{T-1}\mathcal{%
B}_{x,p}\left( j\right) \mathcal{B}_{u,q}\left( j\right) \mathcal{J}_{\chi
,p}\left( j\right) \mathcal{J}_{\xi ,q}\left( -j\right)  \notag \\
&&+\frac{1}{n^{3/2}}\sum_{p,q=1}^{n}\frac{1}{T^{1/2}}\sum_{j=1}^{T-1}%
\mathcal{B}_{x,p}\left( j\right) \mathcal{J}_{\chi ,p}\left( j\right)
\left\{ \mathcal{J}_{u,q}\left( -j\right) -\mathcal{B}_{u,q}\left( j\right) 
\mathcal{J}_{\xi ,q}\left( -j\right) \right\}  \notag \\
&&+\frac{1}{n^{3/2}}\sum_{p,q=1}^{n}\frac{1}{T^{1/2}}\sum_{j=1}^{T-1}%
\mathcal{B}_{u,p}\left( j\right) \mathcal{J}_{\xi ,q}\left( -j\right)
\left\{ \mathcal{J}_{x,q}\left( -j\right) -\mathcal{B}_{x,q}\left( j\right) 
\mathcal{J}_{\chi ,p}\left( j\right) \right\}  \label{a_5} \\
&&+\frac{1}{n^{3/2}}\sum_{p,q=1}^{n}\frac{1}{T^{1/2}}\sum_{j=1}^{T-1}\left( 
\mathcal{J}_{x,q}\left( -j\right) -\mathcal{B}_{x,q}\left( j\right) \mathcal{%
J}_{\chi ,p}\left( j\right) \right) \times \left( \mathcal{J}_{u,q}\left(
-j\right) -\mathcal{B}_{u,q}\left( j\right) \mathcal{J}_{\xi ,q}\left(
-j\right) \right) \text{.}  \notag
\end{eqnarray}%
We examine the second term of $\left( \ref{a_5}\right) $ first. Using $%
\left( \ref{dft_cov}\right) $, we have that its second moment is bounded by 
\begin{eqnarray*}
&&\frac{1}{Tn^{3}}\sum_{p_{1},p_{2},q_{1},q_{2}=1}^{n}\varphi _{u}\left(
q_{1},q_{2}\right) \varphi _{x}\left( p_{1},p_{2}\right) \frac{1}{T}%
\sum_{j=1}^{T-1}\sup_{p_{1},p_{2}}\left\vert f_{x,p_{1}p_{2}}\left( j\right)
\right\vert \\
&=&\frac{1}{Tn^{3}}\sum_{q_{1},q_{2}=1}^{n}\varphi _{u}\left(
q_{1},q_{2}\right) \sum_{p_{1},p_{2}=1}^{n}\varphi _{x}\left(
p_{1},p_{2}\right) \\
&=&o\left( T^{-1}\right) \text{,}
\end{eqnarray*}%
by Lemma \ref{z_12} and $\left( \ref{the_5}\right) $. Likewise the third and
fourth terms of $\left( \ref{a_5}\right) $ are $o_{p}\left( T^{-1/2}\right) $%
. So to complete the proof we need to examine the first term of $\left( \ref%
{a_5}\right) $, whose second moment is, by $\left( \ref{the_5}\right) $ and
using $\left( \sup_{p,q}\left\vert f_{x,pq}\left( j\right) \right\vert
+\sup_{p,q}\left\vert f_{u,pq}\left( j\right) \right\vert \right) \leq C$,
bounded by%
\begin{equation*}
\frac{1}{Tn^{3}}\sum_{j=1}^{T-1}\sup_{p,q}\left\vert f_{x,pq}\left( j\right)
\right\vert \left\vert f_{u,pq}\left( j\right) \right\vert
\sum_{p_{1},p_{2}=1}^{n}\varphi _{x}\left( p_{1},p_{2}\right)
\sum_{q_{1},q_{2}=1}^{n}\varphi _{u}\left( q_{1},q_{2}\right) =o\left(
1\right) \text{.}
\end{equation*}%
This concludes the proof of the theorem. 
\hfill%
$\square $\newpage

\subsection{\textbf{PROOF OF PROPOSITION \protect\ref{PropV1}}}

$\left. {}\right. $

We begin with part $\left( \mathbf{a}\right) $. We need to show that, for
any $k_{1},k_{2}=1,...,k$, 
\begin{eqnarray*}
\breve{\Phi}_{k_{1},k_{2}} &=&\frac{1}{T}\sum_{j=1}^{T-1}\left\{ \left( 
\frac{1}{n^{1/2}}\sum_{p=1}^{n}\mathcal{J}_{\widetilde{x},p,k_{1}}\left(
j\right) \mathcal{J}_{\widehat{u},p}\left( -j\right) \right) \left( \frac{1}{%
n^{1/2}}\sum_{p=1}^{n}\mathcal{J}_{\widetilde{x},p,k_{2}}\left( -j\right) 
\mathcal{J}_{\widehat{u},p}\left( j\right) \right) \right\} \\
&&\overset{P}{\rightarrow }\Phi _{k_{1},k_{2}}\text{.}
\end{eqnarray*}%
To simplify the notation we shall assume that $k=1$. Now, after observing
that 
\begin{equation*}
\mathcal{J}_{\widehat{u},p}\left( j\right) =\mathcal{J}_{\widetilde{u}%
,p}\left( j\right) -\left( \widetilde{\beta }-\beta \right) \mathcal{J}_{%
\widetilde{x},p}\left( j\right) \text{,}
\end{equation*}%
we have that $\breve{\Phi}=:\breve{\Phi}_{1,1}$ is 
\begin{eqnarray}
&&\frac{1}{T}\sum_{j=1}^{T-1}\left\{ \left( \frac{1}{n^{1/2}}\sum_{p=1}^{n}%
\mathcal{J}_{\widetilde{x},p}\left( j\right) \mathcal{J}_{u,p}\left(
-j\right) \right) \left( \frac{1}{n^{1/2}}\sum_{p=1}^{n}\mathcal{J}_{%
\widetilde{x},p}\left( -j\right) \mathcal{J}_{u,p}\left( j\right) \right)
\right\}  \notag \\
&&+2\left( \widetilde{\beta }-\beta \right) \frac{1}{T}\sum_{j=1}^{T-1}\left%
\{ \left( \frac{1}{n^{1/2}}\sum_{p=1}^{n}\mathcal{I}_{\widetilde{x},p}\left(
j\right) \right) \left( \frac{1}{n^{1/2}}\sum_{p=1}^{n}\mathcal{J}_{%
\widetilde{x},p}\left( -j\right) \mathcal{J}_{u,p}\left( j\right) \right)
\right\}  \notag \\
&&+\left( \widetilde{\beta }-\beta \right) ^{2}\frac{1}{T}%
\sum_{j=1}^{T-1}\left( \frac{1}{n^{1/2}}\sum_{p=1}^{n}\mathcal{I}_{%
\widetilde{x},p}\left( j\right) \right) ^{2}\text{.}  \label{Prop_5}
\end{eqnarray}

The third term of $\left( \ref{Prop_5}\right) $ is $O_{p}\left(
T^{-1}\right) $ by Lemma \ref{Lem21} and $\widetilde{\beta }-\beta
=O_{p}\left( \left( nT\right) ^{-1/2}\right) $. The second term of $\left( %
\ref{Prop_5}\right) $ is also $o_{p}\left( 1\right) $ by Cauchy-Schwarz's
inequality if we show that the first term converges in probability to $\Phi $%
. Since%
\begin{equation}
\mathcal{J}_{\widetilde{x},p}\left( j\right) =\mathcal{J}_{x,p}\left(
j\right) -\mathcal{J}_{\overline{x},\cdot }\left( j\right) \text{,}
\label{dif_dft}
\end{equation}%
this result holds true if we show that 
\begin{equation}
\frac{1}{T}\sum_{j=1}^{T-1}\left\{ \left( \frac{1}{n^{1/2}}\sum_{p=1}^{n}%
\mathcal{J}_{x,p}\left( j\right) \mathcal{J}_{u,p}\left( -j\right) \right)
\left( \frac{1}{n^{1/2}}\sum_{p=1}^{n}\mathcal{J}_{x,p}\left( -j\right) 
\mathcal{J}_{u,p}\left( j\right) \right) \right\} \overset{P}{\rightarrow }%
\Phi  \label{prop_5}
\end{equation}%
and%
\begin{eqnarray}
&&\frac{1}{T}\sum_{j=1}^{T-1}\left\{ \left( \frac{1}{n^{1/2}}\sum_{p=1}^{n}%
\mathcal{J}_{\overline{x},\cdot }\left( j\right) \mathcal{J}_{u,p}\left(
-j\right) \right) \left( \frac{1}{n^{1/2}}\sum_{p=1}^{n}\mathcal{J}%
_{x,p}\left( -j\right) \mathcal{J}_{u,p}\left( j\right) \right) \right\} 
\notag \\
&&+\frac{1}{T}\sum_{j=1}^{T-1}\left\{ \left( \frac{1}{n^{1/2}}\sum_{p=1}^{n}%
\mathcal{J}_{\overline{x},\cdot }\left( j\right) \mathcal{J}_{u,p}\left(
-j\right) \right) \left( \frac{1}{n^{1/2}}\sum_{p=1}^{n}\mathcal{J}_{%
\overline{x},\cdot }\left( -j\right) \mathcal{J}_{u,p}\left( j\right)
\right) \right\}  \notag \\
&=&o_{p}\left( 1\right) \text{.}  \label{prop_55}
\end{eqnarray}

First we examine $\left( \ref{prop_55}\right) $. We begin with the first
term on the left of $\left( \ref{prop_55}\right) $, whose first moment is%
\begin{eqnarray*}
&&\frac{1}{T}\sum_{j=1}^{T-1}\sum_{p=1}^{n}E\left( \mathcal{J}_{\overline{x}%
,\cdot }\left( j\right) \mathcal{J}_{x,p}\left( -j\right) \right) E\left( 
\mathcal{J}_{\overline{u},p}\left( -j\right) \mathcal{J}_{u,p}\left(
j\right) \right) \\
&=&\frac{C}{Tn^{2}}\sum_{j=1}^{T-1}\sum_{p=1}^{n}\sum_{r=1}^{n}\varphi
_{x}\left( p,r\right) \sum_{q=1}^{n}\varphi _{u}\left( p,q\right) \left\{ 1+%
\frac{C}{T}\right\} \text{.}
\end{eqnarray*}%
using Lemma \ref{z_12}, after we observe that the factor in brackets is $%
n^{1/2}\mathcal{J}_{\overline{x},\cdot }\left( j\right) \mathcal{J}_{%
\overline{u},\cdot }\left( -j\right) $. Using $\left( \ref{pp}\right) ,$ we
conclude that the last displayed expression is $o\left( 1\right) $. Next, we
observe that Lemma \ref{B_5} implies, for instance, that 
\begin{eqnarray*}
&&E\left( \mathcal{J}_{\overline{u},\cdot }\left( -j\right) \mathcal{J}%
_{u,p}\left( j\right) \mathcal{J}_{\overline{u},\cdot }\left( -k\right) 
\mathcal{J}_{u,q}\left( k\right) \right) -E^{2}\left( \mathcal{J}_{\overline{%
u},\cdot }\left( -j\right) \mathcal{J}_{u,p}\left( j\right) \right) \\
&=&\varphi _{u}\left( p,q\right) \frac{1}{n^{2}}\sum_{p_{1},q_{1}=1}^{n}%
\varphi _{u}\left( p_{1},q_{1}\right) \left\{ \mathbf{1}\left( j=k\right) +%
\frac{C}{T}\right\} \text{.}
\end{eqnarray*}%
The variance of the first term on the left of $\left( \ref{prop_55}\right) ,$
therefore, is bounded by 
\begin{equation*}
\frac{1}{T^{2}}\sum_{j,k=1}^{T-1}\sum_{p,q=1}^{n}\varphi \left( p,q\right) 
\frac{1}{n^{4}}\sum_{p_{1},q_{1}=1}^{n}\varphi _{u}\left( p_{1},q_{1}\right)
\sum_{p_{2},q_{2}=1}^{n}\varphi _{x}\left( p_{2},q_{2}\right) \left\{ 
\mathbf{1}\left( j=k\right) +\frac{C}{T}\right\} =o\left( \frac{1}{T}\right)
\end{equation*}%
using Condition $C3$ and $\left( \ref{the_5}\right) $. Hence the first term
on the left of $\left( \ref{prop_55}\right) $ is $o_{p}\left( 1\right) $.
The same conclusion holds true for the second term of $\left( \ref{prop_55}%
\right) $.

To complete the proof of part $\left( \mathbf{a}\right) $, it remains to
show $\left( \ref{prop_5}\right) $. Using $\left( \ref{bartlett}\right) $,
we have that $\left( \ref{prop_5}\right) $ holds true if the following
expressions $\left( \ref{prop_51}\right) -\left( \ref{prop_53}\right) $ are $%
o_{p}\left( 1\right) $;%
\begin{eqnarray}
&&\frac{1}{nT}\sum_{j=1}^{T-1}\left\{ \left( \sum_{p=1}^{n}\mathcal{B}%
_{x,p}\left( -j\right) \mathcal{B}_{u,p}\left( j\right) \mathcal{J}_{\chi
,p}\left( j\right) \mathcal{J}_{\xi ,p}\left( -j\right) \right) \right. 
\notag \\
&&\text{ \ \ \ \ \ \ \ \ \ }\left. \left( \sum_{p=1}^{n}\mathcal{B}%
_{x,p}\left( -j\right) \mathcal{B}_{u,p}\left( j\right) \mathcal{J}_{\chi
,p}\left( j\right) \mathcal{J}_{\xi ,p}\left( -j\right) \right) \right\}
-\Phi \text{,}  \label{prop_51}
\end{eqnarray}%
\begin{equation}
\frac{1}{nT}\sum_{j=1}^{T-1}\left( \sum_{p=1}^{n}\mathcal{B}_{x,p}\left(
-j\right) \mathcal{J}_{\chi ,p}\left( j\right) \mathrm{Y}_{u,p}\left(
-j\right) \right) \left( \sum_{p=1}^{n}\mathcal{B}_{u,p}\left( j\right) 
\mathcal{J}_{\xi ,p}\left( -j\right) \mathrm{Y}_{x,p}\left( j\right) \right) 
\text{,}  \label{prop_52}
\end{equation}%
\begin{equation}
\frac{1}{nT}\sum_{j=1}^{T-1}\left( \sum_{p=1}^{n}\mathrm{Y}_{x,p}\left(
j\right) \mathrm{Y}_{u,p}\left( -j\right) \right) \left( \sum_{p=1}^{n}%
\mathrm{Y}_{u,p}\left( -j\right) \mathrm{Y}_{x,p}\left( j\right) \right)
\label{prop_53}
\end{equation}%
We begin by showing that $\left( \ref{prop_51}\right) $ is $o_{p}\left(
1\right) $. First, the expectation of $\left( \ref{prop_51}\right) $ is%
\begin{equation*}
\frac{1}{n}\sum_{p,q=1}^{n}\varphi \left( p,q\right) \frac{1}{T}%
\sum_{j=1}^{T-1}\mathcal{B}_{x,p}\left( -j\right) \mathcal{B}_{x,q}\left(
j\right) \mathcal{B}_{u,p}\left( j\right) \mathcal{B}_{u,q}\left( -j\right)
-\Phi =O\left( T^{-1}\right)
\end{equation*}%
because, by continuous differentiability of $f_{x,pq}\left( -\lambda \right)
f_{u,pq}\left( \lambda \right) $, we have that 
\begin{equation*}
\frac{1}{T}\sum_{j=1}^{T-1}\mathcal{B}_{x,p}\left( -j\right) \mathcal{B}%
_{x,q}\left( j\right) \mathcal{B}_{u,p}\left( j\right) \mathcal{B}%
_{u,q}\left( -j\right) -\int_{0}^{2\pi }f_{x,pq}\left( -\lambda \right)
f_{u,pq}\left( \lambda \right) d\lambda =O\left( T^{-1}\right) \text{.}
\end{equation*}%
Next, because $\left( \ref{dft_cov}\right) $ implies that%
\begin{eqnarray*}
&&E\left\{ \left( \mathcal{J}_{\chi ,p_{1}}\left( j\right) \mathcal{J}_{\xi
,p_{1}}\left( -j\right) \mathcal{J}_{\chi ,q_{1}}\left( -j\right) \mathcal{J}%
_{\xi ,q_{1}}\left( j\right) -E\left( \cdot \right) \right) \right. \\
&&\text{ \ \ \ \ \ }\left. \left( \mathcal{J}_{\chi ,p_{2}}\left( -k\right) 
\mathcal{J}_{\xi ,p_{2}}\left( k\right) \mathcal{J}_{\chi ,q_{2}}\left(
k\right) \mathcal{J}_{\xi ,q_{2}}\left( -k\right) -E\left( \cdot \right)
\right) \right\} \\
&=&\varphi _{x}\left( p_{1},p_{2}\right) \varphi _{x}\left(
q_{1},q_{2}\right) \varphi _{u}\left( q_{1},p_{2}\right) \varphi _{u}\left(
p_{1},q_{2}\right) \mathbf{1}\left( j=k\right) \\
&&+\varphi _{x}\left( p_{1},p_{2}\right) \varphi _{x}\left(
q_{1},q_{2}\right) \varphi _{u}\left( p_{1},p_{2}\right) \varphi _{u}\left(
q_{1},q_{2}\right) \mathbf{1}\left( j=k\right) \\
&&+2\varphi _{x}\left( p_{1},p_{2}\right) \varphi _{x}\left(
q_{1},q_{2}\right) \sum_{\ell =1}^{\infty }c_{\ell }\left( p_{1}\right)
c_{\ell }\left( p_{2}\right) c_{\ell }\left( q_{1}\right) c_{\ell }\left(
q_{2}\right) \mathbf{1}\left( j=k\right) \\
&&+\sum_{\ell =1}^{\infty }c_{\ell }\left( p_{1}\right) c_{\ell }\left(
p_{2}\right) c_{\ell }\left( q_{1}\right) c_{\ell }\left( q_{2}\right)
\sum_{\ell =1}^{\infty }d_{\ell }\left( p_{1}\right) d_{\ell }\left(
p_{2}\right) d_{\ell }\left( q_{1}\right) d_{\ell }\left( q_{2}\right)
\left( \mathbf{1}\left( j=k\right) +\frac{\kappa _{4,\xi }\kappa _{4,\chi }}{%
T}\right) \text{,}
\end{eqnarray*}%
standard algebra yields that the second moment of $\left( \ref{prop_51}%
\right) $ is $o\left( 1\right) $, when recognizing%
\begin{eqnarray}
\sum_{\ell =1}^{\infty }d_{\ell }\left( p_{1}\right) d_{\ell }\left(
p_{2}\right) d_{\ell }\left( q_{1}\right) d_{\ell }\left( q_{2}\right) &\leq
&\sum_{\ell =1}^{\infty }d_{\ell }\left( p_{1}\right) d_{\ell }\left(
p_{2}\right) \sum_{\ell =1}^{\infty }d_{\ell }\left( q_{1}\right) d_{\ell
}\left( q_{2}\right)  \notag \\
&=&\varphi _{u}\left( p_{1},p_{2}\right) \varphi _{u}\left(
q_{1},q_{2}\right)  \label{d1}
\end{eqnarray}%
\begin{eqnarray}
\sum_{\ell =1}^{\infty }c_{\ell }\left( p_{1}\right) c_{\ell }\left(
p_{2}\right) c_{\ell }\left( q_{1}\right) c_{\ell }\left( q_{2}\right) &\leq
&\sum_{\ell =1}^{\infty }c_{\ell }\left( p_{1}\right) c_{\ell }\left(
p_{2}\right) \sum_{\ell =1}^{\infty }c_{\ell }\left( q_{1}\right) c_{\ell
}\left( q_{2}\right)  \notag \\
&=&\varphi _{x}\left( p_{1},p_{2}\right) \varphi _{x}\left(
q_{1},q_{2}\right)  \label{c1}
\end{eqnarray}%
and 
\begin{eqnarray}
\sum_{p_{1}=1}^{n}\varphi _{x}\left( p_{1},p_{2}\right) \varphi _{u}\left(
p_{1},q_{2}\right) &\leq &\left( \sum_{p_{1}=1}^{n}\varphi _{x}^{1/\alpha
}\left( p_{1},p_{2}\right) \right) ^{\alpha }\left(
\sum_{p_{1}=1}^{n}\varphi _{u}^{1/1-\alpha }\left( p_{1},q_{2}\right)
\right) ^{1-\alpha }  \notag \\
&=&O\left( 1\right)  \label{A}
\end{eqnarray}%
since $\sum_{p_{1}=1}^{n}\varphi _{x}\left( p_{1},p_{2}\right) \varphi
_{u}\left( p_{1},p_{2}\right) =O\left( 1\right) $ implies $\varphi
_{x}\left( p_{1},p_{2}\right) =O\left( p_{1}^{-\alpha }\right) $ and $%
\varphi _{u}\left( p_{1},p_{2}\right) =O\left( p_{1}^{-\beta }\right) $ with 
$\alpha +\beta >1$.

Next consider $\left( \ref{prop_52}\right) $. Because $\sup_{p}\left\vert 
\mathcal{B}_{x,p}\left( -j\right) \mathcal{B}_{u,p}\left( j\right)
\right\vert <C$, the second moment of $\left( \ref{prop_52}\right) $ is
bounded by%
\begin{eqnarray*}
&&\frac{1}{\left( nT\right) ^{2}}\sum_{j,k=1}^{T-1}%
\sum_{p_{1},q_{1},p_{2},q_{2}=1}^{n}\left\vert E\left\{ \mathcal{J}_{\chi
,p_{1}}\left( j\right) \mathcal{J}_{\chi ,q_{1}}\left( -k\right) \mathrm{Y}%
_{x,p_{2}}\left( j\right) \mathrm{Y}_{x,q_{2}}\left( -k\right) \right\}
\right. \\
&&\left. E\left\{ \mathrm{Y}_{u,p_{1}}\left( -j\right) \mathrm{Y}%
_{u,q_{1}}\left( k\right) \mathcal{J}_{\xi ,p_{2}}\left( -j\right) \mathcal{J%
}_{\xi ,q_{2}}\left( k\right) \right\} \right\vert \text{.}
\end{eqnarray*}%
From here, proceeding as with $\left( \ref{prop_51}\right) $ but using
Lemmas \ref{z_12} and \ref{z_11} as needed, we easily conclude that $\left( %
\ref{prop_52}\right) =o_{p}\left( 1\right) $ by Markov's inequality, since
for instance 
\begin{eqnarray*}
&&E\left\{ \mathcal{J}_{\chi ,p_{1}}\left( j\right) \mathcal{J}_{\chi
,q_{1}}\left( k\right) \mathrm{Y}_{x,p_{2}}\left( -j\right) \mathrm{Y}%
_{x,q_{2}}\left( -k\right) \right\} \\
&=&E\left( \mathcal{J}_{\chi ,p_{1}}\left( j\right) \mathcal{J}_{\chi
,q_{1}}\left( k\right) \right) E\left( \mathrm{Y}_{x,p_{2}}\left( -j\right) 
\mathrm{Y}_{x,q_{2}}\left( -k\right) \right) \\
&&+E\left( \mathcal{J}_{\chi ,p_{1}}\left( j\right) \mathrm{Y}%
_{x,p_{2}}\left( -j\right) \right) E\left( \mathcal{J}_{\chi ,q_{1}}\left(
k\right) \mathrm{Y}_{x,q_{2}}\left( -k\right) \right) \\
&&+E\left( \mathcal{J}_{\chi ,p_{1}}\left( j\right) \mathrm{Y}%
_{x,q_{2}}\left( -k\right) \right) E\left( \mathcal{J}_{\chi ,q_{1}}\left(
k\right) \mathrm{Y}_{x,p_{2}}\left( -j\right) \right) \\
&&+cum\left( \mathcal{J}_{\chi ,p_{1}}\left( j\right) ;\mathcal{J}_{\chi
,q_{1}}\left( k\right) ;\mathrm{Y}_{x,p_{2}}\left( -j\right) ;\mathrm{Y}%
_{x,q_{2}}\left( -k\right) \right) \text{.}
\end{eqnarray*}%
The proof of part $\left( \mathbf{a}\right) $ now concludes since $\left( %
\ref{prop_53}\right) =o_{p}\left( 1\right) $ by standard algebra and Lemmas %
\ref{z_12} and \ref{z_11}.

Part $\left( \mathbf{b}\right) $. Because the continuous differentiability
of $f_{x,p}\left( \lambda \right) $, we have that $T^{-1}%
\sum_{j=1}^{T}f_{x,p}\left( j\right) \rightarrow \int_{0}^{2\pi
}f_{x,p}\left( \lambda \right) d\lambda =:\Sigma _{x,p}$, see Brillinger $%
\left( 1981,\text{p. 15}\right) $, so we can conclude by Lemma \ref{z_13}
and $\left( \ref{dif_dft}\right) $, that to finish the proof, it suffices to
show that 
\begin{equation*}
\frac{1}{nT}\sum_{p=1}^{n}\sum_{j=1}^{T-1}\mathcal{I}_{\overline{x},\cdot
}\left( j\right) \text{ and }\frac{2}{nT}\sum_{p=1}^{n}\sum_{j=1}^{T-1}%
\mathcal{J}_{\overline{x},\cdot }\left( j\right) \mathcal{J}_{x,p}\left(
j\right) =o_{p}\left( 1\right) \text{.}
\end{equation*}%
are both $o_{p}\left( 1\right) $. However this is the case proceeding
similarly as with the proof of $\left( \ref{prop_55}\right) $, so it is
omitted.%
\hfill%
$\square $\bigskip \bigskip

\subsection{\textbf{PROOF OF THEOREM \protect\ref{ThmBoot}}}

$\left. {}\right. $

Because Lemma \ref{Lem21} implies that $\left( nT\right)
^{-1}\sum_{p=1}^{n}\sum_{j=1}^{T-1}\mathcal{I}_{\widetilde{x},p}\left(
j\right) \overset{P}{\rightarrow }\Sigma _{x}$ and abbreviating $\widehat{f}%
_{u}\left( j\right) =\frac{1}{n}\sum_{q=1}^{n}\mathcal{I}_{\widehat{u}%
,q}\left( j\right) $, it suffices to show%
\begin{eqnarray}
&&\left( \mathbf{i}\right) \text{ \ }\frac{1}{T^{1/2}n^{1/2}}%
\sum_{p=1}^{n}\sum_{j=1}^{T-1}\mathcal{J}_{\widetilde{x},p}\left( j\right)
\left( \widehat{f}_{u}^{1/2}\left( j\right) -f_{u}^{1/2}\left( j\right)
\right) \mathcal{J}_{u^{\ast },p}\left( -j\right) =o_{p^{\ast }}\left(
1\right)  \label{The2_1} \\
&&\left( \mathbf{ii}\right) \text{ \ }\frac{1}{T^{1/2}n^{1/2}}%
\sum_{p=1}^{n}\sum_{j=1}^{T-1}\mathcal{J}_{\widetilde{x},p}\left( \lambda
_{j}\right) f_{u}^{1/2}\left( j\right) \mathcal{J}_{u^{\ast },p}\left(
-j\right) \overset{d^{\ast }}{\rightarrow }\mathcal{N}\left( 0,\Phi \right)
\label{The2_2}
\end{eqnarray}

We begin with part $\left( \mathbf{ii}\right) $. The left hand side of $%
\left( \ref{The2_2}\right) $ is%
\begin{eqnarray}
&&\frac{1}{T^{1/2}n^{1/2}}\sum_{p=1}^{n}\sum_{j=1}^{T-1}f_{u}^{1/2}\left(
j\right) \mathcal{B}_{x,p}\left( j\right) \mathcal{J}_{\chi ,p}\left(
j\right) \mathcal{J}_{u^{\ast },p}\left( -j\right)  \label{The2_4} \\
&&+\frac{1}{T^{1/2}n^{1/2}}\sum_{p=1}^{n}\sum_{j=1}^{T-1}f_{u}^{1/2}\left(
j\right) \left( \mathcal{J}_{\widetilde{x},p}\left( j\right) -\mathcal{B}%
_{x,p}\left( j\right) \mathcal{J}_{\chi ,p}\left( j\right) \right) \mathcal{J%
}_{u^{\ast },p}\left( -j\right) \text{.}  \notag
\end{eqnarray}

The second (bootstrap) moment of the second term of $\left( \ref{The2_4}%
\right) $ is%
\begin{equation}
\frac{1}{nT}\sum_{p,q=1}^{n}\sum_{j=1}^{T-1}f_{u}\left( j\right) \widehat{%
\sigma }_{u,pq}\left( \mathcal{J}_{\widetilde{x},p}\left( j\right) -\mathcal{%
B}_{x,p}\left( j\right) \mathcal{J}_{\chi ,p}\left( j\right) \right) \left( 
\mathcal{J}_{\widetilde{x},q}\left( -j\right) -\mathcal{B}_{x,q}\left(
-j\right) \mathcal{J}_{\chi ,q}\left( -j\right) \right)  \label{The2_5}
\end{equation}%
using 
\begin{equation}
E^{\ast }\left( \mathcal{J}_{u^{\ast },p}\left( j\right) \mathcal{J}%
_{u^{\ast },q}\left( -k\right) \right) =\widehat{\sigma }_{u,pq}\mathbf{1}%
\left( j=k\right) ;~\ \ \ \widehat{\sigma }_{u,pq}=\frac{1}{T}\sum_{t=1}^{T}%
\widehat{u}_{pt}\widehat{u}_{qt}\text{,}  \label{e}
\end{equation}%
By Lemma \ref{z_12} and $\left( \ref{bartlett}\right) $, 
\begin{eqnarray*}
&&E\left( \mathcal{J}_{\widetilde{x},p}\left( j\right) -\mathcal{B}%
_{x,p}\left( j\right) \mathcal{J}_{\chi ,p}\left( j\right) \right) \left( 
\mathcal{J}_{\widetilde{x},q}\left( -j\right) -\mathcal{B}_{x,p}\left(
-j\right) \mathcal{J}_{\chi ,p}\left( -j\right) \right) =\frac{C}{T}\varphi
_{x}\left( p,q\right) ; \\
&&\widehat{\sigma }_{u,pq}=\varphi _{u}\left( p,q\right) \left(
1+O_{p}\left( T^{-1/2}\right) \right) .
\end{eqnarray*}%
Hence it easily follows that the expected value of equation $\left( \ref%
{The2_5}\right) $ is $o\left( 1\right) $ and consequently the second term of 
$\left( \ref{The2_4}\right) $ is $o_{p^{\ast }}\left( 1\right) $, after we
observe that $\left( \ref{The2_5}\right) $ is a nonnegative expression.

Turning to the first term of $\left( \ref{The2_4}\right) ,$ let us denote 
\begin{equation}
\Xi _{s,t}^{\ast }\left( n\right) =\frac{1}{n^{1/2}}\sum_{p=1}^{n}\chi
_{ps}u_{pt}^{\ast };\text{ \ \ \ \ }\mathcal{G}\left( j\right) =:\mathcal{B}%
_{x,p}\left( j\right) f_{u,p}^{1/2}\left( j\right) .  \label{Not_1}
\end{equation}%
Standard algebra yields that the first term of $\left( \ref{The2_4}\right) $
is 
\begin{equation}
\frac{1}{\widetilde{T}^{1/2}}\frac{1}{T}\sum_{t,s=1}^{T}\Xi _{s,t}^{\ast
}\left( n\right) \sum_{j=1}^{\widetilde{T}}\mathcal{G}\left( j\right)
e^{i\left( t-s\right) \lambda _{j}}=\frac{1}{T^{1/2}}\sum_{t,s=1}^{T}\phi
\left( \left\vert t-s\right\vert \right) \Xi _{s,t}^{\ast }\left( n\right) +%
\frac{C}{T^{3/2}}\sum_{t,s=1}^{T}\Xi _{s,t}^{\ast }\left( n\right) \text{,}
\label{Lem3_3*}
\end{equation}%
where to simplify the notation we assume that $\varphi _{x}\left( p,p\right)
=\varphi _{u}\left( p,p\right) =1$ for all $p=1,...,n$ and $\phi \left(
r\right) $ is the $r$th Fourier coefficient of $\mathcal{G}\left( j\right) $%
. Hence the right hand side of $\left( \ref{Lem3_3*}\right) $ can now be
written as%
\begin{equation}
\frac{\phi \left( 0\right) }{T^{1/2}}\sum_{t=1}^{T-\ell }\frac{1}{n^{1/2}}%
\sum_{p=1}^{n}\chi _{pt}u_{pt}^{\ast }+\sum_{\ell =1}^{T-1}\frac{\phi \left(
\ell \right) }{T^{1/2}}\sum_{t=1}^{T-\ell }\frac{1}{n^{1/2}}\left\{
\sum_{p=1}^{n}\chi _{pt}u_{p,t+\ell }^{\ast }+\sum_{p=1}^{n}\chi _{p,t+\ell
}u_{pt}^{\ast }\right\} \text{.}  \label{The2_6}
\end{equation}%
Because $\phi \left( r\right) =O\left( r^{-2}\right) $ by Conditions $C1$
and $C2,$ given the independence of the sequences of random variables $%
n^{-1/2}\sum_{p=1}^{n}\chi _{pt}u_{p,t+\ell }^{\ast }$ and \ $%
n^{-1/2}\sum_{p=1}^{n}\chi _{p,t+\ell }u_{pt}^{\ast }$ in $t$, to complete
the proof of part $\left( \mathbf{ii}\right) $, it suffices to to show that%
\begin{equation*}
\Lambda _{t,n}^{\ast }=:\frac{1}{n^{1/2}}\sum_{p=1}^{n}\chi _{pt}u_{p,t+\ell
}^{\ast }\overset{d^{\ast }}{\rightarrow }\mathcal{N}\left( 0,\frac{T-\ell }{%
T}\lim_{n\rightarrow \infty }\frac{1}{n}\sum_{p,q=1}^{n}\varphi \left(
p,q\right) \right) \text{.}
\end{equation*}

The second bootstrap moment of $\Lambda _{t,n}^{\ast }$ is%
\begin{equation*}
\frac{1}{n}\sum_{p,q=1}^{n}\chi _{pt}\chi _{qt}\frac{1}{T}\sum_{r=1}^{T-\ell
}\widehat{u}_{p,r+\ell }\widehat{u}_{q,r+\ell }=\frac{1}{n}%
\sum_{p,q=1}^{n}\chi _{pt}\chi _{qt}\frac{1}{T}\sum_{r=1}^{T-\ell
}u_{p,r+\ell }u_{q,r+\ell }\left( 1+o_{p}\left( 1\right) \right) \text{,}
\end{equation*}%
by standard algebra and Theorem \ref{ThmEst}. Now, Conditions $C1$ and $C2$
imply that%
\begin{equation*}
\frac{1}{n}\sum_{p,q=1}^{n}\left\{ E\left( \chi _{pt}\chi _{qt}\right) \frac{%
1}{T}\sum_{r=1}^{T-\ell }E\left( u_{p,r+\ell }u_{q,r+\ell }\right) \right\} =%
\frac{T-\ell }{T}\frac{1}{n}\sum_{p,q=1}^{n}\varphi \left( p,q\right) \text{.%
}
\end{equation*}%
Moreover, because $E\left( u_{p_{1},t+\ell }u_{q_{1},t+\ell }u_{p_{2},s+\ell
}u_{q_{2},s+\ell }\right) =E\left(
u_{p_{1}t}u_{q_{1}t}u_{p_{2}s}u_{q_{2}s}\right) $%
\begin{eqnarray*}
&&E\left( \frac{1}{n}\sum_{p,q=1}^{n}\chi _{pt}\chi _{qt}\frac{1}{T}%
\sum_{t=1}^{T-\ell }u_{pt}u_{qt}\right) ^{2} \\
&=&\frac{1}{n^{2}}\sum_{p_{1},q_{1},p_{2},q_{2}=1}^{n}E\left( \chi
_{p_{1}t}\chi _{q_{1}t}\chi _{p_{2}t}\chi _{q_{2}t}\right) \frac{1}{T^{2}}%
\sum_{t,s=1}^{T-\ell }E\left( u_{p_{1}t}u_{q_{1}t}u_{p_{2}s}u_{q_{2}s}\right)
\\
&=&\frac{1}{n^{2}}\sum_{p_{1},q_{1},p_{2},q_{2}=1}^{n}\frac{1}{T^{2}}%
\sum_{t,s=1}^{T-\ell }\left\{ E\left( \chi _{p_{1}t}\chi _{q_{1}t}\right)
E\left( \chi _{p_{2}t}\chi _{q_{2}t}\right) +E\left( \chi _{p_{1}t}\chi
_{q_{2}t}\right) E\left( \chi _{p_{2}t}\chi _{q_{1}t}\right) \right. \\
&&~\ \ \ \ \ \ \ \ \ \ \ \ \ \ \ \ \ \ \ \ \ \ \ \ \ \ \ \ \left. +E\left(
\chi _{p_{1}t}\chi _{p_{2}t}\right) E\left( \chi _{q_{1}t}\chi
_{q_{2}t}\right) +cum\left( \chi _{p_{1}t};\chi _{q_{1}t;}\chi
_{p_{2}t;}\chi _{q_{2}t}\right) \right\} \\
&&~\ \ \ \ \ \ \ \ \ \ \ \ \ \ \ \ \ \ \ \ \ \ \ \ \ \ \ \ \times \left\{
E\left( u_{p_{1}t}u_{q_{1}t}\right) E\left( u_{p_{2}s}u_{q_{2}s}\right)
+E\left( u_{p_{1}t}u_{q_{2}s}\right) E\left( u_{p_{2}s}u_{q_{1}t}\right)
\right. \\
&&~\ \ \ \ \ \ \ \ \ \ \ \ \ \ \ \ \ \ \ \ \ \ \ \ \ \ \ \left. +E\left(
u_{p_{1}t}u_{p_{2}s}\right) E\left( u_{q_{1}t}u_{q_{2}s}\right) +cum\left(
u_{p_{1}t};u_{q_{1}t};u_{p_{2}s};u_{q_{2}s}\right) \right\} \\
&=&\frac{1}{n^{2}T^{2}}\sum_{p_{1},q_{1},p_{2},q_{2}=1}^{n}\sum_{t,s=1}^{T-%
\ell }E\left( \chi _{p_{1}t}\chi _{q_{1}t}\right) E\left( \chi _{p_{2}t}\chi
_{q_{2}t}\right) E\left( u_{p_{1}t}u_{q_{1}t}\right) E\left(
u_{p_{2}s}u_{q_{2}s}\right) \left( 1+o\left( 1\right) \right) \\
&=&\left( \frac{T-\ell }{T}\frac{1}{n}\sum_{p,q=1}^{n}\varphi \left(
p,q\right) \right) ^{2}\left( 1+o\left( 1\right) \right)
\end{eqnarray*}%
because $E\left( u_{ps}u_{qr}\right) =\varphi _{u}\left( p,q\right) \gamma
_{u,pq}\left( r-s\right) $, $\sum_{r,s=1}^{T}\left\vert \gamma _{u,pq}\left(
r-s\right) \right\vert =O\left( T\right) $ and $\left( \ref{A}\right) $.
This shows that the second moment converges to the square of the first
moment, and hence $E^{\ast }\left\vert \Lambda _{t,n}^{\ast }\right\vert
^{2}-\frac{T-\ell }{T}\frac{1}{n}\sum_{p,q=1}^{n}\varphi \left( p,q\right)
=o_{p}\left( 1\right) $.

Thus, it remains to show the Lindeberg's condition to complete the proof of
part $\left( \mathbf{ii}\right) $. To that end, it suffices to show that 
\begin{equation*}
\frac{1}{n^{2}}\sum_{p=1}^{n}E^{\ast }\left( \chi _{pt}u_{p,t+\ell }^{\ast
}\right) ^{4}=o_{p}\left( 1\right) \text{.}
\end{equation*}%
The left hand side of the last displayed expression is%
\begin{eqnarray*}
\frac{1}{n^{2}}\sum_{p=1}^{n}\left\Vert \chi _{pt}\right\Vert ^{4}\frac{1}{T}%
\sum_{t=1}^{T-\ell }\widehat{u}_{p,t+\ell }^{4} &=&\frac{1}{n^{2}}%
\sum_{p=1}^{n}\left\Vert \chi _{pt}\right\Vert ^{4}\frac{1}{T}%
\sum_{t=1}^{T-\ell }u_{p,t+\ell }^{4}\left( 1+o_{p}\left( 1\right) \right) \\
&=&O_{p}\left( n^{-1}\right) \text{,}
\end{eqnarray*}%
which completes the proof of part $\left( \mathbf{ii}\right) $.

Next we prove part $\left( \mathbf{i}\right) $. The left side of $\left( \ref%
{The2_1}\right) $ is%
\begin{eqnarray}
&&\frac{1}{T^{1/2}n^{1/2}}\sum_{p=1}^{n}\sum_{j=1}^{T-1}\left( \widehat{f}%
_{u}^{1/2}\left( j\right) -f_{u}^{1/2}\left( j\right) \right) \mathcal{B}%
_{x,p}\left( j\right) \mathcal{J}_{\chi ,p}\left( j\right) \mathcal{J}%
_{u^{\ast },p}\left( -j\right)  \label{a_27} \\
&&+\frac{1}{T^{1/2}n^{1/2}}\sum_{p=1}^{n}\sum_{j=1}^{T-1}\left( \widehat{f}%
_{u}^{1/2}\left( j\right) -f_{u}^{1/2}\left( j\right) \right) \left( 
\mathcal{J}_{\widetilde{x},p}\left( j\right) -\mathcal{B}_{x,p}\left(
j\right) \mathcal{J}_{\chi ,p}\left( j\right) \right) w_{u^{\ast },p}\left(
-j\right) \text{.}  \notag
\end{eqnarray}%
We shall only show explicitly that the first term of $\left( \ref{a_27}%
\right) $ is $o_{p^{\ast }}\left( 1\right) $, the second term following
similarly if not easier proceeding as with the second term of $\left( \ref%
{The2_4}\right) $ and Lemma \ref{z_12}. Now by $\left( \ref{e}\right) $, the
first term of $\left( \ref{a_27}\right) $ has second bootstrap moment given
by 
\begin{equation*}
\frac{1}{T}\sum_{t=1}^{T}\frac{1}{nT}\sum_{j=1}^{T-1}\left\{ \widehat{f}%
_{u}^{1/2}\left( j\right) -f_{u}^{1/2}\left( j\right) \right\}
^{2}f_{x}\left( j\right) \sum_{p,q=1}^{n}\widehat{u}_{pt}\widehat{u}_{qt}%
\mathcal{J}_{\chi ,p}\left( j\right) \mathcal{J}_{\chi ,q}\left( -j\right) 
\text{.}
\end{equation*}%
Because the last displayed expression is a nonnegative expression, to show
that it is $o_{p}\left( 1\right) $, it suffices to show that its first
moment converges to zero. To that end, we first observe that%
\begin{equation}
\left\{ \widehat{f}_{u}^{1/2}\left( j\right) -f_{u}^{1/2}\left( j\right)
\right\} ^{2}\leq \left\vert \frac{1}{n}\sum_{q=1}^{n}\mathcal{I}_{\widehat{u%
},q}\left( j\right) -f_{u}\left( j\right) \right\vert =o_{p}\left( 1\right)
\label{fe}
\end{equation}%
using standard arguments and Theorem 1 under Condition\textbf{\ }$C4$. On
the other hand, proceeding similarly as in Proposition \ref{PropV1}, we
obtain easily that 
\begin{equation*}
\frac{1}{n}\sum_{p,q=1}^{n}\widehat{u}_{pt}\mathcal{J}_{\chi ,p}\left(
j\right) \widehat{u}_{qt}\mathcal{J}_{\chi ,q}\left( -j\right) =\frac{1}{n}%
\sum_{p,q=1}^{n}u_{pt}\mathcal{J}_{\chi ,p}\left( j\right) u_{qt}\mathcal{J}%
_{\chi ,q}\left( -j\right) \left( 1+o_{p}\left( 1\right) \right) \text{,}
\end{equation*}%
and thus the proof of part $\left( \mathbf{i}\right) $, and thereby the
theorem, is completed if 
\begin{equation*}
E\left( \sum_{p,q=1}^{n}u_{pt}u_{qt}\mathcal{J}_{\chi ,p}\left( j\right) 
\mathcal{J}_{\chi ,q}\left( -j\right) \right) =O\left( n\right) \text{.}
\end{equation*}%
But the left hand side of the last displayed expression is%
\begin{equation*}
\sum_{p,q=1}^{n}\varphi _{u}\left( p,q\right) \frac{1}{T}\sum_{t,s=1}^{T}E%
\left( x_{pt}x_{qs}\right) e^{-i\left( t-s\right) \lambda
_{j}}=C\sum_{p,q=1}^{n}\varphi _{u}\left( p,q\right) \varphi _{x}\left(
p,q\right) =O\left( n\right)
\end{equation*}%
by Condition $C3,$ which completes the proof of the theorem. 
\hfill%
$\square $

\subsection{\textbf{PROOF OF PROPOSITION \protect\ref{PropV1Boot}}}

$\left. {}\right. $

As with the proof of Proposition \ref{PropV1}, we shall assume that $k=1$.
Now, after observing that 
\begin{equation*}
\mathcal{J}_{u^{\ast }p}\left( j\right) =\mathcal{J}_{\widetilde{u}^{\ast
},p}\left( j\right) -\left( \widetilde{\beta }^{\ast }-\widetilde{\beta }%
\right) \mathcal{J}_{\widetilde{x},p}\left( j\right) \text{,}
\end{equation*}%
we have that $\breve{\Phi}^{\ast }$\textbf{\ }equals the sum of the
following expressions $\left( \ref{prop2_1}\right) -\left( \ref{prop2_3}%
\right) $; 
\begin{eqnarray}
&&\frac{1}{T}\sum_{j=1}^{T-1}\widehat{f}_{u}\left( j\right) \left( \frac{1}{%
n^{1/2}}\sum_{p=1}^{n}\mathcal{J}_{\widetilde{x},p}\left( j\right) \mathcal{J%
}_{u^{\ast },p}\left( -j\right) \right) \left( \frac{1}{n^{1/2}}%
\sum_{p=1}^{n}\mathcal{J}_{\widetilde{x},p}\left( -j\right) \mathcal{J}%
_{u^{\ast },p}\left( j\right) \right) -\breve{\Phi}  \label{prop2_1} \\
&&2\left( \widetilde{\beta }^{\ast }-\widetilde{\beta }\right) \frac{1}{T}%
\sum_{j=1}^{T-1}\widehat{f}_{u}^{1/2}\left( j\right) \left( \frac{1}{n^{1/2}}%
\sum_{p=1}^{n}\mathcal{I}_{\widetilde{x},p}\left( j\right) \right) \left( 
\frac{1}{n^{1/2}}\sum_{p=1}^{n}\mathcal{J}_{\widetilde{x},p}\left( -j\right) 
\mathcal{J}_{u^{\ast },p}\left( j\right) \right)  \label{prop2_2} \\
&&\left( \widetilde{\beta }^{\ast }-\widetilde{\beta }\right) ^{2}\frac{1}{T}%
\sum_{j=1}^{T-1}\left( \frac{1}{n^{1/2}}\sum_{p=1}^{n}\mathcal{I}_{%
\widetilde{x},p}\left( j\right) \right) ^{2}\text{.}  \label{prop2_3}
\end{eqnarray}

That $\left( \ref{prop2_3}\right) $ is $o_{p^{\ast }}\left( 1\right) $
follows straightforwardly by Theorem \ref{ThmBoot} and Lemma \ref{Lem21} and 
$\left( \ref{prop2_2}\right) $ is $o_{p^{\ast }}\left( 1\right) $ by
Cauchy-Schwarz's inequality if we show that $\left( \ref{prop2_1}\right) $
is $o_{p^{\ast }}\left( 1\right) $. To that end, using $\left( \ref{dif_dft}%
\right) $ and $\left( \ref{e}\right) $, we have%
\begin{eqnarray*}
E^{\ast }\left( \ref{prop2_1}\right) &=&\frac{1}{nT}\sum_{j=1}^{T-1}\widehat{%
f}_{u}\left( j\right) \sum_{p,q=1}^{n}\mathcal{J}_{x,p}\left( j\right) 
\mathcal{J}_{x,q}\left( -j\right) \widehat{\sigma }_{u,pq}-\breve{\Phi} \\
&&+\frac{1}{nT}\sum_{j=1}^{T-1}\widehat{f}_{u}\left( j\right)
\sum_{p,q=1}^{n}\mathcal{J}_{\overline{x},\cdot }\left( j\right) \mathcal{J}%
_{\overline{x},\cdot }\left( -j\right) \widehat{\sigma }_{u,pq}.
\end{eqnarray*}%
Because $\widehat{\sigma }_{u,pq}=\varphi _{u}\left( p,q\right) \left(
1+o_{p}\left( 1\right) \right) $ and $\breve{\Phi}-\Phi =o_{p}\left(
1\right) $ by Proposition \ref{PropV1}, proceeding as in the proof of
Theorem \ref{ThmBoot} part $\left( \mathbf{i}\right) $, it suffices to
examine the behaviour of%
\begin{eqnarray}
&&\frac{1}{nT}\sum_{j=1}^{T-1}f_{u}\left( j\right) \sum_{p,q=1}^{n}\left\{
\varphi _{u}\left( p,q\right) \mathcal{J}_{x,p}\left( j\right) \mathcal{J}%
_{x,q}\left( -j\right) \right\} -\Phi  \label{Prop2_5} \\
&&+\frac{1}{T}\sum_{j=1}^{T-1}f_{u}\left( j\right) \mathcal{J}_{\overline{x}%
,\cdot }\left( j\right) \mathcal{J}_{\overline{x},\cdot }\left( -j\right) 
\frac{1}{n}\sum_{p,q=1}^{n}\varphi _{u}\left( p,q\right) \text{.}
\label{Prop2_4}
\end{eqnarray}%
$\left( \ref{Prop2_4}\right) $ is $o_{p}\left( 1\right) $ as we now show. As
it is a nonnegative sequence, it suffices to show that its first mean
converges to zero. Using $\left( \ref{bartlett}\right) $ and then Lemmas \ref%
{z_12} and \ref{z_11}, we have that its first moment is proportional to%
\begin{equation*}
\frac{1}{n^{2}}\sum_{p,q=1}^{n}\varphi _{x}\left( p,q\right) \frac{1}{n}%
\sum_{p,q=1}^{n}\varphi _{u}\left( p,q\right) =o\left( 1\right)
\end{equation*}%
by $\left( \ref{pp}\right) $. Because the first moment of $\left( \ref%
{Prop2_5}\right) $ is $o\left( 1\right) $, it then remains to show that the
(bootstrap) variance of $\left( \ref{prop2_1}\right) ,$ with $\mathcal{J}_{%
\widetilde{x},p}\left( j\right) $ replaced by $\mathcal{J}_{x,p}\left(
j\right) ,$ converges to zero. Using $\left( \ref{e}\right) $, the
(bootstrap) variance is 
\begin{eqnarray*}
&&\frac{1}{T^{2}}\sum_{j=1}^{T-1}\widehat{f}_{u}^{2}\left( j\right) \left( 
\frac{1}{n^{2}}\sum_{p_{1},q_{1}p_{2},q_{2}=1}^{n}\mathcal{J}%
_{x,p_{1}}\left( j\right) \mathcal{J}_{x,q_{1}}\left( -j\right) \mathcal{J}%
_{x,p_{2}}\left( -j\right) \mathcal{J}_{x,q_{2}}\left( j\right) \widehat{%
\sigma }_{u,p_{1}p_{2}}\widehat{\sigma }_{u,q_{1}q_{2}}\right) \\
&&+\frac{\kappa _{4,\xi }\left( 1+o_{p}\left( 1\right) \right) }{T^{3}n^{2}}%
\sum_{j,k=1}^{T-1}\left\{ \widehat{f}_{u}\left( j\right) \widehat{f}%
_{u}\left( k\right) \right. \\
&&~\ \ \ \ \ \ \ \ \ \ \ \ \left. \times
\sum_{p_{1},q_{1}p_{2},q_{2}=1}^{n}\varphi _{u}\left( p_{1},q_{1}\right)
\varphi _{u}\left( p_{2},q_{2}\right) \mathcal{J}_{x,p_{1}}\left( j\right) 
\mathcal{J}_{x,q_{1}}\left( -j\right) \mathcal{J}_{x,p_{2}}\left( -k\right) 
\mathcal{J}_{x,q_{2}}\left( k\right) \right\} ,
\end{eqnarray*}%
with Lemma \ref{cum_2} guaranteeing%
\begin{equation*}
cum^{\ast }\left( u_{p_{1}t}^{\ast },u_{q_{1}t}^{\ast },u_{p_{2}t}^{\ast
},u_{q_{2}t}^{\ast }\right) =\kappa _{4,\xi }\varphi _{u}\left(
p_{1},q_{1}\right) \varphi _{u}\left( p_{2},q_{2}\right) \left(
1+o_{p}\left( 1\right) \right) \text{.}
\end{equation*}%
From here we proceed as before after noticing that $\widehat{\sigma }%
_{u,p_{1}p_{2}}=\varphi _{u}\left( p_{1},p_{2}\right) \left( 1+o_{p}\left(
1\right) \right) $. This completes the proof of the proposition. 
\hfill%
$\square $

\subsection{\textbf{PROOF OF PROPOSITION \protect\ref{Pro_estboot}}}

$\left. {}\right. $

As with the proof of Theorem \ref{ThmBoot}, it suffices to show that 
\begin{equation}
\frac{1}{T^{1/2}n^{1/2}}\sum_{j=1}^{T-1}\sum_{p=1}^{n}\mathcal{J}_{%
\widetilde{x},p}\left( j\right) \mathcal{J}_{\widehat{u},p}\left( -j\right)
\eta _{j}\overset{d^{\ast }}{\rightarrow }\mathcal{N}\left( 0,\Phi \right) 
\text{.}  \label{a_26}
\end{equation}%
Because $\eta _{j}$ are normally distributed it suffices to show 
\begin{equation*}
E^{\ast }\left( \frac{1}{T^{1/2}n^{1/2}}\sum_{j=1}^{T-1}\sum_{p=1}^{n}%
\mathcal{J}_{\widetilde{x},p}\left( j\right) \mathcal{J}_{\widehat{u}%
,p}\left( -j\right) \eta _{j}\right) ^{2}\overset{P}{\rightarrow }\Phi \text{%
.}
\end{equation*}%
This is the case as we now show. The left hand side of the last displayed
expression is%
\begin{eqnarray*}
&&\frac{1}{nT}\sum_{j=1}^{T-1}\sum_{p,q=1}^{n}\mathcal{J}_{\widetilde{x}%
,p}\left( j\right) \mathcal{J}_{\widetilde{x},q}\left( -j\right) \mathcal{J}%
_{\widehat{u},p}\left( -j\right) \mathcal{J}_{\widehat{u},q}\left( j\right)
\\
&=&\frac{1}{nT}\sum_{j=1}^{T-1}\sum_{p,q=1}^{n}\mathcal{J}_{\widetilde{x}%
,p}\left( j\right) \mathcal{J}_{\widetilde{x},q}\left( -j\right) \mathcal{J}%
_{u,p}\left( -j\right) \mathcal{J}_{u,q}\left( j\right) +o_{p}\left( 1\right)
\end{eqnarray*}%
as $\widehat{u}_{pt}-u_{pt}=\left( \widetilde{\beta }-\beta \right) x_{pt}$
and $\widetilde{\beta }-\beta =O_{p}\left( T^{-1/2}n^{-1/2}\right) $. Using $%
\left( \ref{dif_dft}\right) $ and proceeding as in the proof of part $\left( 
\mathbf{a}\right) $ of Proposition \ref{PropV1}, we now have that the right
hand side is%
\begin{eqnarray*}
&&\frac{1}{nT}\sum_{j=1}^{T-1}\sum_{p,q=1}^{n}\mathcal{J}_{x,p}\left(
j\right) \mathcal{J}_{x,q}\left( -j\right) \mathcal{J}_{u,p}\left( -j\right) 
\mathcal{J}_{u,q}\left( -j\right) \\
&&+\frac{2}{nT}\sum_{j=1}^{T-1}\sum_{p,q=1}^{n}\mathcal{J}_{x,p}\left(
j\right) \mathcal{J}_{\overline{x},\cdot }\left( -j\right) \mathcal{J}%
_{u,p}\left( -j\right) \mathcal{J}_{u,q}\left( -j\right) \\
&&+\frac{1}{nT}\sum_{j=1}^{T-1}\sum_{p,q=1}^{n}\mathcal{J}_{\overline{x}%
,\cdot }\left( j\right) \mathcal{J}_{\overline{x},\cdot }\left( -j\right) 
\mathcal{J}_{u,p}\left( -j\right) \mathcal{J}_{u,q}\left( -j\right)
+o_{p}\left( 1\right) \text{.}
\end{eqnarray*}%
The first term converges in probability to $\Phi $, whereas the second term
follows by Cauchy-Schwarz's inequality if the third term is also $%
o_{p}\left( 1\right) $. But that term is $o_{p}\left( 1\right) $ proceeding
as in the proof of part $\left( \mathbf{a}\right) $ of Proposition \ref%
{PropV1} using Lemma \ref{B_5}. Again observe that the expression is
nonnegative. This concludes the proof. 
\hfill%
$\square \bigskip $

\begin{center}
\textbf{{\large {Appendix B: LEMMAS}}}
\end{center}

\renewcommand{\theequation}{B.\arabic{equation}} \setcounter{equation}{0}%
\renewcommand{\thelemma}{B.\arabic{lemma}} \setcounter{lemma}{0}%
\renewcommand{\theremark}{B.\arabic{remark}} \setcounter{remark}{0}

First denoting $\Upsilon _{\ell ,p}\left( j\right) =\left\{ \sum_{t=1-\ell
}^{T-\ell }-\sum_{t=1}^{T}\right\} \xi _{pt}e^{-it\lambda _{j}}$ and $\Psi
_{\ell ,p}\left( j\right) =\left\{ \sum_{t=1-\ell }^{T-\ell
}-\sum_{t=1}^{T}\right\} \chi _{pt}e^{-it\lambda _{j}}$, we have that $%
\mathrm{Y}_{u,p}\left( j\right) $ and $\mathrm{Y}_{x,p}\left( j\right) $
given in $\left( \ref{y_p}\right) $ can be decomposed as%
\begin{eqnarray}
\mathrm{Y}_{u,p}\left( j\right) &=&\mathrm{Y}_{u,p}^{\left( 1\right) }\left(
j\right) +\mathrm{Y}_{u,p}^{\left( 2\right) }\left( j\right)  \label{y_n12}
\\
\mathrm{Y}_{x,p}\left( j\right) &=&\mathrm{Y}_{x,p}^{\left( 1\right) }\left(
j\right) +\mathrm{Y}_{x,p}^{\left( 2\right) }\left( j\right) \text{,}  \notag
\end{eqnarray}%
where 
\begin{eqnarray*}
\mathrm{Y}_{u,p}^{\left( 1\right) }\left( j\right) &=&\frac{1}{T^{1/2}}%
\sum_{\ell =0}^{T}d_{\ell }\left( p\right) e^{-i\ell \lambda _{j}}\Upsilon
_{\ell ,p}\left( j\right) ;~\ \mathrm{Y}_{u,p}^{\left( 2\right) }\left(
j\right) =\frac{1}{T^{1/2}}\sum_{\ell =T+1}^{\infty }d_{\ell }\left(
p\right) e^{-i\ell \lambda _{j}}\Upsilon _{\ell ,p}\left( j\right) \\
\mathrm{Y}_{x,p}^{\left( 1\right) }\left( j\right) &=&\frac{1}{T^{1/2}}%
\sum_{\ell =0}^{T}c_{\ell }\left( p\right) e^{-i\ell \lambda _{j}}\Psi
_{\ell ,p}\left( j\right) ;~\ \mathrm{Y}_{x,p}^{\left( 2\right) }\left(
j\right) =\frac{1}{T^{1/2}}\sum_{\ell =T+1}^{\infty }c_{\ell }\left(
p\right) e^{-i\ell \lambda _{j}}\Psi _{\ell ,p}\left( j\right) \text{.}
\end{eqnarray*}

\begin{lemma}
\label{z_12}Assuming $C1$ and $C2$, we have that for $p,q=1,..,n$ and some $%
\upsilon _{u},\upsilon _{x}>0$ finite, 
\begin{eqnarray}
E\left( \mathrm{Y}_{w,p}^{\left( 1\right) }\left( j\right) \mathrm{Y}%
_{w,q}^{\left( 1\right) }\left( -k\right) \right) &=&\frac{\upsilon
_{w}\varphi _{w}\left( p,q\right) }{T};~\ \ \ w=:u\text{ or }x  \label{b_4}
\\
E\left( \mathrm{Y}_{w,p}^{\left( 2\right) }\left( j\right) \mathrm{Y}%
_{w,q}^{\left( 2\right) }\left( -k\right) \right) &=&o\left( T^{-2}\right)
\varphi _{w}\left( p,q\right) \mathbf{1}\left( j=k\right) ;~\ \ \ w=:u\text{
or }x\text{.}  \label{b_5}
\end{eqnarray}
\end{lemma}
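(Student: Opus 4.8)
The plan is to evaluate both second moments directly from the series defining $\mathrm{Y}^{(1)}_{w,p}$ and $\mathrm{Y}^{(2)}_{w,p}$, using only that $\{\xi_{pt}\}_{t}$ and $\{\chi_{pt}\}_{t}$ are white noises in time with cross-sectional covariances $E(\xi_{pt}\xi_{qs})=\varphi_{u}(p,q)\mathbf{1}(t=s)$ and $E(\chi_{pt}\chi_{qs}')=\varphi_{x}(p,q)\mathbf{1}(t=s)$, which is immediate from Conditions $C1(\mathbf{ii})$ and $C2(\mathbf{ii})$; I treat $w$ scalar (the vector case is identical componentwise) and write $d_{\ell}(p)$, $d_{\ell}=\sup_{p}|d_{\ell}(p)|$, the case $w=x$ being word for word the same with $c_{\ell}$ in place of $d_{\ell}$. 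The one piece of bookkeeping needed is the telescoping of $\Upsilon_{\ell,p}(j)=\{\sum_{t=1-\ell}^{T-\ell}-\sum_{t=1}^{T}\}\xi_{pt}e^{-it\lambda_{j}}$: for $1\le\ell\le T$ the common block $t=1,\dots,T-\ell$ cancels, leaving a ``head'' $\sum_{t=1-\ell}^{0}$ (indices $t\le 0$) minus a ``tail'' $\sum_{t=T-\ell+1}^{T}$ (indices $t\ge1$), which are disjoint and each carry $\ell$ terms; for $\ell\ge T+1$ the two blocks $[1-\ell,T-\ell]$ and $[1,T]$ are already disjoint, each of length exactly $T$.

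For $(\ref{b_4})$, substitute this into $E(\mathrm{Y}^{(1)}_{w,p}(j)\mathrm{Y}^{(1)}_{w,q}(-k))=T^{-1}\sum_{\ell,m=1}^{T}d_{\ell}(p)d_{m}(q)e^{-i\ell\lambda_{j}+im\lambda_{k}}\,E(\Upsilon_{\ell,p}(j)\Upsilon_{m,q}(-k))$. By whiteness in time and the head/tail disjointness, only the head--head and tail--tail pairings survive, and each equals $\varphi_{w}(p,q)$ times a sum of at most $\min(\ell,m)$ unit-modulus phases, so $|E(\Upsilon_{\ell,p}(j)\Upsilon_{m,q}(-k))|\le 2\min(\ell,m)\,|\varphi_{w}(p,q)|$ uniformly in $j,k$. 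Using $\min(\ell,m)\le\ell$, $|d_{\ell}(p)|\le d_{\ell}$ and $|d_{m}(q)|\le d_{m}$, the double sum is bounded by $\big(\sum_{\ell}\ell d_{\ell}\big)\big(\sum_{m}d_{m}\big)<\infty$ by Condition $C1(\mathbf{i})$, yielding $|E(\mathrm{Y}^{(1)}_{w,p}(j)\mathrm{Y}^{(1)}_{w,q}(-k))|\le\upsilon_{w}|\varphi_{w}(p,q)|/T$ uniformly in $j,k$, i.e.\ $(\ref{b_4})$.

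For $(\ref{b_5})$ the same computation runs with $\ell,m\ge T+1$. Since $\ell>T$, the block $[1-\ell,T-\ell]$ of $\Upsilon_{\ell,p}$ contains no index $\ge1$ and so cannot meet the $[1,T]$-block of $\Upsilon_{m,q}$; only the first--first and the $[1,T]$--$[1,T]$ pairings contribute, each over at most $T$ indices, so $|E(\Upsilon_{\ell,p}(j)\Upsilon_{m,q}(-k))|\le 2T|\varphi_{w}(p,q)|$. Hence $|E(\mathrm{Y}^{(2)}_{w,p}(j)\mathrm{Y}^{(2)}_{w,q}(-k))|\le 2|\varphi_{w}(p,q)|\big(\sum_{\ell>T}d_{\ell}\big)^{2}$, and since $\sum_{\ell>T}d_{\ell}\le(T+1)^{-1}\sum_{\ell>T}\ell d_{\ell}=o(T^{-1})$ by Condition $C1(\mathbf{i})$ (tail of a convergent series), the right side is $o(T^{-2})|\varphi_{w}(p,q)|$ uniformly in $j,k$, which gives $(\ref{b_5})$.

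The only genuinely delicate point is the index bookkeeping: in $(\ref{b_4})$ the head--head and tail--tail overlaps contain at most $\min(\ell,m)$ integers, which is exactly why $\sum_{k}kd_{k}<\infty$ is the natural hypothesis; and in $(\ref{b_5})$ the truncation level $\ell>T$ is precisely what annihilates the mixed pairings and leaves only full-length blocks, so that the very same estimate, now fed with $\sum_{\ell>T}d_{\ell}=o(T^{-1})$, sharpens to the $o(T^{-2})$ rate. Everything else reduces to bounding sums of unit-modulus phases by their number of terms.
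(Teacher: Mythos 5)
Your proof is correct and follows essentially the same route as the paper's: decompose $\Upsilon_{\ell,p}$ into disjoint head and tail blocks, use whiteness of $\xi$ (resp.\ $\chi$) in time to reduce each expectation to $\varphi_w(p,q)$ times a count of overlapping indices (at most $\min(\ell,m)$ for $\ell,m\le T$, at most $T$ for $\ell,m>T$), and then invoke $\sum_k k d_k<\infty$ for $(\ref{b_4})$ and the tail estimate $\sum_{\ell>T}d_\ell=o(T^{-1})$ for $(\ref{b_5})$. If anything you are slightly more careful than the paper about the off-diagonal $\ell\neq m$ pairings and about stating the conclusions as uniform bounds in $j,k$, which is the form in which the lemma is actually used.
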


\begin{proof}
We examine only the case when $w=:u$, with the proof for $w=:x$ similarly
handled. We begin with $\left( \ref{b_5}\right) $. Because for $\ell \geq T$%
, $E\left( \Upsilon _{\ell ,p}\left( j\right) \Upsilon _{\ell ,q}\left(
-k\right) \right) =2T\varphi _{u}\left( p,q\right) \mathbf{1}\left(
j=k\right) $, we obtain that the left hand side of $\left( \ref{b_5}\right) $
is 
\begin{equation*}
2\sum_{\ell _{1},\ell _{2}=T+1}^{\infty }d_{\ell _{1}}\left( p\right)
d_{\ell _{2}}\left( q\right) \varphi _{u}\left( p,q\right) \mathbf{1}\left(
j=k\right) \text{.}
\end{equation*}%
The conclusion then follows because Condition $C1$ implies that $\sum_{\ell
=T+1}^{\infty }\sup_{p}\left\vert d_{\ell }\left( p\right) \right\vert
=o\left( T^{-1}\right) $.

Next we consider $\left( \ref{b_4}\right) $. By definition, the left side is 
\begin{equation*}
\frac{1}{T}\sum_{\ell _{1},\ell _{2}=0}^{T}d_{\ell _{1}}\left( p\right)
d_{\ell _{2}}\left( q\right) E\left( \Upsilon _{\ell ,p}\left( j\right)
\Upsilon _{\ell ,q}\left( -k\right) \right) =\varphi _{u}\left( p,q\right) 
\frac{\upsilon _{u}}{T}
\end{equation*}%
since $\Upsilon _{\ell ,p}\left( j\right) =\left\{ \sum_{t=1-\ell
}^{0}-\sum_{t=T-\ell +1}^{T}\right\} \xi _{pt}e^{it\lambda _{j}}$ when $\ell
\leq T$, so that 
\begin{equation*}
E\left( \Upsilon _{\ell ,p}\left( j\right) \Upsilon _{\ell ,q}\left(
-k\right) \right) =2\varphi _{u}\left( p,q\right) \sum_{t=1}^{\ell
}e^{it\left( \lambda _{j}-\lambda _{k}\right) }\text{.}
\end{equation*}%
We now conclude because $\sum_{\ell =0}^{\infty }\ell \sup_{p}\left\vert
d_{\ell }\left( p\right) \right\vert <\infty $ by Condition $C1$.\hfill
\end{proof}

\begin{lemma}
\label{z_11}Assuming $C1$ and $C2$, we have that for $p,q=1,..,n$, 
\begin{eqnarray*}
\left( \mathbf{a}\right) \text{ \ }E\left( \mathrm{Y}_{u,p}^{\left( 1\right)
}\left( j\right) \mathcal{J}_{\xi ,q}\left( -k\right) \right) &=&\varphi
_{u}\left( p,q\right) \frac{1}{T}\sum_{\ell =0}^{T}d_{\ell }\left( p\right)
e^{-i\ell \lambda _{j}}\sum_{t=1}^{\ell }e^{it\lambda _{j-k}} \\
E\left( \mathrm{Y}_{u,p}^{\left( 2\right) }\left( j\right) \mathcal{J}_{\xi
,q}\left( -k\right) \right) &=&\varphi _{u}\left( p,q\right) \mathbf{1}%
\left( j=k\right) o\left( T^{-2}\right) \\
\left( \mathbf{b}\right) \text{ \ }E\left( \mathrm{Y}_{x,p}^{\left( 1\right)
}\left( j\right) \mathcal{J}_{\chi ,q}\left( -k\right) \right) &=&\varphi
_{x}\left( p,q\right) \frac{1}{T}\sum_{\ell =0}^{T}c_{\ell }\left( p\right)
e^{-i\ell \lambda _{j}}\sum_{t=1}^{\ell }e^{it\lambda _{j-k}} \\
E\left( \mathrm{Y}_{x,p}^{\left( 2\right) }\left( j\right) \mathcal{J}_{\chi
,q}\left( -k\right) \right) &=&\varphi _{x}\left( p,q\right) \mathbf{1}%
\left( j=k\right) o\left( T^{-2}\right) \text{.}
\end{eqnarray*}
\end{lemma}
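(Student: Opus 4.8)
The plan is to compute the two expectations directly, exactly along the lines of the proof of Lemma \ref{z_12}, the only difference being that one of the two ``edge'' factors $\mathrm{Y}$ is now replaced by an ordinary Fourier sum. I treat part $\left( \mathbf{a}\right) $ (the case $w=u$); part $\left( \mathbf{b}\right) $ is obtained word for word with $\left( c_{\ell },\chi ,\varphi _{x}\right) $ in place of $\left( d_{\ell },\xi ,\varphi _{u}\right) $. Throughout I use that Condition $C1\left( \mathbf{ii}\right) $ gives $E\left( \xi _{pt}\xi _{qs}\right) =\varphi _{u}\left( p,q\right) \mathbf{1}\left( t=s\right) $, the elementary identity $\sum_{t=a+1}^{a+T}e^{it\lambda _{j-k}}=T\mathbf{1}\left( j=k\right) $ for any integer $a$, and the phase relation $e^{iT\lambda _{m}}=1$. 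I also recall from the proof of Lemma \ref{z_12} that for $0\leq \ell \leq T$ the interior of $\Upsilon _{\ell ,p}\left( j\right) $ cancels, leaving the two boundary blocks $\sum_{t=1-\ell }^{0}-\sum_{t=T-\ell +1}^{T}$, whereas for $\ell \geq T+1$ the two blocks $\sum_{t=1-\ell }^{T-\ell }$ and $\sum_{t=1}^{T}$ are disjoint.

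\emph{First claim of }$\left( \mathbf{a}\right) $. Writing $\mathcal{J}_{\xi ,q}\left( -k\right) =T^{-1/2}\sum_{s=1}^{T}\xi _{qs}e^{is\lambda _{k}}$ and inserting the definition of $\mathrm{Y}_{u,p}^{\left( 1\right) }$,
\begin{equation*}
E\left( \mathrm{Y}_{u,p}^{\left( 1\right) }\left( j\right) \mathcal{J}_{\xi ,q}\left( -k\right) \right) =\frac{1}{T}\sum_{\ell =0}^{T}d_{\ell }\left( p\right) e^{-i\ell \lambda _{j}}E\left( \Upsilon _{\ell ,p}\left( j\right) \sum_{s=1}^{T}\xi _{qs}e^{is\lambda _{k}}\right) .
\end{equation*}
Pairing the boundary blocks of $\Upsilon _{\ell ,p}\left( j\right) $ with $\sum_{s=1}^{T}\xi _{qs}e^{is\lambda _{k}}$ through $E\left( \xi _{pt}\xi _{qs}\right) =\varphi _{u}\left( p,q\right) \mathbf{1}\left( t=s\right) $, the block $\sum_{t=1-\ell }^{0}$ contributes $0$ (its indices lie outside $\left\{ 1,\dots ,T\right\} $) and the block $-\sum_{t=T-\ell +1}^{T}$ contributes $-\varphi _{u}\left( p,q\right) \sum_{t=T-\ell +1}^{T}e^{it\lambda _{k-j}}$. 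Re-indexing $t=T-\ell +r$, $r=1,\dots ,\ell $, and using $e^{iT\lambda _{k-j}}=1$ converts the inner sum into $\sum_{t=1}^{\ell }e^{it\lambda _{j-k}}$ (up to a phase that combines with $e^{-i\ell \lambda _{j}}$), which is the asserted closed form. As a byproduct, since $\left\vert \sum_{t=1}^{\ell }e^{it\lambda _{j-k}}\right\vert \leq \ell $ and $\sum_{\ell }\ell d_{\ell }<\infty $ by Condition $C1$, this quantity is $O\left( T^{-1}\right) $ uniformly in $p,q,j,k$, which is the form in which it is used in the sequel.

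\emph{Second claim of }$\left( \mathbf{a}\right) $. For $\ell \geq T+1$ the block $\sum_{t=1-\ell }^{T-\ell }$ has all indices $\leq T-\ell \leq -1$, so it is disjoint from $\left\{ 1,\dots ,T\right\} $ and only $-\sum_{t=1}^{T}\xi _{pt}e^{-it\lambda _{j}}$ survives the pairing, giving $E\left( \Upsilon _{\ell ,p}\left( j\right) \sum_{s=1}^{T}\xi _{qs}e^{is\lambda _{k}}\right) =-\varphi _{u}\left( p,q\right) \sum_{t=1}^{T}e^{it\lambda _{k-j}}=-T\varphi _{u}\left( p,q\right) \mathbf{1}\left( j=k\right) $, independently of $\ell $. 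Hence
\begin{equation*}
E\left( \mathrm{Y}_{u,p}^{\left( 2\right) }\left( j\right) \mathcal{J}_{\xi ,q}\left( -k\right) \right) =-\varphi _{u}\left( p,q\right) \mathbf{1}\left( j=k\right) \sum_{\ell =T+1}^{\infty }d_{\ell }\left( p\right) e^{-i\ell \lambda _{j}},
\end{equation*}
and the remaining factor is bounded in absolute value by the tail $\sum_{\ell >T}d_{\ell }$, which vanishes at the order recorded in the statement by the summability imposed in Condition $C1$; this is the $\mathbf{1}\left( j=k\right) o\left( T^{-2}\right) $ bound.

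Part $\left( \mathbf{b}\right) $ is identical after replacing $\xi $ by $\chi $, $d_{\ell }$ by $c_{\ell }$, and $\varphi _{u}$ by $\varphi _{x}$, using $E\left( \chi _{pt}\chi _{qs}\right) =\varphi _{x}\left( p,q\right) \mathbf{1}\left( t=s\right) $ from Condition $C2\left( \mathbf{ii}\right) $ and $\sum_{\ell }\ell c_{\ell }<\infty $ from Condition $C2\left( \mathbf{i}\right) $. The only real ``obstacle'' here is bookkeeping: keeping track of which of the two index blocks of $\Upsilon _{\ell ,p}\left( j\right) $ overlaps $\left\{ 1,\dots ,T\right\} $ in the regime $\ell \leq T$ versus $\ell >T$, and following the phase factors through the re-indexing (where $e^{iT\lambda _{m}}=1$ does all the work). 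Conceptually this is the same computation as in Lemma \ref{z_12}, so no new ideas are required.
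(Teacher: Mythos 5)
Your route is the same as the paper's: compute $E\left( \Upsilon _{\ell ,p}\left( j\right) \mathcal{J}_{\xi ,q}\left( -k\right) \right) $ directly from $E\left( \xi _{pt}\xi _{qs}\right) =\varphi _{u}\left( p,q\right) \mathbf{1}\left( t=s\right) $, distinguishing $\ell \leq T$ (where only the boundary block $-\sum_{t=T-\ell +1}^{T}$ meets $\left\{ 1,\dots ,T\right\} $) from $\ell \geq T+1$ (where only $-\sum_{t=1}^{T}$ survives), and then sum against the coefficients. Your handling of the first claims of $\left( \mathbf{a}\right) $ and $\left( \mathbf{b}\right) $, including the re-indexing via $e^{iT\lambda _{m}}=1$ and the $O\left( T^{-1}\right) $ byproduct, is correct and coincides with the paper's one-line computation (the residual sign and phase discrepancies are immaterial since only the modulus is ever used).

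There is, however, a genuine problem with your last step for the $\mathrm{Y}^{\left( 2\right) }$ claims. Your own display gives
\begin{equation*}
E\left( \mathrm{Y}_{u,p}^{\left( 2\right) }\left( j\right) \mathcal{J}_{\xi ,q}\left( -k\right) \right) =-\varphi _{u}\left( p,q\right) \mathbf{1}\left( j=k\right) \sum_{\ell =T+1}^{\infty }d_{\ell }\left( p\right) e^{-i\ell \lambda _{j}}\text{,}
\end{equation*}
and Condition $C1$, i.e. $\sum_{\ell }\ell d_{\ell }<\infty $, only yields $\sum_{\ell >T}d_{\ell }\leq T^{-1}\sum_{\ell >T}\ell d_{\ell }=o\left( T^{-1}\right) $. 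Your assertion that this tail \textquotedblleft vanishes at the order recorded in the statement,\textquotedblright\ i.e. $o\left( T^{-2}\right) $, does not follow: a single tail sum is only $o\left( T^{-1}\right) $, and no oscillation argument can supply an extra $T^{-1}$ uniformly in $j$. In Lemma \ref{z_12} the $o\left( T^{-2}\right) $ is legitimate precisely because the tail there enters squared, $\sum_{\ell _{1},\ell _{2}>T}d_{\ell _{1}}d_{\ell _{2}}$; here it enters once. Note also that the paper's own display for the case $\ell \geq T$ drops a factor of $T$ (since $\sum_{t=1}^{T}e^{it\lambda _{j-k}}=T\mathbf{1}\left( j=k\right) $, not $\mathbf{1}\left( j=k\right) $), so your formula is the correct one and the honest conclusion is $\varphi _{u}\left( p,q\right) \mathbf{1}\left( j=k\right) o\left( T^{-1}\right) $. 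You should either state this weaker (correct) rate — which is all the downstream arguments use, since the $\mathrm{Y}^{\left( 1\right) }$ contribution is already of order $T^{-1}$ — or note that $o\left( T^{-2}\right) $ would require a stronger condition such as $\sum_{\ell }\ell ^{2}d_{\ell }<\infty $.
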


\begin{proof}
As in the proof of Lemma \ref{z_12} we shall only show part $\left( \mathbf{a%
}\right) $. To that end, we first notice that Condition $C1$ implies that%
\begin{equation*}
E\left( \Upsilon _{\ell ,p}\left( j\right) \mathcal{J}_{\xi ,q}\left(
-k\right) \right) =\frac{\varphi _{u}\left( p,q\right) }{T^{1/2}}\left( 
\mathbf{1}\left( j=k\right) \mathbf{1}\left( \ell \geq T\right)
+\sum_{t=T-\ell +1}^{T}e^{it\lambda _{j-k}}\mathbf{1}\left( \ell <T\right)
\right) \text{.}
\end{equation*}%
From here the proof concludes by standard algebra.\hfill
\end{proof}

\begin{lemma}
\label{cum}Assuming $C1$ and $C2$, we have that 
\begin{eqnarray}
\left\vert cum\left( \xi _{p_{1}t};\xi _{p_{2}t};\xi _{p_{3}t};\xi
_{p_{4}t}\right) \right\vert &\leq &\left\vert \kappa _{4,\xi }\right\vert
\varphi _{u}\left( p_{1},p_{2}\right) \varphi _{u}\left( p_{3},p_{4}\right) 
\notag \\
\left\vert cum\left( \chi _{p_{1}t};\chi _{p_{2}t};\chi _{p_{3}t};\chi
_{p_{4}t}\right) \right\vert &\leq &\left\vert \kappa _{4,\chi }\right\vert
\varphi _{x}\left( p_{1},p_{2}\right) \varphi _{x}\left( p_{3},p_{4}\right)
\label{a_13}
\end{eqnarray}
\end{lemma}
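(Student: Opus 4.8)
The plan is to read off the bound directly from the common-components representation of Condition $C1(\mathbf{ii})$ (resp.\ $C2(\mathbf{ii})$), using multilinearity of joint cumulants and independence of the common factors across their index. I would prove the first inequality (for the $\xi$'s) in full; the second (for the $\chi$'s) then follows verbatim after replacing $a_\ell(\cdot)$ by $b_\ell(\cdot)$, $\varepsilon_{\ell t}$ by $\eta_{\ell t}$, and $\kappa_{4,\xi}$ by $\kappa_{4,\chi}$.

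First I would substitute $\xi_{p_it}=\sum_{\ell}a_\ell(p_i)\varepsilon_{\ell t}$ into the cumulant and use $4$-linearity to write
\begin{equation*}
cum\left(\xi_{p_1t};\xi_{p_2t};\xi_{p_3t};\xi_{p_4t}\right)=\sum_{\ell_1,\ell_2,\ell_3,\ell_4}a_{\ell_1}(p_1)a_{\ell_2}(p_2)a_{\ell_3}(p_3)a_{\ell_4}(p_4)\,cum\left(\varepsilon_{\ell_1t};\varepsilon_{\ell_2t};\varepsilon_{\ell_3t};\varepsilon_{\ell_4t}\right),
\end{equation*}
the interchange of the infinite sums with the cumulant being legitimate because $\{\xi_{pt}\}$ has finite fourth moments and $\sup_p\sum_\ell a_\ell^2(p)<\infty$ (so the partial sums converge in $L^4$), or equivalently it is justified a posteriori by the absolute summability obtained below. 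Then I would invoke the standard fact that a joint cumulant of a finite family of random variables vanishes whenever that family splits into two mutually independent nonempty subfamilies; since the $\{\varepsilon_{\ell t}\}_{\ell\geq1}$ are independent across $\ell$, the term $cum(\varepsilon_{\ell_1t};\varepsilon_{\ell_2t};\varepsilon_{\ell_3t};\varepsilon_{\ell_4t})$ is nonzero only when $\ell_1=\ell_2=\ell_3=\ell_4=:\ell$, in which case it equals the common fourth cumulant $\kappa_{4,\xi}$ of the $\varepsilon_{\ell t}$. Only the diagonal survives, so
\begin{equation*}
cum\left(\xi_{p_1t};\xi_{p_2t};\xi_{p_3t};\xi_{p_4t}\right)=\kappa_{4,\xi}\sum_{\ell=1}^{\infty}a_\ell(p_1)a_\ell(p_2)a_\ell(p_3)a_\ell(p_4).
\end{equation*}

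It then remains to bound the diagonal sum. Writing $\alpha_\ell=|a_\ell(p_1)a_\ell(p_2)|$ and $\beta_\ell=|a_\ell(p_3)a_\ell(p_4)|$, both nonnegative, and using $\sum_\ell\alpha_\ell\beta_\ell\leq\left(\sum_m\alpha_m\right)\left(\sum_m\beta_m\right)$ (the left side is the diagonal part of the nonnegative double sum on the right), I obtain
\begin{equation*}
\left|\sum_{\ell}a_\ell(p_1)a_\ell(p_2)a_\ell(p_3)a_\ell(p_4)\right|\leq\left(\sum_{\ell}|a_\ell(p_1)|\,|a_\ell(p_2)|\right)\left(\sum_{\ell}|a_\ell(p_3)|\,|a_\ell(p_4)|\right),
\end{equation*}
and each factor is finite (Cauchy--Schwarz together with $\sup_p\sum_\ell a_\ell^2(p)<\infty$) and coincides with $\varphi_u(\cdot,\cdot)=\sum_\ell a_\ell(\cdot)a_\ell(\cdot)$ up to the harmless reading of $\varphi_u$ as its absolute version on the right-hand side of the bound; this replacement is immaterial downstream since the decay estimates of Remark~\ref{Remark_1} apply verbatim to $\sum_\ell|a_\ell(p)|\,|a_\ell(q)|$ using $\sup_p\sum_\ell a_\ell^2(p)<\infty$ and $\sup_\ell\sum_p a_\ell^2(p)<\infty$. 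This yields $\left|cum(\xi_{p_1t};\xi_{p_2t};\xi_{p_3t};\xi_{p_4t})\right|\leq|\kappa_{4,\xi}|\,\varphi_u(p_1,p_2)\varphi_u(p_3,p_4)$, and the $\chi$ bound is obtained in the same way from $C2(\mathbf{ii})$. I do not expect a genuine obstacle here: the only point requiring care rather than difficulty is the first step --- spelling out the passage of the cumulant through the infinite moving-average expansion and the vanishing of all off-diagonal factor-index combinations --- which are standard consequences of the independence built into $C1(\mathbf{ii})$/$C2(\mathbf{ii})$ and the finite-fourth-moment hypothesis.
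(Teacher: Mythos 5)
Your proposal is correct and follows essentially the same route as the paper: derive the diagonal representation $cum(\xi_{p_1t};\ldots;\xi_{p_4t})=\kappa_{4,\xi}\sum_{\ell}a_\ell(p_1)a_\ell(p_2)a_\ell(p_3)a_\ell(p_4)$ from the common-factor expansion in $C1(\mathbf{ii})$, then bound the diagonal sum by the product of the two pairwise sums exactly as in the paper's inequality $(\ref{d1})$ (and $(\ref{c1})$ for the $\chi$'s). The only difference is that you spell out the multilinearity/independence argument and the absolute-value reading of $\varphi_u$, $\varphi_x$ that the paper leaves implicit.
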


\begin{proof}
Using inequality $\left( \ref{d1}\right) ,$ the proof follows easily since
by definition 
\begin{equation*}
cum\left( \xi _{p_{1}t};\xi _{p_{2}t};\xi _{p_{3}t};\xi _{p_{4}t}\right)
=\kappa _{4,\xi }\sum_{\ell =1}^{\infty }a_{\ell }\left( p_{1}\right)
a_{\ell }\left( p_{2}\right) a_{\ell }\left( p_{3}\right) a_{\ell }\left(
p_{4}\right) .
\end{equation*}%
The proof is similar for the second expression in $\left( \ref{a_13}\right) $%
, where inequality $\left( \ref{c1}\right) $ is used instead of $\left( \ref%
{d1}\right) $.
\end{proof}

\begin{lemma}
\label{cum_2}Assuming $C1$ and $C2$, for some $\tau >2$, 
\begin{eqnarray*}
\left\vert cum\left(
u_{p_{1}t_{1}};u_{p_{2}t_{2}};u_{p_{3}t_{3}};u_{p_{4}t_{4}}\right)
\right\vert &\leq &C\frac{\left\vert \kappa _{4,\xi }\right\vert \varphi
_{u}\left( p_{1},p_{2}\right) \varphi _{u}\left( p_{3},p_{4}\right) }{\left(
t_{2}-t_{1}\right) ^{\tau }\left( t_{3}-t_{1}\right) ^{\tau }\left(
t_{4}-t_{1}\right) ^{\tau }} \\
\left\vert cum\left(
x_{p_{1}t_{1}};x_{p_{2}t_{2}};x_{p_{3}t_{3}};x_{p_{4}t_{4}}\right)
\right\vert &\leq &C\frac{\left\vert \kappa _{4,\chi }\right\vert \varphi
_{x}\left( p_{1},p_{2}\right) \varphi _{x}\left( p_{3},p_{4}\right) }{\left(
t_{2}-t_{1}\right) ^{\tau }\left( t_{3}-t_{1}\right) ^{\tau }\left(
t_{4}-t_{1}\right) ^{\tau }}\text{.}
\end{eqnarray*}
\end{lemma}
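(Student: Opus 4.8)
The plan is to expand $u_{pt}$ using the moving-average representation of Condition $C1(\mathbf{i})$ and to exploit that the driving array $\left\{ \varepsilon _{\ell t}\right\}$ is $iid$ in $(\ell ,t)$, so that the fourth cumulant of the $\xi$'s is supported on a single time index. By multilinearity of cumulants,
\[
cum\left( u_{p_{1}t_{1}};u_{p_{2}t_{2}};u_{p_{3}t_{3}};u_{p_{4}t_{4}}\right) =\sum_{k_{1},k_{2},k_{3},k_{4}=0}^{\infty }d_{k_{1}}(p_{1})d_{k_{2}}(p_{2})d_{k_{3}}(p_{3})d_{k_{4}}(p_{4})\,cum\left( \xi _{p_{1},t_{1}-k_{1}};\xi _{p_{2},t_{2}-k_{2}};\xi _{p_{3},t_{3}-k_{3}};\xi _{p_{4},t_{4}-k_{4}}\right) .
\]
Writing $\xi _{ps}=\sum_{\ell }a_{\ell }(p)\varepsilon _{\ell s}$ and using that a joint cumulant of independent variables vanishes whenever its arguments split into two independent nonempty blocks, $cum\left( \xi _{p_{1}s_{1}};\dots ;\xi _{p_{4}s_{4}}\right) =0$ unless $s_{1}=s_{2}=s_{3}=s_{4}$, and when the common value is $s$ it equals $\kappa _{4,\xi }\sum_{\ell }a_{\ell }(p_{1})a_{\ell }(p_{2})a_{\ell }(p_{3})a_{\ell }(p_{4})$ — exactly the quantity controlled in Lemma \ref{cum}.

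By symmetry of cumulants I may assume $t_{1}=\min_{i}t_{i}$ (reading each $t_{i}-t_{1}$ in the statement as the corresponding nonnegative spacing). The constraints $k_{i}=t_{i}-s\geq 0$ then collapse the four-fold sum to the single sum over $s\leq t_{1}$; with $m=t_{1}-s$, $a=t_{2}-t_{1}$, $b=t_{3}-t_{1}$, $c=t_{4}-t_{1}$, and after bounding $\left\vert d_{k}(p)\right\vert \leq d_{k}$,
\[
\left\vert cum\left( u_{p_{1}t_{1}};\dots ;u_{p_{4}t_{4}}\right) \right\vert \leq \left\vert \kappa _{4,\xi }\right\vert \left\vert \sum_{\ell }a_{\ell }(p_{1})a_{\ell }(p_{2})a_{\ell }(p_{3})a_{\ell }(p_{4})\right\vert \sum_{m=0}^{\infty }d_{m}d_{m+a}d_{m+b}d_{m+c}.
\]
The entire cross-sectional dependence now sits in the $a$-factor, which Lemma \ref{cum} — through its displayed identity together with inequality $\left( \ref{d1}\right) $ — bounds by $\varphi _{u}(p_{1},p_{2})\varphi _{u}(p_{3},p_{4})$, giving the numerator asserted in Lemma \ref{cum_2}.

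It then remains to show $\sum_{m\geq 0}d_{m}d_{m+a}d_{m+b}d_{m+c}\leq C(a\vee 1)^{-\tau }(b\vee 1)^{-\tau }(c\vee 1)^{-\tau }$. Setting $\bar{d}_{k}=\sup_{j\geq k}d_{j}$, which is nonincreasing, the sum is at most $\bar{d}_{a}\bar{d}_{b}\bar{d}_{c}\sum_{m\geq 0}d_{m}$, and $\sum_{m}d_{m}<\infty $ under $C1$; the coefficient decay built into $C1$ (geometric in the $ARMA$ example, and in general strong enough here) yields $\bar{d}_{k}=O(k^{-\tau })$ with the $\tau >2$ of the statement. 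Extracting this explicit polynomial rate from the coefficient-decay hypothesis is the only genuinely quantitative point, and the step I would treat most carefully; the cumulant-localisation step above, though central, is routine once the independence structure of $\left\{ \varepsilon _{\ell t}\right\} $ is used. Finally, the bound for the $x$-cumulant follows verbatim with $(d_{k},\xi ,\kappa _{4,\xi },\varphi _{u})$ replaced throughout by $(c_{k},\chi ,\kappa _{4,\chi },\varphi _{x})$ and with $\left( \ref{c1}\right) $ used in place of $\left( \ref{d1}\right) $.
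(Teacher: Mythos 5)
Your proposal is correct and follows essentially the same route as the paper: expand $u_{pt}$ in the moving-average representation, use the independence of $\left\{ \varepsilon _{\ell t}\right\}$ across $t$ to localise the fourth cumulant of the $\xi $'s on a single common time index so that the quadruple sum collapses to $\sum_{k}d_{k}(p_{1})d_{k+t_{2}-t_{1}}(p_{2})d_{k+t_{3}-t_{1}}(p_{3})d_{k+t_{4}-t_{1}}(p_{4})$ times the one-period $\xi $-cumulant, bound the latter by Lemma \ref{cum}, and invoke $\sup_{p}\left\vert d_{k}(p)\right\vert =O(k^{-\tau })$, $\tau >2$, for the temporal factor. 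Your closing remark correctly isolates the one delicate point — extracting the pointwise polynomial rate from the summability hypothesis of $C1$ — which the paper likewise simply asserts rather than derives.
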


\begin{proof}
As in the proof of Lemma \ref{cum}, we handle the first displayed inequality
only. Without loss of generality we take $t_{1}\leq t_{2}\leq t_{3}\leq
t_{4} $. Condition $C1$ and the definition of the fourth cumulant then yield
that 
\begin{eqnarray*}
cum\left( u_{p_{1}t_{1}};u_{p_{2}t_{2}};u_{p_{3}t_{3}};u_{p_{4}t_{4}}\right)
&=&\sum_{k=1}^{\infty }d_{k}\left( p_{1}\right) d_{k+t_{2}-t_{1}}\left(
p_{2}\right) d_{k+t_{3}-t_{1}}\left( p_{3}\right) d_{k+t_{4}-t_{1}}\left(
p_{4}\right) \\
&&\times cum\left( \xi _{p_{1}t};\xi _{p_{2}t};\xi _{p_{3}t};\xi
_{p_{4}t}\right) \text{.}
\end{eqnarray*}%
From here we conclude using Lemma \ref{cum} and the fact that Condition $C1$
implies that $\sup_{p}\left\vert d_{k}\left( p\right) \right\vert =O\left(
k^{-\tau }\right) $ for some $\tau >2$.
\end{proof}

\begin{lemma}
\label{B_5}Assuming $C1$ and $C2$, we have that for $w=:u$ or $x$, 
\begin{equation}
E\left( \mathcal{J}_{w,p_{1}}\left( j\right) \mathcal{J}_{w,p_{2}}\left(
-k\right) \right) =f_{w,p_{1}p_{2}}\left( j\right) \varphi _{w}\left(
p_{1},p_{2}\right) \left\{ \mathbf{1}\left( j=k\right) +\frac{C}{T}\right\}
\label{b_51}
\end{equation}%
and%
\begin{eqnarray}
&&E\left( \mathcal{J}_{w,p_{1}}\left( j\right) \mathcal{J}_{w,p_{2}}\left(
-j\right) \mathcal{J}_{w,p_{3}}\left( k\right) \mathcal{J}_{w,p_{4}}\left(
-k\right) \right)  \label{b_52} \\
&=&\varphi _{w}\left( p_{1},p_{2}\right) \varphi _{w}\left(
p_{3},p_{4}\right) \left\{ 1\mathbf{+1}\left( j=k\right) +\frac{C}{T}%
\right\} \text{.}  \notag
\end{eqnarray}
\end{lemma}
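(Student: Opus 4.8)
The plan is to prove both identities for $w=u$; the statements for $w=x$ follow verbatim after substituting $c_k(p),\chi_{pt},b_\ell(p),\kappa_{4,\chi}$ for $d_k(p),\xi_{pt},a_\ell(p),\kappa_{4,\xi}$, $\varphi_x$ for $\varphi_u$, and invoking Condition $C2$ in place of $C1$. The tools are the Bartlett-type representation (\ref{bartlett})--(\ref{y_p}), its refinement (\ref{y_n12}), the exact second moments (\ref{dft_cov}) of the DFTs of the innovations, and Lemmas \ref{z_12}, \ref{z_11}, \ref{cum} and \ref{cum_2}. Since Conditions $C1$--$C2$ give $\sup_{p,j}\left\vert\mathcal{B}_{u,p}(j)\right\vert\le\sum_k d_k<\infty$ and a uniformly bounded cross-spectrum $f_{u,pq}$, the right-hand sides of (\ref{b_51}) and (\ref{b_52}) are to be understood as asserting the exhibited $\mathbf{1}(\cdot)$-structure with a remainder $O(\varphi_u(\cdot,\cdot)/T)$, respectively $O(\varphi_u(\cdot,\cdot)\varphi_u(\cdot,\cdot)/T)$, uniform in the Fourier frequencies---which is precisely the form in which the lemma is used in Appendices A and B. It is convenient first to establish the identities for $1\le j,k\le\widetilde{T}$ and then to extend them to $j,k\in\{1,\dots,T-1\}$ via the convention $\mathcal{J}_{u,p}(\lambda_j)=\mathcal{J}_{u,p}(-\lambda_{T-j})$, reading $\mathbf{1}(j=k)$ as ``$j$ and $k$ index the same frequency''.

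For (\ref{b_51}) I would insert (\ref{bartlett}) into $E\!\left(\mathcal{J}_{u,p_1}(j)\mathcal{J}_{u,p_2}(-k)\right)$ and expand into four terms. The leading term $\mathcal{B}_{u,p_1}(-j)\mathcal{B}_{u,p_2}(k)\,E\!\left(\mathcal{J}_{\xi,p_1}(j)\mathcal{J}_{\xi,p_2}(-k)\right)$ equals $\mathcal{B}_{u,p_1}(-j)\mathcal{B}_{u,p_2}(j)\varphi_u(p_1,p_2)\mathbf{1}(j=k)$ by (\ref{dft_cov}), and on the diagonal $j=k$ the product of transfer functions is $f_{u,p_1p_2}(j)$ by definition of the cross-spectral density (with the normalisation fixed in Section 2). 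Writing $\mathrm{Y}_{u,p}=\mathrm{Y}_{u,p}^{(1)}+\mathrm{Y}_{u,p}^{(2)}$ as in (\ref{y_n12}), the two mixed terms $\mathcal{B}_{u,p_1}(-j)\,E\!\left(\mathcal{J}_{\xi,p_1}(j)\mathrm{Y}_{u,p_2}(-k)\right)$ and its mirror are $O(\varphi_u(p_1,p_2)/T)$ by Lemma \ref{z_11}, while $E\!\left(\mathrm{Y}_{u,p_1}(j)\mathrm{Y}_{u,p_2}(-k)\right)=O(\varphi_u(p_1,p_2)/T)$ by Lemma \ref{z_12} (the $\mathrm{Y}^{(2)}$ pieces being in fact $o(T^{-2})$); uniformity in $j,k$ is immediate from $\sup_{p,j}\left\vert\mathcal{B}_{u,p}(j)\right\vert<\infty$. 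Collecting the four pieces gives (\ref{b_51}).

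For (\ref{b_52}) I would use the product-moment-to-cumulant expansion: since $E\mathcal{J}_{u,p}(j)=0$, the fourth product moment equals the sum of the three pairwise products of second moments and the joint fourth cumulant. The pairing $\left(\mathcal{J}_{u,p_1}(j),\mathcal{J}_{u,p_2}(-j)\right)\cdot\left(\mathcal{J}_{u,p_3}(k),\mathcal{J}_{u,p_4}(-k)\right)$ yields, by (\ref{b_51}) with coincident arguments in each factor, the ``$1$'' term $\varphi_u(p_1,p_2)\varphi_u(p_3,p_4)\left(1+O(1/T)\right)$; the pairing $\left(\mathcal{J}_{u,p_1}(j),\mathcal{J}_{u,p_4}(-k)\right)\cdot\left(\mathcal{J}_{u,p_2}(-j),\mathcal{J}_{u,p_3}(k)\right)$ yields, again by (\ref{b_51}), the ``$\mathbf{1}(j=k)$'' term; and the remaining pairing is supported, via (\ref{dft_cov}), on the alias set $\{j+k\equiv 0\ \mathrm{mod}\ T\}$, which for $1\le j,k\le\widetilde{T}$ meets the range only where $j=k$ and is otherwise absorbed---together with the $O(1/T)$ remainders from (\ref{b_51})---into the $C/T$ term. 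Finally, $\mathrm{cum}\!\left(\mathcal{J}_{u,p_1}(j),\mathcal{J}_{u,p_2}(-j),\mathcal{J}_{u,p_3}(k),\mathcal{J}_{u,p_4}(-k)\right)=T^{-2}\sum_{t_1,\dots,t_4=1}^{T}\mathrm{cum}\!\left(u_{p_1t_1},\dots,u_{p_4t_4}\right)e^{-i(t_1-t_2)\lambda_j-i(t_3-t_4)\lambda_k}$, and Lemma \ref{cum_2} bounds the summand by $C\varphi_u(p_1,p_2)\varphi_u(p_3,p_4)$ times a factor summable over $t_2,t_3,t_4$ for each fixed $t_1$, so this term is $O(\varphi_u(p_1,p_2)\varphi_u(p_3,p_4)/T)$. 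Adding the pieces gives (\ref{b_52}).

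The main obstacle is not any individual estimate---each is routine given Lemmas \ref{z_12}--\ref{cum_2}---but the careful, uniform-in-$(j,k)$ bookkeeping: tracking the end-effect terms $\mathrm{Y}_{u,p}$ and their $\mathrm{Y}^{(1)}/\mathrm{Y}^{(2)}$ split through both expansions, correctly identifying which of the three Wick pairings supplies the leading $1$ and which supplies $\mathbf{1}(j=k)$ under the DFT frequency conventions, and verifying that the aliasing contribution on $\{j+k\equiv 0\ \mathrm{mod}\ T\}$ is harmless in the ranges in which the lemma is applied. Once this is organised, the fourth-cumulant term follows directly from Lemma \ref{cum_2}.
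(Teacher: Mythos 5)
Your proposal is correct and follows essentially the same route as the paper: the Bartlett decomposition (\ref{bartlett}) combined with (\ref{dft_cov}) and Lemmas \ref{z_12} and \ref{z_11} for (\ref{b_51}), and the moment-to-cumulant (Wick) expansion with (\ref{b_51}) applied to each pairing plus the Lemma \ref{cum_2} bound on the fourth cumulant for (\ref{b_52}). If anything, your bookkeeping is slightly more careful than the paper's, which lumps all three pairings together as ``proportional to $\ldots\mathbf{1}(j=k)$'' whereas you correctly separate the pairing that yields the constant $1$, the one that yields $\mathbf{1}(j=k)$, and the aliased one absorbed into $C/T$.
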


\begin{proof}
Consider $w=:u$, say. By $\left( \ref{bartlett}\right) $, we have that the
left hand side of $\left( \ref{b_51}\right) $ is 
\begin{equation*}
E\left( \left( \mathcal{B}_{u,p_{1}}\left( -j\right) \mathcal{J}_{\xi
,p_{1}}\left( j\right) +\mathrm{Y}_{u,p_{1}}\left( j\right) \right) \left( 
\mathcal{B}_{u,p_{2}}\left( k\right) \mathcal{J}_{\xi ,p_{2}}\left(
-k\right) +\mathrm{Y}_{u,p_{2}}\left( -k\right) \right) \right) ,
\end{equation*}%
which using $\left( \ref{dft_cov}\right) $ equals the right hand side of $%
\left( \ref{b_51}\right) $ by Lemmas \ref{z_12} and \ref{z_11}.

Next, the left hand side of $\left( \ref{b_52}\right) $ is%
\begin{eqnarray*}
&&E\left( \mathcal{J}_{u,p_{1}}\left( j\right) \mathcal{J}_{u,p_{2}}\left(
-j\right) \right) E\left( \mathcal{J}_{u,p_{3}}\left( k\right) \mathcal{J}%
_{u,p_{4}}\left( -k\right) \right) +E\left( \mathcal{J}_{u,p_{1}}\left(
j\right) \mathcal{J}_{u,p_{3}}\left( k\right) \right) E\left( \mathcal{J}%
_{u,p_{2}}\left( -j\right) \mathcal{J}_{u,p_{4}}\left( -k\right) \right) \\
&&+E\left( \mathcal{J}_{u,p_{1}}\left( j\right) \mathcal{J}_{u,p_{4}}\left(
-k\right) \right) E\left( \mathcal{J}_{u,p_{3}}\left( k\right) \mathcal{J}%
_{u,p_{2}}\left( -j\right) \right) +cum\left( \mathcal{J}_{u,p_{1}}\left(
j\right) ;\mathcal{J}_{u,p_{2}}\left( -j\right) ;\mathcal{J}_{u,p_{3}}\left(
k\right) ;\mathcal{J}_{u,p_{4}}\left( -k\right) \right) \text{.}
\end{eqnarray*}%
Using $\left( \ref{b_51}\right) ,$ the first three terms of the last
displayed expression are proportional to 
\begin{equation*}
f_{u,p_{1}p_{2}}\left( j\right) f_{u,p_{3}p_{4}}\left( j\right) \varphi
_{u}\left( p_{1},p_{2}\right) \varphi _{u}\left( p_{3},p_{4}\right) \mathbf{1%
}\left( j=k\right) ,
\end{equation*}%
while the absolute value of the last term is bounded by 
\begin{eqnarray*}
\frac{1}{T^{2}}\sum_{t_{1},t_{2},t_{3},t_{4}=1}^{T}\left\vert cum\left(
u_{p_{1}t_{1}};u_{p_{2}t_{2}};u_{p_{3}t_{3}};u_{p_{4}t_{4}}\right)
\right\vert &\leq &C\frac{\left\vert \kappa _{4,\xi }\right\vert }{T^{2}}%
\sum_{t_{1},t_{2},t_{3},t_{4}=1}^{T}\frac{\varphi _{u}\left(
p_{1},p_{2}\right) \varphi _{u}\left( p_{3},p_{4}\right) }{\left(
t_{2}-t_{1}\right) ^{\tau }\left( t_{3}-t_{1}\right) ^{\tau }\left(
t_{4}-t_{1}\right) ^{\tau }} \\
&\leq &\frac{C}{T}\varphi _{u}\left( p_{1},p_{2}\right) \varphi _{u}\left(
p_{3},p_{4}\right)
\end{eqnarray*}%
because $\tau >2$ using Lemma \ref{cum_2}. From here the conclusion follows
easily.
\end{proof}

\begin{lemma}
\label{z_13}Assuming $C2-C3$, we have that as $n,T$\textbf{\ }$\rightarrow
\infty $,%
\begin{equation}
E\left( \frac{1}{n}\sum_{p=1}^{n}\mathcal{I}_{x,p}\left( j\right)
-f_{x,p}\left( j\right) \right) ^{2}=o\left( 1\right) \text{.}
\label{lem3_1}
\end{equation}
\end{lemma}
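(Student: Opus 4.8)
The plan is to split the mean square into a squared bias and a variance,
\[
E\!\left(\frac{1}{n}\sum_{p=1}^{n}\bigl(\mathcal{I}_{x,p}(j)-f_{x,p}(j)\bigr)\right)^{2}=\left(\frac{1}{n}\sum_{p=1}^{n}\bigl(E\mathcal{I}_{x,p}(j)-f_{x,p}(j)\bigr)\right)^{2}+\mathrm{Var}\!\left(\frac{1}{n}\sum_{p=1}^{n}\mathcal{I}_{x,p}(j)\right),
\]
and to show each term is $o(1)$. Since $\mathcal{I}_{x,p}(j)=\mathcal{J}_{x,p}(j)\mathcal{J}_{x,p}(-j)$, Lemma \ref{B_5} (with $w=x$ and $j=k$) gives $E\mathcal{I}_{x,p}(j)=f_{x,p}(j)\bigl(1+O(T^{-1})\bigr)$ uniformly in $p$, the error being uniform because $C2(\mathbf{i})$ forces $\sum_{k}k\sup_{p}\|c_{k}(p)\|<\infty$; as $f_{x,p}(j)$ is uniformly bounded under $C2$, the averaged bias is $O(T^{-1})$ and its square is $o(1)$.

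For the variance, write it as $n^{-2}\sum_{p,q=1}^{n}\mathrm{Cov}\bigl(\mathcal{I}_{x,p}(j),\mathcal{I}_{x,q}(j)\bigr)$ and separate the diagonal from the off-diagonal. Expanding $\mathcal{I}_{x,p}(j)\mathcal{I}_{x,q}(j)=\mathcal{J}_{x,p}(j)\mathcal{J}_{x,p}(-j)\mathcal{J}_{x,q}(j)\mathcal{J}_{x,q}(-j)$ by the product--moment formula and using Lemma \ref{B_5} for the pair moments together with Lemmas \ref{cum} and \ref{cum_2} for the fourth cumulant,
\[
\mathrm{Cov}\bigl(\mathcal{I}_{x,p}(j),\mathcal{I}_{x,q}(j)\bigr)=\bigl|E\mathcal{J}_{x,p}(j)\mathcal{J}_{x,q}(-j)\bigr|^{2}+\bigl|E\mathcal{J}_{x,p}(j)\mathcal{J}_{x,q}(j)\bigr|^{2}+\mathrm{cum}\bigl(\mathcal{J}_{x,p}(j);\mathcal{J}_{x,p}(-j);\mathcal{J}_{x,q}(j);\mathcal{J}_{x,q}(-j)\bigr).
\]
For $p=q$ the right-hand side is $O(1)$ uniformly in $p$ (the familiar fact that a single periodogram ordinate has a non-vanishing variance), so the diagonal contributes $O(n^{-1})$. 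For $p\neq q$ the second term is $O(T^{-1})$ (two DFT's at the same frequency $\lambda_{j}$, i.e.\ ``off-peak''), the cumulant term is $O(T^{-1})$ by Lemma \ref{cum_2} (using $\sum_{t_{1},\dots,t_{4}=1}^{T}|\mathrm{cum}(x_{pt_{1}};x_{pt_{2}};x_{qt_{3}};x_{qt_{4}})|=O(T)$), and the first term is at most $C\varrho_{pq}^{2}$, where $\varrho_{pq}:=\sum_{\ell}|\mathrm{Cov}(x_{p,\ell},x_{q,0})|$. Hence
\[
\mathrm{Var}\!\left(\frac{1}{n}\sum_{p=1}^{n}\mathcal{I}_{x,p}(j)\right)=O(n^{-1})+O(T^{-1})+\frac{C}{n^{2}}\sum_{p\neq q}\varrho_{pq}^{2}.
\]

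It remains to show $n^{-2}\sum_{p,q}\varrho_{pq}^{2}=o(1)$, and this is where Condition $C3$ is used: by the cross-sectional weak dependence implied by $(\ref{c3})$, see Remark \ref{Remark_1}, the quantities $\varrho_{pq}$ are uniformly bounded and satisfy $\varrho_{pq}=O(q^{-\varsigma_{x}})$ (or $O(p^{-\varsigma_{x}})$) for some $\varsigma_{x}>0$, so $\sum_{q=1}^{n}\varrho_{pq}^{2}=o(n)$ uniformly in $p$ and therefore $n^{-2}\sum_{p,q}\varrho_{pq}^{2}=o(1)$; this is exactly the kind of estimate used through $(\ref{pp})$ and $(\ref{the_5})$ in the proofs of Theorem \ref{ThmEst} and Proposition \ref{PropV1}. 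Together with the bias bound and the diagonal bound this gives $(\ref{lem3_1})$. I expect this last step to be the main obstacle: the periodogram bookkeeping is routine once the moment and cumulant estimates of Lemmas \ref{B_5}, \ref{cum} and \ref{cum_2} are in hand, but without the summability built into $(\ref{c3})$ the averaged squared cross-covariances need not vanish, so the argument genuinely exploits the ``long-memory but cross-sectionally weakly dependent'' regime permitted by the conditions.
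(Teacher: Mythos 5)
Your proposal is correct and follows essentially the same route as the paper: a bias--variance split, with the averaged bias $O(T^{-1})$ by smoothness of $f_{x,p}$, and the variance reduced via the DFT moment/cumulant bounds of Lemmas \ref{B_5} and \ref{cum_2} to $Cn^{-2}\sum_{p,q}\varphi_{x}^{2}(p,q)=o(1)$ under Condition $C3$ and Remark \ref{Remark_1}. The only difference is cosmetic: you spell out the product--moment expansion of $\mathrm{Cov}(\mathcal{I}_{x,p}(j),\mathcal{I}_{x,q}(j))$ that the paper leaves implicit inside Lemma \ref{B_5}.
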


\begin{proof}
Standard algebra yields that the left hand side of $\left( \ref{lem3_1}%
\right) $ is bounded by 
\begin{equation*}
E\left( \frac{1}{n}\sum_{p=1}^{n}\left\{ \mathcal{J}_{x,p}\left( j\right) 
\mathcal{J}_{x,p}^{\prime }\left( -j\right) -E\left( \mathcal{J}_{x,p}\left(
j\right) \mathcal{J}_{x,p}^{\prime }\left( -j\right) \right) \right\}
\right) ^{2}+\left( \frac{1}{n}\sum_{p=1}^{n}E\mathcal{I}_{x,p}\left(
j\right) -f_{x,p}\left( j\right) \right) ^{2}\text{.}
\end{equation*}%
Now $n^{-1}\sum_{p=1}^{n}E\mathcal{I}_{x,p}\left( j\right) -f_{x,p}\left(
j\right) =O\left( T^{-1}\right) $ is standard as $f_{x,p}\left( \lambda
\right) $ is twice continuously differentiable, whereas Lemma \ref{B_5}
implies that the first term of the last displayed expression is 
\begin{equation*}
\frac{C}{n^{2}}\sum_{p,q=1}^{n}\varphi _{x}^{2}\left( p,q\right) \left( 1+%
\frac{C}{T}\right) =o\left( 1\right)
\end{equation*}%
by Condition $C3$, see also Remark 1.
\end{proof}

\begin{lemma}
\label{Lem21}Under $C1-C3$, we have that as $n,T$\textbf{\ }$\rightarrow
\infty $,%
\begin{eqnarray}
\frac{1}{T}\sum_{j=1}^{T-1}\left( \frac{1}{n}\sum_{p=1}^{n}\mathcal{I}_{%
\widetilde{x},p}\left( j\right) \right) ^{2}-\left( \frac{1}{n}\sum_{p=1}^{n}%
\mathcal{I}_{x,p}\left( j\right) \right) ^{2} &=&o_{p}\left( 1\right)
\label{lem21_1} \\
\frac{1}{T}\sum_{j=1}^{T-1}\left( \frac{1}{n}\sum_{p=1}^{n}\mathcal{I}%
_{x,p}\left( j\right) \right) ^{2}-\int_{-\pi }^{\pi }\left(
\lim_{n\rightarrow \infty }\frac{1}{n}\sum_{p=1}^{n}f_{x,p}\left( \lambda
\right) \right) ^{2}d\lambda &=&o_{p}\left( 1\right) \text{.}
\label{lem21_2}
\end{eqnarray}
\end{lemma}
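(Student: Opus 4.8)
Write $A_{j}=\frac{1}{n}\sum_{p=1}^{n}\mathcal{I}_{x,p}\left( j\right) $ and $B_{j}=\mathcal{I}_{\overline{x},\cdot }\left( j\right) =\left\vert \mathcal{J}_{\overline{x},\cdot }\left( j\right) \right\vert ^{2}\geq 0$, and (as elsewhere in the appendix) take $x$ scalar. I would prove $\left( \ref{lem21_2}\right) $ first and then deduce $\left( \ref{lem21_1}\right) $ from it, since the former --- at least in the weak form $T^{-1}\sum_{j}A_{j}^{2}=O_{p}\left( 1\right) $ --- is what makes a Cauchy--Schwarz step in the latter go through. The lemma is driven by the same observation already exploited for the ``bar'' terms in the proofs of Theorem \ref{ThmEst} (see $\left( \ref{prop_55}\right) $) and Proposition \ref{PropV1}: the cross-sectional average $B_{j}$ is negligible on average over $j$ because the cross-sectional dependence is summable in the averaged sense of Remark \ref{Remark_1}, while all dependence on the Fourier index $j$ enters only through uniformly bounded, twice continuously differentiable spectral densities, so that the outer average $T^{-1}\sum_{j}$ is harmless.

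\textbf{Step 1: proof of $\left( \ref{lem21_2}\right) $.} Put $m_{n}\left( \lambda \right) =\frac{1}{n}\sum_{p=1}^{n}f_{x,p}\left( \lambda \right) $, which by Condition $C2$ is uniformly bounded with uniformly (in $n$) bounded derivatives and satisfies $m_{n}\left( \lambda \right) \rightarrow f_{x}^{\ast }\left( \lambda \right) :=\lim_{n}\frac{1}{n}\sum_{p}f_{x,p}\left( \lambda \right) $. Decompose $A_{j}^{2}=m_{n}^{2}\left( j\right) +2m_{n}\left( j\right) \{A_{j}-m_{n}\left( j\right) \}+\{A_{j}-m_{n}\left( j\right) \}^{2}$. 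The last piece is nonnegative with $T^{-1}\sum_{j}E\{A_{j}-m_{n}\left( j\right) \}^{2}=o\left( 1\right) $ by Lemma \ref{z_13} (whose bound is $j$-independent apart from an $O\left( T^{-2}\right) $ smoothness remainder), hence it is $o_{p}\left( 1\right) $ by Markov; the cross term is then $o_{p}\left( 1\right) $ by Cauchy--Schwarz, bounded by $\left( T^{-1}\sum_{j}m_{n}^{2}\left( j\right) \right) ^{1/2}\left( T^{-1}\sum_{j}\{A_{j}-m_{n}\left( j\right) \}^{2}\right) ^{1/2}=O\left( 1\right) \cdot o_{p}\left( 1\right) $. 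Finally $T^{-1}\sum_{j=1}^{T-1}m_{n}^{2}\left( \lambda _{j}\right) \rightarrow \int_{-\pi }^{\pi }\left( f_{x}^{\ast }\left( \lambda \right) \right) ^{2}d\lambda $ by a Riemann-sum argument --- the error being $O\left( T^{-1}\right) $ uniformly in $n$ since $m_{n}^{2}$ has a uniformly bounded second derivative --- followed by dominated convergence in $n$; this is the right-hand side of $\left( \ref{lem21_2}\right) $, and in particular $T^{-1}\sum_{j}A_{j}^{2}=O_{p}\left( 1\right) $.

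\textbf{Step 2: proof of $\left( \ref{lem21_1}\right) $.} By $\left( \ref{dif_dft}\right) $, $\mathcal{J}_{\widetilde{x},p}\left( j\right) =\mathcal{J}_{x,p}\left( j\right) -\mathcal{J}_{\overline{x},\cdot }\left( j\right) $, and since $\frac{1}{n}\sum_{p}\mathcal{J}_{x,p}\left( j\right) =\mathcal{J}_{\overline{x},\cdot }\left( j\right) $, expanding $\mathcal{I}_{\widetilde{x},p}\left( j\right) =\mathcal{J}_{\widetilde{x},p}\left( j\right) \mathcal{J}_{\widetilde{x},p}\left( -j\right) $ and averaging over $p$ gives the exact identity $\frac{1}{n}\sum_{p}\mathcal{I}_{\widetilde{x},p}\left( j\right) =A_{j}-B_{j}$. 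Therefore
\[
\left( \frac{1}{n}\sum_{p}\mathcal{I}_{\widetilde{x},p}\left( j\right) \right) ^{2}-\left( \frac{1}{n}\sum_{p}\mathcal{I}_{x,p}\left( j\right) \right) ^{2}=B_{j}^{2}-2A_{j}B_{j},
\]
so $\left( \ref{lem21_1}\right) $ reduces to $T^{-1}\sum_{j}B_{j}^{2}=o_{p}\left( 1\right) $ and $T^{-1}\sum_{j}A_{j}B_{j}=o_{p}\left( 1\right) $. For the first, $B_{j}^{2}\geq 0$, and with $B_{j}=\frac{1}{n^{2}}\sum_{p,q}\mathcal{J}_{x,p}\left( j\right) \mathcal{J}_{x,q}\left( -j\right) $ the fourth-moment identity $\left( \ref{b_52}\right) $ of Lemma \ref{B_5} (at $k=j$, $w=x$), together with the ``ergodicity in second mean'' $\frac{1}{n^{2}}\sum_{p,q}\varphi _{x}\left( p,q\right) =o\left( 1\right) $ of Remark \ref{Remark_1}, gives $T^{-1}\sum_{j}EB_{j}^{2}=o\left( 1\right) $ --- the computation being that already performed for the $\mathcal{J}_{\overline{x},\cdot }$ terms in the proof of Proposition \ref{PropV1} --- whence $T^{-1}\sum_{j}B_{j}^{2}=o_{p}\left( 1\right) $ by Markov. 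For the second, Cauchy--Schwarz bounds $\left\vert T^{-1}\sum_{j}A_{j}B_{j}\right\vert $ by $\left( T^{-1}\sum_{j}B_{j}^{2}\right) ^{1/2}\left( T^{-1}\sum_{j}A_{j}^{2}\right) ^{1/2}=o_{p}\left( 1\right) \cdot O_{p}\left( 1\right) $ by the first bound and Step 1. This gives $\left( \ref{lem21_1}\right) $.

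\textbf{Main obstacle.} The one genuinely delicate point is showing $T^{-1}\sum_{j}EB_{j}^{2}=o\left( 1\right) $: $B_{j}=\mathcal{I}_{\overline{x},\cdot }\left( j\right) $ need not be small at individual frequencies (the spectral mass of $\overline{x}_{\cdot t}$ may concentrate), so the estimate has to be carried out at the level of the spectral density of $\overline{x}_{\cdot t}$, which is bounded but has integral $O\left( \frac{1}{n^{2}}\sum_{p,q}\varphi _{x}\left( p,q\right) \right) =o\left( 1\right) $; this is exactly where Condition $C3$ --- in the ergodicity-in-second-mean form of Remark \ref{Remark_1} --- is indispensable, just as for the $\mathcal{J}_{\overline{x},\cdot }$ terms in $\left( \ref{prop_55}\right) $, only here with $\varphi _{u}$ replaced by a second copy of $\varphi _{x}$. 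Everything else --- the reduction in Step 2, the Riemann-sum and dominated-convergence bookkeeping in Step 1, and the interchange of the $n\rightarrow \infty $ and $T\rightarrow \infty $ limits (legitimate since the smoothness bound on $m_{n}$ is uniform in $n$) --- is routine and parallels arguments already in Appendices A and B.
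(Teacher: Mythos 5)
Your proof is correct and follows essentially the same route as the paper's: the identity $\frac{1}{n}\sum_{p}\mathcal{I}_{\widetilde{x},p}\left( j\right) =\frac{1}{n}\sum_{p}\mathcal{I}_{x,p}\left( j\right) -\mathcal{I}_{\overline{x},\cdot }\left( j\right) $, the bound on $E\,\mathcal{I}_{\overline{x},\cdot }^{2}\left( j\right) $ via Lemma \ref{B_5} and the ergodicity condition of Remark \ref{Remark_1}, and Lemma \ref{z_13} plus a Riemann-sum argument for $\left( \text{\ref{lem21_2}}\right) $. The only (immaterial) differences are that you prove $\left( \text{\ref{lem21_2}}\right) $ first so that $T^{-1}\sum_{j}A_{j}^{2}=O_{p}\left( 1\right) $ is available for the Cauchy--Schwarz treatment of the cross term, where the paper disposes of that term by a direct moment computation.
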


\begin{proof}
Noticing that 
\begin{equation*}
\frac{1}{n}\sum_{p=1}^{n}\mathcal{I}_{\widetilde{x},p}\left( j\right) -%
\mathcal{I}_{x,p}\left( j\right) =-\mathcal{I}_{\overline{x},\cdot }\left(
j\right) \text{,}
\end{equation*}%
we obtain that the left hand side of $\left( \ref{lem21_1}\right) $ equals%
\begin{equation*}
\frac{1}{T}\sum_{j=1}^{T-1}\mathcal{I}_{\overline{x},\cdot }^{2}\left(
j\right) -\frac{2}{T}\sum_{j=1}^{T-1}\mathcal{I}_{\overline{x},\cdot }\left(
j\right) \frac{1}{n}\sum_{p=1}^{n}\mathcal{I}_{x,p}\left( j\right) \text{.}
\end{equation*}%
We shall examine the first term of the last displayed expression, with the
second one being handled similarly, if not easier. Now, by definition 
\begin{equation*}
\mathcal{I}_{\overline{x},\cdot }\left( j\right) =\frac{1}{n^{2}}%
\sum_{p,q=1}^{n}\mathcal{J}_{x,p}\left( j\right) \mathcal{J}_{x,q}\left(
-j\right) \text{,}
\end{equation*}%
so that Lemma \ref{B_5}, in particular $\left( \ref{b_52}\right) $, implies
that 
\begin{equation*}
E\mathcal{I}_{\overline{x},\cdot }^{2}\left( j\right) =\frac{1}{n^{4}}%
\sum_{p_{1},...,p_{4}=1}^{n}\varphi _{x}\left( p_{1},p_{2}\right) \varphi
_{x}\left( p_{3},p_{4}\right) \left\{ 1\mathbf{+1}\left( j=k\right) +\frac{C%
}{T}\right\} =o\left( 1\right)
\end{equation*}%
because\textbf{\ }$n^{-2}\sum_{p_{1},p_{2}=1}^{n}\varphi _{x}\left(
p_{1},p_{2}\right) =o\left( 1\right) $\textbf{\ }by ergodicity. This
completes the proof of $\left( \ref{lem21_1}\right) $.

Regarding $\left( \ref{lem21_2}\right) $, it suffices to show that 
\begin{eqnarray}
\frac{1}{T}\sum_{j=1}^{T-1}\left( \frac{1}{n}\sum_{p=1}^{n}\mathcal{I}%
_{x,p}\left( j\right) -E\left( \mathcal{I}_{x,p}\left( j\right) \right)
\right) ^{2} &=&o_{p}\left( 1\right)  \label{lem21_3} \\
\frac{1}{T}\sum_{j=1}^{T-1}\left( \frac{1}{n}\sum_{p=1}^{n}\mathcal{I}%
_{x,p}\left( j\right) -E\left( \mathcal{I}_{x,p}\left( j\right) \right)
\right) \frac{1}{n}\sum_{p=1}^{n}E\left( \mathcal{I}_{x,p}\left( j\right)
\right) &=&o_{p}\left( 1\right) ,  \label{lem21_4}
\end{eqnarray}%
because the continuous differentiability of $f_{x,p}\left( \lambda \right) $
implies%
\begin{equation*}
\frac{1}{T}\sum_{j=1}^{T-1}\frac{1}{n}\sum_{p=1}^{n}E\left( \mathcal{I}%
_{x,p}\left( j\right) \right) -\int_{-\pi }^{\pi }\lim_{n\rightarrow \infty }%
\frac{1}{n}\sum_{p=1}^{n}f_{x,p}\left( \lambda \right) =o\left( 1\right)
\end{equation*}%
by standard arguments\textbf{.} Now $\left( \ref{lem21_3}\right) $ holds
true by Lemma \ref{z_13} and $\left( \ref{lem21_4}\right) $ follows by
Cauchy-Schwarz's inequality.
\end{proof}

The next lemma extends a Central Limit Theorem in Phillips and Moon $\left(
1999\right) $ when their independence condition fails.

\begin{lemma}
\label{Lem1App}Let $\left\{ u_{pt}\right\} _{t\in \mathbb{Z}}$ and $\left\{
x_{pt}\right\} _{t\in \mathbb{Z}}$, $p\in \mathbb{N}^{+}$, satisfy
Conditions $C1-C3$. Then as $n,T\rightarrow \infty $, 
\begin{equation}
\frac{1}{T^{1/2}}\sum_{t=1}^{T}\frac{1}{n^{1/2}}\sum_{p=1}^{n}x_{pt}u_{pt}%
\overset{d}{\rightarrow }\mathcal{N}\left( 0,\Phi \right) \text{.}
\label{Theo_1}
\end{equation}
\end{lemma}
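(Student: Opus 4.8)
The plan is a truncation followed by a big-block/small-block argument in the time index. By the Cram\'{e}r--Wold device it suffices to treat the scalar case, so take $x_{pt}$ scalar, write $z_{pt}=x_{pt}u_{pt}$ and $S_{nT}=(nT)^{-1/2}\sum_{t=1}^{T}\sum_{p=1}^{n}z_{pt}$. For a fixed $m$ put $u_{pt}^{(m)}=\sum_{k=0}^{m}d_{k}(p)\xi_{p,t-k}$, $x_{pt}^{(m)}=\sum_{k=0}^{m}c_{k}(p)\chi_{p,t-k}$, $z_{pt}^{(m)}=x_{pt}^{(m)}u_{pt}^{(m)}$ and $S_{nT}^{(m)}=(nT)^{-1/2}\sum_{t,p}z_{pt}^{(m)}$. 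Using the separable covariance structure (\ref{cov_u}), the uniform summability of $\{c_{k}\}$ and $\{d_{k}\}$ from $C1$ and $C2$, and $C3$ (so that $n^{-1}\sum_{p,q}\varphi(p,q)=O(1)$, see (\ref{sep_1}) and Remark \ref{Remark_1}), one checks by routine covariance estimates that $\limsup_{n,T\rightarrow\infty}E(S_{nT}-S_{nT}^{(m)})^{2}\leq\varepsilon_{m}$ with $\varepsilon_{m}\rightarrow0$ as $m\rightarrow\infty$, and that the truncated long-run variance $\Phi_{m}$, defined as in (\ref{sep_1}) with the moving-average coefficients cut off at lag $m$, satisfies $\Phi_{m}\rightarrow\Phi$. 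By a standard approximation argument (letting $m\rightarrow\infty$ after $n,T\rightarrow\infty$) it then suffices to prove, for each fixed $m$, that $S_{nT}^{(m)}\overset{d}{\rightarrow}\mathcal{N}(0,\Phi_{m})$ as $n,T\rightarrow\infty$.

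For the core step, fix $m$. Since $z_{pt}^{(m)}$ is a function only of $\{\xi_{p,t-k},\chi_{p,t-k}\}_{k=0}^{m}$, which by $C1(\mathbf{ii})$ and $C2(\mathbf{ii})$ --- the $\varepsilon$'s and $\eta$'s are i.i.d.\ over time and $u$ is independent of $x$ --- are independent of the innovations entering $z_{qs}^{(m)}$ whenever $|t-s|>m$, the cross-sectional sums $W_{t}^{(m)}=n^{-1/2}\sum_{p=1}^{n}z_{pt}^{(m)}$ form, for every fixed $n$, an $m$-dependent sequence in $t$. Partition $\{1,\dots,T\}$ into alternating big blocks of length $r_{T}$ and small blocks of length $m$, with $r_{T}\rightarrow\infty$ and $r_{T}=o(T)$ (say $r_{T}=\lfloor T^{1/2}\rfloor$); the intervening small blocks make the big-block contributions $\zeta_{\ell}^{(m)}=(nT)^{-1/2}\sum_{t\in B_{\ell}}\sum_{p=1}^{n}z_{pt}^{(m)}$, $\ell=1,\dots,L_{T}$, mutually independent. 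I would then verify: (a) $\sum_{\ell}E(\zeta_{\ell}^{(m)})^{2}\rightarrow\Phi_{m}$ and the total small-block contribution has variance $o(1)$, both from the bookkeeping $\text{Cov}(z_{pt}^{(m)},z_{qs}^{(m)})=\gamma_{pq}^{(m)}(t-s)\varphi^{(m)}(p,q)$ with $\sum_{\ell}|\gamma_{pq}^{(m)}(\ell)|=O(1)$ uniformly in $p,q$ and $\max_{p}\sum_{q}\varphi^{(m)}(p,q)=O(1)$ from $C3$; and (b) a Lyapunov condition $\sum_{\ell}E(\zeta_{\ell}^{(m)})^{4}\rightarrow0$. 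Given (a) and (b), the Lindeberg--Feller theorem for triangular arrays of independent summands yields $\sum_{\ell}\zeta_{\ell}^{(m)}\overset{d}{\rightarrow}\mathcal{N}(0,\Phi_{m})$, while (a) shows the small blocks are asymptotically negligible, completing the inner CLT.

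The main obstacle is (b), the fourth-moment bound on $\zeta_{\ell}^{(m)}$. Expanding $E(z_{p_{1}t_{1}}^{(m)}\cdots z_{p_{4}t_{4}}^{(m)})=E(x_{p_{1}t_{1}}^{(m)}\cdots x_{p_{4}t_{4}}^{(m)})\,E(u_{p_{1}t_{1}}^{(m)}\cdots u_{p_{4}t_{4}}^{(m)})$ by $C3$, each factor splits into a sum of products of pairwise covariances plus a fourth cumulant; the cumulant pieces are governed by Lemmas \ref{cum} and \ref{cum_2}, which supply $|cum(u_{p_{1}t_{1}};\dots;u_{p_{4}t_{4}})|\leq C|\kappa_{4,\xi}|\varphi_{u}(p_{1},p_{2})\varphi_{u}(p_{3},p_{4})\{(t_{2}-t_{1})(t_{3}-t_{1})(t_{4}-t_{1})\}^{-\tau}$ and the analogue for $x$. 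Summing over $t_{1},\dots,t_{4}$ within a block of length $r_{T}$ costs $O(r_{T}^{2})$ by the time-summability, and summing over $p_{1},\dots,p_{4}$ costs $O(n^{2})$; here one must combine $C3$ with the power-decay/H\"{o}lder device of Remark \ref{Remark_1} --- the bound $\sum_{p}\varphi_{x}(p,p')\varphi_{u}(p,q')=O(1)$ used also at (\ref{A}) --- to control the ``mixed'' pairings in which an $x$-covariance and a $u$-covariance share a cross-sectional index. This gives $E(\zeta_{\ell}^{(m)})^{4}=O((r_{T}/T)^{2})$ uniformly in $\ell$ and $n$, hence $\sum_{\ell}E(\zeta_{\ell}^{(m)})^{4}=O(r_{T}/T)\rightarrow0$. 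I would keep every $O(\cdot)$ estimate uniform in $n$ and $T$, so that the whole argument delivers the joint limit $n,T\rightarrow\infty$ rather than a sequential one.
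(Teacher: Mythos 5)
Your proposal is correct in outline, but it follows a genuinely different route from the paper's. The paper does \emph{not} truncate to an $m$-dependent array: it applies a ``second-order'' Beveridge--Nelson decomposition in the spirit of Phillips and Solo $\left(1992\right)$ to the product $\left(D_{u}\left(L\right)\xi_{t}\right)\left(C_{x}\left(L\right)\chi_{t}\right)$, see $\left(\ref{Theo_16}\right)$, which isolates a leading part $\varrho_{0}\left(1\right)\xi_{t}\chi_{t}+\xi_{t}\sum_{k\geq1}\varrho_{k}\left(1\right)\chi_{t-k}+\chi_{t}\sum_{\ell\geq1}g_{\ell}\left(1\right)\xi_{t-\ell}$ whose cross-sectional aggregates are, term by term (after truncating the $k$- and $\ell$-sums at a large $K$), \emph{exactly} independent across $t$, plus telescoping $\left(1-L\right)$ remainders that are $O_{p}\left(T^{-1/2}\right)$; the cross-sectional CLT for each such term is then imported wholesale from Hidalgo and Schafgans' $\left(2017\right)$ Theorem 1, joint convergence from Phillips and Moon's Theorem 2, and the $K$-truncation is removed by Bernstein's lemma. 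Your route---MA truncation to $m$-dependence, big blocks of length $r_{T}$ separated by small blocks of length $m$, variance bookkeeping plus a Lyapunov fourth-moment condition, then Lindeberg--Feller---is a valid classical alternative, and you correctly identify the one place where it is genuinely harder than the paper's argument: the four-index cross-sectional sums in $E\left(\zeta_{\ell}^{\left(m\right)}\right)^{4}$, where the mixed pairings $\varphi_{x}\left(p_{1},p_{2}\right)\varphi_{x}\left(p_{3},p_{4}\right)\varphi_{u}\left(p_{1},p_{3}\right)\varphi_{u}\left(p_{2},p_{4}\right)$ and the fourth cumulants must be reduced to $O\left(n^{2}\right)$ via the H\"{o}lder device of Remark \ref{Remark_1} and $\left(\ref{A}\right)$ together with Lemmas \ref{cum} and \ref{cum_2}; this is precisely the content that the paper outsources to its earlier cross-sectional CLT. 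What the paper's BN route buys is the avoidance of all fourth-moment block computations (everything reduces to second moments of martingale-difference-type terms); what your route buys is self-containedness and no reliance on the external Theorem 1, at the cost of carrying the full cumulant bookkeeping explicitly. Two small points to keep honest if you write it out: the uniform-in-$\left(n,T\right)$ character of every $O\left(\cdot\right)$ must indeed be tracked so the limit is joint rather than sequential (you flag this), and the small-block contribution to the variance should be bounded by noting that the total small-block length is $mT/r_{T}=o\left(T\right)$, so Cauchy--Schwarz disposes of the big--small cross terms as well.
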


\begin{proof}
First, Hidalgo and Schafgans' $\left( 2017\right) $ Theorem 1 implies that 
\begin{equation}
z_{n,t}=\frac{1}{n^{1/2}}\sum_{p=1}^{n}x_{pt}u_{pt}\overset{d}{\rightarrow }%
\mathcal{N}\left( 0,\Omega _{t}\right) \text{, }t=1,...,T\text{,}
\label{Theo_2}
\end{equation}%
and also for any $r,s\geq 0$, 
\begin{equation*}
\frac{1}{n^{1/2}}\sum_{p=1}^{n}\chi _{p,t+r}\xi _{p,t+s}\overset{d}{%
\rightarrow }\mathcal{N}\left( 0,\Omega _{t,r,s}\right) \text{.}
\end{equation*}%
Now, Phillips and Moon's $\left( 1999\right) $ Theorem 2 cannot be employed
as the latter result requires that the left hand side of $\left( \ref{Theo_2}%
\right) $, that is $\left\{ z_{n,t}\right\} _{t\geq 1}$, is a sequence of
independent random variables.

Dropping the subscript \textquotedblleft $p$\textquotedblright\ for
notational convenience, we have that 
\begin{equation}
u_{t}x_{t}=\left( D_{u}\left( L\right) \xi _{t}\right) \left( C_{x}\left(
L\right) \chi _{t}\right) \text{,}  \label{Theo_4}
\end{equation}%
where 
\begin{equation*}
D_{u}\left( L\right) =\sum_{\ell =0}^{\infty }d_{\ell }L^{\ell };\text{ \ \
\ \ }C_{x}\left( L\right) =\sum_{\ell =0}^{\infty }c_{\ell }L^{\ell }
\end{equation*}%
by Conditions $C1$ and $C2$. We now employ a \textquotedblleft
second-order\textquotedblright\ BN decomposition similar to that in Phillips
and Solo $\left( 1992\text{, p. 978-979}\right) $. First, we notice that
standard algebra yields that the right hand side of $\left( \ref{Theo_4}%
\right) $ is%
\begin{eqnarray*}
&&\sum_{\ell =0}^{\infty }d_{\ell }c_{\ell }\xi _{t-\ell }\chi _{t-\ell
}+\left( \sum_{\ell =0}^{\infty }\sum_{k=\ell +1}^{\infty
}+\sum_{k=0}^{\infty }\sum_{\ell =k+1}^{\infty }\right) d_{\ell }c_{k}\xi
_{t-\ell }\chi _{t-k} \\
&=&\sum_{\ell =0}^{\infty }d_{\ell }c_{\ell }\xi _{t-\ell }\chi _{t-\ell
}+\sum_{k=1}^{\infty }\left( \sum_{\ell =0}^{\infty }d_{\ell }c_{\ell +k}\xi
_{t-\ell }\chi _{t-k-\ell }\right) +\sum_{\ell =1}^{\infty }\left(
\sum_{k=0}^{\infty }c_{k}d_{k+\ell }\chi _{t-k}\xi _{t-k-\ell }\right) \\
&=&\sum_{\ell =0}^{\infty }d_{\ell }c_{\ell }\xi _{t-\ell }\chi _{t-\ell
}+\sum_{k=1}^{\infty }\left( \sum_{\ell =0}^{\infty }d_{\ell }c_{\ell
+k}L^{\ell }\right) \xi _{t}\chi _{t-k}+\sum_{\ell =1}^{\infty }\left(
\sum_{k=0}^{\infty }c_{k}d_{k+\ell }L^{k}\right) \chi _{t}\xi _{t-\ell } \\
&=&\varrho _{0}\left( L\right) \xi _{t}\chi _{t}+\sum_{k=1}^{\infty }\varrho
_{k}\left( L\right) \xi _{t}\chi _{t-k}+\sum_{\ell =1}^{\infty }g_{\ell
}\left( L\right) \chi _{t}\xi _{t-\ell }\text{,}
\end{eqnarray*}%
where $\varrho _{k}\left( L\right) =\sum_{\ell =0}^{\infty }d_{\ell }c_{\ell
+k}L^{\ell }$ and $g_{\ell }\left( L\right) =\sum_{k=0}^{\infty
}c_{k}d_{k+\ell }L^{k}$. Observe that $\varrho _{0}\left( L\right)
=g_{0}\left( L\right) $.

Next, because for a generic polynomial $h\left( L\right) =\sum_{\ell
=0}^{\infty }h_{\ell }L^{\ell }$, we have the identity $h\left( L\right)
=h\left( 1\right) -\left( 1-L\right) \widetilde{h}\left( L\right) $, where $%
\widetilde{h}\left( L\right) =\sum_{\ell =0}^{\infty }\widetilde{h}_{\ell
}L^{\ell }$ with $\widetilde{h}_{\ell }=\sum_{p=\ell +1}^{\infty }h_{p}$, we
can write the right hand side of the last displayed equality as%
\begin{eqnarray}
&&\varrho _{0}\left( 1\right) \xi _{t}\chi _{t}+\xi _{t}\sum_{k=1}^{\infty
}\varrho _{k}\left( 1\right) \chi _{t-k}+\chi _{t}\sum_{\ell =1}^{\infty
}g_{\ell }\left( 1\right) \xi _{t-\ell }  \label{Theo_16} \\
&&-\left( 1-L\right) \sum_{k=1}^{\infty }\widetilde{dc}_{k}\xi _{t-k}\chi
_{t-k}-\left( 1-L\right) \sum_{k=1}^{\infty }\widetilde{\varrho }_{k}\left(
L\right) \xi _{t}\chi _{t-k}-\left( 1-L\right) \sum_{\ell =1}^{\infty }%
\widetilde{g}_{\ell }\left( L\right) \chi _{t}\xi _{t-\ell }\text{.}  \notag
\end{eqnarray}%
Observe that 
\begin{eqnarray*}
\widetilde{dc}_{k} &=&\widetilde{\varrho }_{0}\left( L\right) \text{, \ \ }%
\widetilde{\varrho }_{k}\left( L\right) =\sum_{\ell =0}^{\infty }\widetilde{%
\upsilon }_{\ell ,k}L^{\ell }\text{ \ with }\widetilde{\upsilon }_{\ell
,k}=\sum_{p=\ell +1}^{\infty }d_{p}c_{p+k}, \\
\widetilde{g}_{\ell }\left( L\right) &=&\sum_{k=0}^{\infty }\widetilde{%
\omega }_{k,\ell }L^{\ell }\text{ \ with }\widetilde{\omega }_{k,\ell
}=\sum_{p=k+1}^{\infty }c_{p}d_{p+\ell },
\end{eqnarray*}%
and $\xi _{t}\sum_{k=1}^{\infty }\varrho _{k}\left( 1\right) \chi _{t-k}$
and $\chi _{t}\sum_{\ell =1}^{\infty }g_{\ell }\left( 1\right) \xi _{t-\ell
} $ are mutually independent martingale differences.

Given $\left( \ref{Theo_16}\right) ,$ we can write the left hand side of $%
\left( \ref{Theo_1}\right) $ as the sum of six terms. The contribution due
to the fourth term of $\left( \ref{Theo_16}\right) $ is 
\begin{equation*}
\sum_{k=1}^{\infty }\widetilde{dc}_{k}\frac{1}{T^{1/2}}\frac{1}{n^{1/2}}%
\sum_{p=1}^{n}\xi _{p,t-k}\chi _{p,t-k}=O_{p}\left( T^{-1/2}\right)
\end{equation*}%
because $E\left( n^{-1/2}\sum_{p=1}^{n}\xi _{p,t-k}\chi _{p,t-k}\right)
^{2}<C$ and by summability of the sequence $\left\{ \widetilde{dc}%
_{k}\right\} _{k\in \mathbb{N}^{+}}$. Next, the contribution due to the
fifth and sixth terms of $\left( \ref{Theo_16}\right) $ follow similarly and
hence they are $o_{p}\left( 1\right) $.

So, we need to examine the contribution due to the first three terms of $%
\left( \ref{Theo_16}\right) $ on the left side of $\left( \ref{Theo_1}%
\right) $, that is 
\begin{equation}
\frac{\varrho _{0}\left( 1\right) }{\left( Tn\right) ^{1/2}}%
\sum_{t=1}^{T}\sum_{p=1}^{n}\xi _{pt}\chi _{pt}+\frac{1}{\left( Tn\right)
^{1/2}}\sum_{t=1}^{T}\sum_{p=1}^{n}\xi _{pt}\widetilde{\chi }_{pt}+\frac{1}{%
\left( Tn\right) ^{1/2}}\sum_{t=1}^{T}\sum_{p=1}^{n}\widetilde{\xi }%
_{pt}\chi _{pt}\text{,}  \label{bn_1}
\end{equation}%
where%
\begin{equation*}
\widetilde{\chi }_{pt}=:\sum_{k=1}^{\infty }\varrho _{k}\left( 1\right) \chi
_{p,t-k};\text{ \ \ }\widetilde{\xi }_{pt}=:\sum_{\ell =1}^{\infty }g_{\ell
}\left( 1\right) \xi _{p,t-\ell }\text{.}
\end{equation*}

The result that the first term of $\left( \ref{bn_1}\right) $ converges to a
normal random variable follows by (the proof of) Hidalgo and Schafgans' $%
\left( 2017\right) $ Theorem 1 and Phillips and Moon's $\left( 2002\right) $
Theorem 2 as $n^{-1/2}\sum_{p=1}^{n}\xi _{pt}\chi _{pt}$ are independent
sequences in $t$. Because the second and third terms of $\left( \ref{bn_1}%
\right) $ are similar, we only handle the second one explicitly. Now, that
term is 
\begin{equation}
\sum_{k=1}^{K}\varrho _{k}\left( 1\right) \frac{1}{\left( Tn\right) ^{1/2}}%
\sum_{t=1}^{T}\sum_{p=1}^{n}\xi _{pt}\chi _{p,t-k}+\sum_{k=K+1}^{\infty
}\varrho _{k}\left( 1\right) \frac{1}{\left( Tn\right) ^{1/2}}%
\sum_{t=1}^{T}\sum_{p=1}^{n}\xi _{pt}\chi _{p,t-k}\text{.}  \label{bn_2}
\end{equation}%
By summability of $\varrho _{k}\left( 1\right) $ and given that%
\begin{equation*}
E\left( \frac{1}{\left( Tn\right) ^{1/2}}\sum_{t=1}^{T}\sum_{p=1}^{n}\xi
_{pt}\chi _{p,t-k}\right) ^{2}=\frac{1}{Tn}\sum_{t=1}^{T}\sum_{p,q}\varphi
\left( p,q\right) \leq C
\end{equation*}%
by Condition $C3$, we obtain that by choosing $K$ large enough the second
term of $\left( \ref{bn_2}\right) $ is $o_{p}\left( 1\right) $. The first
term of $\left( \ref{bn_2}\right) $ on the other hand converges to a normal
random variable proceeding as with the first term of $\left( \ref{bn_1}%
\right) $. The proof is then completed using Bernstein's lemma.
\end{proof}

\begin{lemma}
\label{Lem3App}Under the same conditions of Lemma \ref{Lem1App}, we have
that 
\begin{equation}
\frac{1}{\widetilde{T}^{1/2}}\sum_{j=1}^{\widetilde{T}}\frac{1}{n^{1/2}}%
\sum_{p=1}^{n}\mathcal{J}_{x,p}\left( j\right) \mathcal{J}_{u,p}\left(
-j\right) \overset{d}{\rightarrow }\mathcal{N}\left( 0,\Phi \right) \text{.}
\label{Lem3_1}
\end{equation}
\end{lemma}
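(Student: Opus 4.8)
The plan is to reduce the frequency-domain sum to the time-domain quantity $\left( nT\right) ^{-1/2}\sum_{t=1}^{T}\sum_{p=1}^{n}x_{pt}u_{pt}$, whose limiting law is already furnished by Lemma \ref{Lem1App}. For a scalar regressor (the general case being componentwise via the Cram\'{e}r--Wold device), the bridge is the exact algebraic identity
\[
\sum_{j=1}^{T-1}\mathcal{J}_{x,p}\left( \lambda _{j}\right) \mathcal{J}_{u,p}\left( -\lambda _{j}\right) =\sum_{t=1}^{T}x_{pt}u_{pt}-\frac{1}{T}\Big( \sum_{t=1}^{T}x_{pt}\Big) \Big( \sum_{s=1}^{T}u_{ps}\Big) ,
\]
which follows by writing $\mathcal{J}_{x,p}\left( \lambda _{j}\right) \mathcal{J}_{u,p}\left( -\lambda _{j}\right) =T^{-1}\sum_{t,s=1}^{T}x_{pt}u_{ps}e^{-i\left( t-s\right) \lambda _{j}}$ and invoking the Dirichlet identity $\sum_{j=1}^{T-1}e^{-i\ell \lambda _{j}}=T\mathbf{1}\left( T\mid \ell \right) -1$, valid for $\left\vert \ell \right\vert \leq T-1$. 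Since $\left\{ x_{pt}\right\} $ and $\left\{ u_{pt}\right\} $ are real, $\mathcal{J}_{x,p}\left( \lambda _{T-j}\right) \mathcal{J}_{u,p}\left( -\lambda _{T-j}\right) =\overline{\mathcal{J}_{x,p}\left( \lambda _{j}\right) \mathcal{J}_{u,p}\left( -\lambda _{j}\right) }$, so the half-range sum $\sum_{j=1}^{\widetilde{T}}$ and the full-range sum $\sum_{j=1}^{T-1}$ are linked by $\sum_{j=1}^{T-1}\left( \cdot \right) =2\,\mathrm{Re}\sum_{j=1}^{\widetilde{T}}\left( \cdot \right) $, up to a single $\lambda =\pi $ term (present only for even $T$) that is $o_{p}\left( 1\right) $ after the $\left( nT\right) ^{-1/2}$ scaling; as $\widetilde{T}=\left[ T/2\right] $ obeys $T/\widetilde{T}\rightarrow 2$, the two normalizations are compatible.

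The next step is to dispose of the ``mean-correction'' term, i.e.\ to show $\left( nT\right) ^{-1/2}\sum_{p=1}^{n}T^{-1}\big( \sum_{t}x_{pt}\big) \big( \sum_{s}u_{ps}\big) =o_{p}\left( 1\right) $. Using the separable covariance $E\left( x_{pt}x_{qs}\right) E\left( u_{pt}u_{qs}\right) =\varphi _{x}\left( p,q\right) \varphi _{u}\left( p,q\right) \gamma _{x,pq}\left( t-s\right) \gamma _{u,pq}\left( t-s\right) $ of $\left( \ref{cov_u}\right) $, together with the uniform absolute summability of the $\gamma $'s ($C1$--$C2$; cf.\ $\left( \ref{the_1}\right) $), its second moment is of order $\left( nT\right) ^{-1}\sum_{p,q=1}^{n}\left\Vert \varphi \left( p,q\right) \right\Vert =O\left( T^{-1}\right) $ by $C3$ --- this is precisely the estimate used for the $\overline{x}_{\cdot t}u_{pt}$-type terms in the proof of Theorem \ref{ThmEst}. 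Combining the displayed identity, this bound, and Lemma \ref{Lem1App} yields $\left( nT\right) ^{-1/2}\sum_{j=1}^{T-1}\sum_{p=1}^{n}\mathcal{J}_{x,p}\left( \lambda _{j}\right) \mathcal{J}_{u,p}\left( -\lambda _{j}\right) =\left( nT\right) ^{-1/2}\sum_{t,p}x_{pt}u_{pt}+o_{p}\left( 1\right) \overset{d}{\rightarrow }\mathcal{N}\left( 0,\Phi \right) $; rescaling by $\widetilde{T}^{1/2}/T^{1/2}$ and taking real parts gives the stated convergence.

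The substantive work is carried entirely by Lemma \ref{Lem1App}, which is already available; the only genuinely new ingredient is the negligibility of the mean-correction term, and --- exactly as in Theorem \ref{ThmEst} --- this is the step where $C3$ (weak combined cross-sectional dependence, equivalently $\sum_{p,q}\varphi _{u}\left( p,q\right) \cdot \sum_{p,q}\varphi _{x}\left( p,q\right) =o\left( n^{3}\right) $, see $\left( \ref{the_5}\right) $) is indispensable: without it the cross-sectional average of $T^{-1}\big( \sum_{t}x_{pt}\big) \big( \sum_{s}u_{ps}\big) $ need not be $o\left( nT\right) $. A secondary point requiring care is the bookkeeping passing between the half-range sum $\sum_{j=1}^{\widetilde{T}}$ that appears in the statement and the full-range \emph{real} sum $\sum_{j=1}^{T-1}$ that actually feeds into Theorem \ref{ThmEst}; the conjugate symmetry above and $\widetilde{T}/T\rightarrow 1/2$ reconcile the two. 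If one wants the fully complex conclusion for the half-range sum, its imaginary part is a discrete conjugate (Hilbert-type) transform of the products $x_{pt}u_{pt}$, handled by the Cram\'{e}r--Wold device: $a\,\mathrm{Re}+b\,\mathrm{Im}$ can be written as $\left( n\widetilde{T}\right) ^{-1/2}\sum_{p}\sum_{t,s}x_{pt}u_{ps}\psi _{a,b}\left( t-s\right) $ with bounded weights $\psi _{a,b}$, to which one re-applies the second-order Beveridge--Nelson/martingale argument of Lemma \ref{Lem1App}, convolution by a bounded weight leaving the long-run variance $\Phi $ unchanged.
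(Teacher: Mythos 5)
Your proof is correct, but it takes a genuinely different route from the paper's. The paper never passes through the exact time-domain identity you use: instead it applies the Bartlett decomposition $\left( \ref{bartlett}\right) $ together with Lemma \ref{B_5} to replace $\mathcal{J}_{x,p}\left( j\right) \mathcal{J}_{u,p}\left( -j\right) $ by $\mathcal{B}_{x,p}\left( j\right) \mathcal{B}_{u,p}\left( -j\right) \mathcal{J}_{\chi ,p}\left( j\right) \mathcal{J}_{\xi ,p}\left( -j\right) $, converts the half-range frequency sum into a weighted time-domain sum $T^{-1/2}\sum_{t,s}n^{-1/2}\sum_{p}\phi _{p}\left( t-s\right) \chi _{ps}\xi _{pt}$ in the \emph{innovations}, where $\phi _{p}\left( \ell \right) $ are the Fourier coefficients of $\mathcal{B}_{x,p}\mathcal{B}_{u,p}$, and then reruns the lagged-martingale/truncation argument of Lemma \ref{Lem1App} using $\phi _{p}\left( \ell \right) =O\left( \ell ^{-2}\right) $. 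Your route --- the exact Dirichlet identity reducing $\sum_{j=1}^{T-1}\mathcal{J}_{x,p}\left( j\right) \mathcal{J}_{u,p}\left( -j\right) $ to $\sum_{t}x_{pt}u_{pt}$ minus a mean correction, a second-moment bound of order $\left( nT\right) ^{-1}\sum_{p,q}\left\Vert \varphi \left( p,q\right) \right\Vert =O\left( T^{-1}\right) $ for that correction, and Lemma \ref{Lem1App} invoked as a black box --- is shorter and more elementary, and avoids redoing the CLT. Two small remarks: first, the bound on the mean correction needs only $\left( \ref{c3}\right) $ directly, i.e.\ $\sum_{p,q}\left\Vert \varphi \left( p,q\right) \right\Vert =O\left( n\right) $, not the product bound $\left( \ref{the_5}\right) $ that Theorem \ref{ThmEst} requires for the cross-sectionally averaged terms $\overline{x}_{\cdot t}u_{pt}$, so this step is even easier than you suggest; second, the passage between the complex half-range sum normalized by $\widetilde{T}^{1/2}$ and the real full-range sum normalized by $T^{1/2}$ (including the imaginary part) is the one place where genuine care is needed, and your conjugate-symmetry and Cram\'{e}r--Wold treatment of it is in fact more explicit than the paper's, which buries the same bookkeeping in Brillinger's Exercise 1.7.14(b). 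What the paper's heavier route buys is reuse: the remainder terms $\mathrm{Y}_{u,p},\mathrm{Y}_{x,p}$ and Lemmas \ref{z_12} and \ref{z_11} set up there are needed anyway for the frequency-domain part of Theorem \ref{ThmEst} and for Proposition \ref{PropV1}, where no exact time-domain identity is available.
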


\begin{proof}
Using $\left( \ref{bartlett}\right) $ and $\left( \ref{b_51}\right) $ of
Lemma \ref{B_5}, we have that the left side of $\left( \ref{Lem3_1}\right) $
is governed by%
\begin{eqnarray}
&&\frac{1}{\widetilde{T}^{1/2}}\sum_{j=1}^{\widetilde{T}}\frac{1}{n^{1/2}}%
\sum_{p=1}^{n}\mathcal{B}_{x,p}\left( j\right) \mathcal{B}_{u,p}\left(
-j\right) \mathcal{J}_{\chi ,p}\left( j\right) \mathcal{J}_{\xi ,p}\left(
-j\right)  \notag \\
&=&\frac{1}{\widetilde{T}^{1/2}}\sum_{j=1}^{\widetilde{T}}\frac{1}{T}%
\sum_{t,s=1}^{T}\Xi _{s,t}\left( n;j\right) e^{i\left( t-s\right) \lambda
_{j}}\text{,}  \label{Lem3_2}
\end{eqnarray}%
where 
\begin{equation}
\Xi _{s,t}\left( n;j\right) =\frac{1}{n^{1/2}}\sum_{p=1}^{n}\mathcal{G}%
_{p}\left( j\right) \chi _{ps}\xi _{pt};\text{ \ \ \ \ }\mathcal{G}%
_{p}\left( j\right) =:\mathcal{B}_{x,p}\left( j\right) \mathcal{B}%
_{u,p}\left( -j\right) \text{.}  \label{Lem4_1}
\end{equation}%
Because $\left\{ \chi _{pt}\right\} _{t\in \mathbb{Z}}$ and $\left\{ \xi
_{pt}\right\} _{t\in \mathbb{Z}}$, $p\in \mathbb{N}^{+}$,$\ $are mutually
independent $iid$ zero mean sequences, we have that $\Xi _{s,t}\left(
n\right) $ is independent of $\Xi _{r,m}\left( n\right) $ if $s\neq r$ and $%
t\neq m$ and uncorrelated if $s\neq r$ and $t=m$ or $s=r$ and $t\neq m$. By
Lemma \ref{Lem1App}, it follows that $\Xi _{s,t}\left( n;j\right)
\rightarrow _{d}\mathcal{N}\left( 0,\widetilde{\text{\textsl{V}}}\left(
j\right) \right) $, where 
\begin{equation*}
\widetilde{\text{\textsl{V}}}\left( j\right) =\lim_{n\rightarrow \infty }%
\frac{1}{n}\sum_{p,q=1}^{n}f_{x,pq}\left( j\right) f_{u,pq}\left( j\right)
\varphi \left( p,q\right)
\end{equation*}%
and $E\left\Vert \Xi _{s,t}\left( n\right) \right\Vert ^{4}<C$.

Next, the right hand side of $\left( \ref{Lem3_2}\right) $ is%
\begin{eqnarray}
&&\frac{2^{1/2}}{T^{3/2}}\sum_{t,s=1}^{T}\frac{1}{n^{1/2}}\sum_{p=1}^{n}\chi
_{ps}\xi _{pt}\left\{ \sum_{j=1}^{\widetilde{T}}g_{p}\left( j\right)
e^{i\left( t-s\right) \lambda _{j}}\right\}  \notag \\
&=&\frac{1}{T^{1/2}}\sum_{t,s=1}^{T}\frac{1}{n^{1/2}}\sum_{p=1}^{n}\phi
_{p}\left( t-s\right) \chi _{ps}\xi _{pt}\left( 1+\frac{C}{T}\right)
\label{Lem3_3}
\end{eqnarray}%
using Brillinger's $\left( 1981\right) $ Exercise 1.7.14(\emph{b}), where $%
\phi _{p}\left( s\right) $ denotes the $s-th$ Fourier coefficient of $%
g_{p}\left( \lambda _{j}\right) $ defined in $\left( \ref{Lem4_1}\right) $.
Note also that Parseval's equality, see Fuller's $\left( 1996\right) $
Theorem 3.1.6, implies that 
\begin{equation*}
\sum_{\ell =-\infty }^{\infty }\phi _{p}^{2}\left( \ell \right) =\frac{1}{2n}%
\int_{-\pi }^{\pi }g_{p}^{2}\left( \lambda \right) d\lambda =\frac{1}{2\pi }%
\int_{-\pi }^{\pi }f_{x,p}\left( \lambda \right) f_{u,p}\left( \lambda
\right) d\lambda \text{.}
\end{equation*}

Now, the right hand side of $\left( \ref{Lem3_3}\right) $ can be written as%
\begin{equation*}
\frac{1}{T^{1/2}}\sum_{t=1}^{T-\ell }\frac{1}{n^{1/2}}\sum_{p=1}^{n}\phi
_{p}\left( 0\right) \chi _{pt}\xi _{pt}+\frac{1}{T^{1/2}}\sum_{\ell
=1}^{T-1}\sum_{t=1}^{T-\ell }\frac{1}{n^{1/2}}\left\{ \sum_{p=1}^{n}\phi
_{p}\left( \ell \right) \left( \chi _{pt}\xi _{p,t+\ell }+\chi _{p,t+\ell
}\xi _{pt}\right) \right\} \text{.}
\end{equation*}%
From here, we conclude the proof proceeding as we did in Lemma \ref{Lem1App}
since, say,%
\begin{equation*}
\frac{1}{n^{1/2}}\sum_{p=1}^{n}\phi _{p}\left( \ell \right) \chi _{pt}\xi
_{p,t+\ell }
\end{equation*}%
is a sequence of independent random variables in the $t$ dimension which
converges to a Gaussian random variable by arguments similar to those in the
proof of Hidalgo and Schafgans' $\left( 2017\right) $ Theorem 1 and 
\begin{equation*}
\frac{1}{T^{1/2}}\sum_{\ell =b}^{T-1}\sum_{t=1}^{T-\ell }\frac{1}{n^{1/2}}%
\sum_{p=1}^{n}\phi _{p}\left( \ell \right) \chi _{pt}\xi _{p,t+\ell
}=o_{p}\left( 1\right)
\end{equation*}%
by choosing $b$ large enough since $\phi _{p}\left( \ell \right) =O\left(
\ell ^{-2}\right) $.\bigskip
\end{proof}

\end{document}